\documentclass[a4paper,twocolumn,11pt, unpublished, noarxiv]{quantumarticle}
\pdfoutput=1


\usepackage[utf8]{inputenc}
\usepackage[english]{babel}
\usepackage[T1]{fontenc}

\usepackage{amsfonts}
\usepackage{bbold} 

\usepackage{subcaption}

\usepackage{amsmath}
\usepackage{dsfont}
\usepackage{braket}
\usepackage{mathtools}
\usepackage{amsthm}
\usepackage{thmtools, thm-restate} 

\usepackage[usenames,dvipsnames]{xcolor}
\usepackage{graphicx}
\usepackage[ colorlinks = true, 
             linkcolor = blue,
             urlcolor  = blue,
             citecolor = orange,
             anchorcolor = ForestGreen,
]{hyperref}
\usepackage[capitalise]{cleveref}

\usepackage{tikz}
\usepackage{pgf}
\usetikzlibrary{calc, math,positioning,shapes, decorations.pathmorphing, matrix,backgrounds,fit, intersections, arrows.meta, shapes.geometric, bending}
\usepackage{qcircuit2}
\usetikzlibrary{decorations.pathreplacing}

\usepackage{booktabs}
\usepackage{multirow}


\usepackage{comment}

\usepackage{algorithm2e}
\usepackage{listings}

\usepackage[numbers,sort&compress]{natbib}




\newcommand{\proj}[2]{\ket{#1}\bra{#2}}
\newcommand{\pure}[1]{\proj{#1}{#1}}

\newcommand{\hilbert}{\mathcal{H}}
\newcommand{\preH}[1]{\hilbert^{#1}}
\newcommand{\postH}[1]{\hilbert_{#1}}
\newcommand{\prepostH}[2]{\hilbert^{#1}_{#2}}
\newcommand{\preS}[2]{\ket{#1}^{#2}}
\newcommand{\postS}[2]{\bra{#1}_{#2}}





\newcommand{\id}{\mathbb{1}}
	
\newcommand{\tr}{\operatorname{Tr}}
\DeclareMathOperator{\Tr}{Tr}





\newcommand{\abs}[1]{\left| #1 \right|}

\newtheorem{theorem}{Theorem}[section]

\newtheorem{lemma}[theorem]{Lemma}
\newtheorem{proposition}[theorem]{Proposition}

\newtheorem{remark}[theorem]{Remark}
\newtheorem{definition}[theorem]{Definition}

\tikzmath{
    \WidthPrePost = 1cm;
    \WidthMeasurement = 0.65cm;
    \HeightMeasurement = 0.65cm;
}

\makeatletter
\newsavebox{\@brx}
\newcommand{\llangle}[1][]{\savebox{\@brx}{\(\m@th{#1\langle}\)}%
  \mathopen{\copy\@brx\kern-0.5\wd\@brx\usebox{\@brx}}}
\newcommand{\rrangle}[1][]{\savebox{\@brx}{\(\m@th{#1\rangle}\)}%
  \mathclose{\copy\@brx\kern-0.5\wd\@brx\usebox{\@brx}}}
\makeatother
\input{tikz_preamble}
\setcounter{tocdepth}{1}

\begin{document}
\title{An equivalence between time-symmetry and cyclic causality in quantum theory}

\author{Eliot Jean}
\affiliation{Institute for Theoretical Physics, ETH Z{\"u}rich, 8093 Z{\"u}rich, Switzerland}
\author{Ralph Silva}
\affiliation{Institute for Theoretical Physics, ETH Z{\"u}rich, 8093 Z{\"u}rich, Switzerland}
\author{V. Vilasini}
\affiliation{Universit\'{e} Grenoble Alpes, Inria, 38000 Grenoble, France}
\affiliation{Institute for Theoretical Physics, ETH Z{\"u}rich, 8093 Z{\"u}rich, Switzerland}

\maketitle

\onecolumn
\begin{center}
  \begin{minipage}{0.85\textwidth}
{\bf Abstract} Understanding the relationship between the time-symmetric nature of physical laws and the apparent directionality of causality is a central question in quantum foundations. The standard operational formulation, widely used in quantum information, imposes a definite, acyclic causal order on agents’ operations, contrasting with time-symmetric dynamics. Two prominent extensions of this framework are the multi-time state (MTS) formalism, which incorporates time symmetry via arbitrary pre- and post-selection, and the post-selected closed timelike curve (P-CTC) framework, which enables cyclic causal influences through post-selection on maximally entangled states. While prior work has noted structural connections between MTS and P-CTCs, it remained unclear whether an operational equivalence exists, or whether constructive mappings can be established between their most general objects. In this work, we address this gap by extending the P-CTC framework to define time-labelled P-CTC assisted combs, a more general class of P-CTC-assisted objects that support open processing slots and explicit temporal structure. We prove that for every (possibly mixed) MTS, there exists an operationally equivalent time-labelled P-CTC assisted comb, and vice versa. The equivalence is shown via explicit mappings, while discussing the number and dimensionality of the P-CTCs involved. We also explore a resource-theoretic view of MTS, defining a partial order under free transformations that do not use P-CTCs. We conclude by discussing future directions informed by the  operational equivalence between time symmetry and cyclic causality established here.

  \end{minipage}
\end{center}

\begingroup
\renewcommand\thefootnote{}
\footnotetext{\hspace{-2em}\emph{Author names appear in alphabetical order, a summary of author contributions can be found at the end of the manuscript.}}
\addtocounter{footnote}{-1}
\endgroup

\vspace{1em}
\twocolumn

\newpage
\tableofcontents
\newpage

\section{Introduction}
\label{sec:Intro}

Understanding the nature of time and causality lies at the core of several fundamental questions in physics, particularly in quantum theory. While we experience a distinct arrow of time, marking a clear distinction between past and future, the fundamental laws of quantum mechanics, such as the Schrödinger equation, are time-symmetric. However, the operational formulation of quantum theory, which accounts for measurements and interactions involving agents, introduces a notable asymmetry. In this framework, there exists a clear and acyclic causal ordering between operations, aligned with our experience of time. This asymmetry is often attributed to the agents' actions: they can freely prepare initial states at an earlier time, which then evolve into final states at a later time or causally influence measurement outcomes at a later time, but not the reverse.

The multi-time state (MTS) formalism \cite{Aharonov1964, Aharonov1991, Silva2014} and the post-selected closed timelike curves (P-CTC) framework \cite{Lloyd2011, Lloyd2011a} correspond to two significant extensions of this standard operational formulation of quantum theory, but in different ways. The MTS formalism is a time-symmetric yet operational approach that accommodates arbitrary pre- and post-selections, enabling states to evolve both forward and backward in time, while considering how this constrains intermediate measurements. On the other hand, the P-CTC framework allows for cyclic causal influences, where a system can affect its own past. Inspired by exotic solutions in general relativity that permit closed timelike curves, the P-CTC framework abstracts away spacetime and relativistic details, focusing instead on the information-theoretic mechanisms underlying causal loops, which are modeled through specific forms of pre- and post-selections. This raises natural questions: how are these two extensions of the standard quantum framework related? How are time symmetry and cyclic causality linked in quantum theory?

Interestingly, a third extension of the usual quantum formalism, corresponding to so-called indefinite causal order (ICO) processes \cite{Oreshkov2012, Chiribella2013}, has been shown to relate to both MTS \cite{Silva2017} and P-CTC \cite{Araujo2017a} frameworks. ICO processes allow abstract protocols where quantum operations are applied in a quantum superposition of orders on a target system. Previous work has demonstrated that ICO processes form a special linear subset of both the MTS and P-CTC frameworks (noting that the latter frameworks also allow for more general objects associated with non-linear measurement probabilities) \cite{Silva2017, Araujo2017}. Additionally, a range of studies exploring time symmetry, causal loops and retrocausality in quantum theory have suggested links between these concepts (see e.g., \cite{Coecke_2012a, Coecke_2012b, Oreshkov_2015, Leifer_2017, pinzani2019, selby2024, Barrett_2021}). Moreover, the computational power of quantum theory with arbitrary pre and post-selection (analogous to MTS) and of quantum theory assisted by P-CTCs have also been studied, suggesting that the associated complexity classes coincide \cite{aaronson2004, Brun2012, Araujo2017}.

Despite these known connections and similarities, the two frameworks exhibit key structural differences. While it is known that each P-CTC can be seen as an MTS with maximally entangled pre- and post-selections, the reverse direction is non-trivial as MTS allow arbitrary pre and post-selection and in arbitrary combinations, while P-CTCs involve pre and post selections on maximally entangled states that appear in pairs. Moreover, the MTS formalism features "slots" between pre- and post-selection stages, where external operations corresponding to arbitrary linear operators can be inserted, and the formalism explicitly incorporates time labels. In contrast, the original formulation of the P-CTC framework considers P-CTC assisted maps and circuits which lack such empty slots and does not inherently associate operations with time labels. Moreover, the formalism involves operators corresponding to unitary channels which are connected together via CTCs (pre and post selections on maximally entangled states), as opposed to generic linear operators.
These distinctions suggest that MTS might, at first glance, appear more general and expressive than P-CTCs.

To our knowledge, no counterexamples demonstrate an MTS scenario that cannot be captured within the P-CTC framework, nor has an explicit equivalence between their most general objects been established. Furthermore, it remains unclear whether a constructive protocol exists to systematically map any MTS into an operationally equivalent object within the P-CTC framework. We address such questions here. Notably, analogous structures to both MTS and P-CTC formalisms appear in relativistic models with spatio-temporal boundary conditions and in approaches to quantum gravity (e.g., \cite{Feynman1948, Oeckl_2003a, Oeckl_2003b}). Thus, establishing a rigorous and operationally grounded link between these frameworks may offer a conceptual and technical foundation for future explorations of causality and time at the quantum-relativity interface.

{\bf Summary of the paper and contributions} In \cref{sec:MTF} and \cref{sec:PCTC}, we begin by reviewing the MTS and P-CTC formalisms, while refining and extending them along the way. We clarify the role of pure versus mixed objects in the MTS formalism and introduce the concept of (time-labelled) P-CTC-assisted combs, representing P-CTC-assisted quantum circuits with ``open slots'' (full details given in \cref{appendix: PCTC_Combs}).
In \cref{sec:PCTC-MTF} and \cref{sec:Connection}, we establish the following theorem, which implies an operational equivalence between the MTF and P-CTC formalisms, or between time-symmetry and cyclic causality in quantum theory.

\begin{restatable}[]{theorem}{Main}
\label{thm: main}
For every (possibly mixed) multi-time object, there exists an operationally equivalent time-labelled P-CTC assisted comb, and vice versa. 
\end{restatable}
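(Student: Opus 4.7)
The plan is to prove the equivalence by exhibiting two explicit constructive mappings and verifying that each preserves operational probabilities under arbitrary operator insertions into the open slots.

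For the direction from time-labelled P-CTC assisted combs to MTS, I would proceed essentially by unpacking definitions. Each P-CTC is by construction a pre-selection together with a post-selection on a maximally entangled state, which is a very special instance of the pre/post-selection patterns that MTS allows in full generality. Given any P-CTC assisted comb, I would replace each P-CTC by its underlying maximally entangled pre/post-selection pair, inherit the corresponding time labels, and absorb the fixed unitaries between CTCs and the open slots into the operational structure that sits between the MTS's selections. The resulting object is an MTS that by construction induces the same probability rule on any choice of inserted operations as the original comb.

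For the harder direction, from MTS to P-CTC assisted combs, I would first reduce to the case of a pure MTS via purification: any mixed MTS, whose pre-selections and post-selection effects are described by density operators and POVM-like operators, can be written as the marginal of a pure MTS on an enlarged Hilbert space with appropriate ancillas. For a pure MTS, the central technical idea is to simulate arbitrary pre/post-selections using only maximally entangled ones via the Choi--Jamio{\l}kowski correspondence. Concretely, any pure pre-selection $\ket{\psi}_A$ can be written as $\ket{\psi}_A = \bra{\psi^*}_{A'}\ket{\Phi}_{AA'}$, where $\ket{\Phi}_{AA'}$ is the (unnormalised) maximally entangled state on an auxiliary copy $A'$; the original pre-selection is thus reproduced by preparing $\ket{\Phi}_{AA'}$ and post-selecting $A'$ on $\ket{\psi^*}$. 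A symmetric identity handles post-selection effects. Applying this replacement at every pre- and post-selection of the MTS turns each one into a P-CTC, with the 'transpose' map absorbed into adjacent unitaries and with time labels tracked explicitly on the new ancillary registers, yielding a valid time-labelled P-CTC assisted comb.

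The main obstacle I anticipate lies in this second direction, specifically in establishing operational equivalence against arbitrary insertions into the open slots. What one needs is not merely equality of probabilities for a fixed input, but equality of the full multilinear functionals that the MTS and the constructed comb induce on the operator spaces of all slots. I would argue this by showing that each local 'maximally-entangled-plus-post-select' gadget is, slot by slot, operationally indistinguishable from the direct selection it replaces, and then composing these local equivalences. Additional care is required to assign the time labels of the new ancillary registers so that no spurious causal cycles are introduced beyond those encoding the original pre/post-selections, and to bound the number and dimension of the P-CTCs used, which this construction makes essentially equal to the number and dimension of the selections of the original MTS, as anticipated in the theorem's discussion.
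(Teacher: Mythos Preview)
Your treatment of the P-CTC $\to$ MTS direction is fine and matches the paper: each P-CTC is literally a maximally entangled pre/post-selection pair, so every time-labelled P-CTC assisted comb is already an MTS once you unpack the definitions.

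The MTS $\to$ P-CTC direction, however, has a genuine gap. A P-CTC is not merely ``a maximally entangled state somewhere''; it is specifically a \emph{paired} preparation of $\ket{\Phi^+}_{AA'}$ together with a later post-selection on $\bra{\Phi^+}_{AA'}$ on the \emph{same} ancillary pair, so that the net effect on the chronology-respecting system is the partial trace $\Tr_A[U]$ of a unitary (or more generally a CPTP map). Your Choi identity $\ket{\psi}_A=(\bra{\psi^*}_{A'}\otimes \id_A)\ket{\Phi}_{AA'}$ replaces a forward pre-selection by a maximally entangled preparation \emph{plus an arbitrary rank-one post-selection} $\bra{\psi^*}_{A'}$ on the ancilla; dually, a post-selection $\bra{\phi}_B$ becomes an arbitrary preparation $\ket{\phi^*}_{B'}$ plus a lone maximally entangled post-selection $\bra{\Phi}_{BB'}$. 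Neither pattern is a P-CTC loop: the new maximally entangled preparation has no matching maximally entangled post-selection on the same systems, and the residual $\bra{\psi^*}_{A'}$ is neither CPTP nor half of a P-CTC. Absorbing a ``transpose'' into adjacent unitaries does not help, because $\bra{\psi^*}$ is a rank-one effect, not a unitary you can push around. So after your replacement you still have arbitrary non-CPTP post-selections sitting on ancillas, and the resulting object is not a P-CTC assisted comb in the sense of \cref{def: timelab_PCTC_comb}.

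What the paper actually does for this direction is substantively different and is the technical heart of the result. It first shows (\cref{subsubsec:2TO-PCTC}) that an \emph{arbitrary} linear operator $C$ on a $d$-dimensional system---a pure 2TO---is operationally equivalent to $\Tr_A[U]$ for an explicit unitary $U$ on $S\otimes A$, i.e.\ to a genuine P-CTC assisted unitary map; this uses a controlled-unitary circuit plus the basis-normalisation \cref{prop: existencebasis} to get away with a single $d$-dimensional P-CTC. It then reduces an arbitrary pure MTS to a 2TO by teleporting the mis-ordered forward/backward spaces with additional P-CTCs (\cref{prop: MTS_2TO}), yielding a time-labelled P-CTC assisted comb. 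Mixed MTS are handled not by purification but by a classical control register that selects among the pure constructions (\cref{subsec:Mixed}). Your proposal skips the first of these steps entirely, and that is precisely the step that cannot be done by a local Choi gadget.
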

The P-CTC to MTS directed of the theorem follows easily from known results (summarised in \cref{sec:PCTC-MTF}). The reverse direction from MTS to P-CTCs addressed in \cref{sec:Connection} is the novel contribution of this paper, and is established via several intermediate results and explicit constructions in three stages, that might be of independent interest (see \cref{fig:summary} for an illustrative summary). At each stage, we analyze the number and dimensions of the P-CTCs required in the mapping and discuss the role of the time-labels. 
\begin{itemize}
\item {\bf Pure two-time operators to P-CTC assisted maps} First, we consider two time operators (2TO) which are a class of multi-time objects with a structure where post-selections occur at earlier times than pre-selections. This is in contrast to the class of two time states (2TS) where pre-selections occur at the earlier time, before the post-selections. We map an arbitrary pure 2TO to an operationally equivalent P-CTC-assisted unitary map. We show an explicit mapping which utilizes two P-CTCs, one being 2-dimensional and another of the same Hilbert dimension $d$ as the 2TO systems, and prove the existence of a mapping that uses only a single P-CTC of $d$-dimensions. 
(\cref{subsubsec:2TO-PCTC}) 
\item {\bf Pure multi-time states to two-time operators} Next, we connect pure 2TOs to general pure multi-time states (MTS) in \cref{subsubsec:2TO-MTS}, while distinguishing the cases where time-labels are preserved vs ignored. Together with the first step, this allows any pure MTS to be mapped to a P-CTC-assisted object (via a 2TO): a P-CTC assisted map when time-labels are ignored, and a time-labelled P-CTC assisted comb, when the time labels are preserved.
\item  {\bf Mixed multi-time states to P-CTC assisted objects} Finally, in \cref{subsec:Mixed}, we generalise the results to mixed MTS, i.e., probabilistic mixtures of pure MTS. We show that having a P-CTC construction for every pure MTS in the decomposition (as ensured by the previous steps) implies a P-CTC construction for the mixed MTS, and therefore \cref{thm: main}. 
\end{itemize}

In particular, these results inform the operational characterisation of the MTS formalism \cite{Silva2017} by showing that any multi-time instrument (a multi-time generalisation of quantum instruments) can be operationally prepared in quantum theory with pre and post selection (see \cref{sec:mts:operational}).

Motivated by these results, in \cref{sec: partialorder}, we consider transformations between multi-time states by means of free operations that do not utilise any P-CTCs. Such considerations can inform a resource-theoretic understanding of multi-time states, with P-CTCs taken as the resource. Specifically, we define a partial order between MTS that have the same coefficients but may differ in the temporal structure (\emph{isomorphic MTS}), where the order indicates whether an MTS can be transformed to another isomorphic MTS without using any P-CTCs. 2TS emerge as most useful and 2TOs as least useful in this partial order, with all other isomorphic MTS lying in between. Finally, in \cref{sec:Discussion}, we discuss future avenues for research.

\begin{figure*}
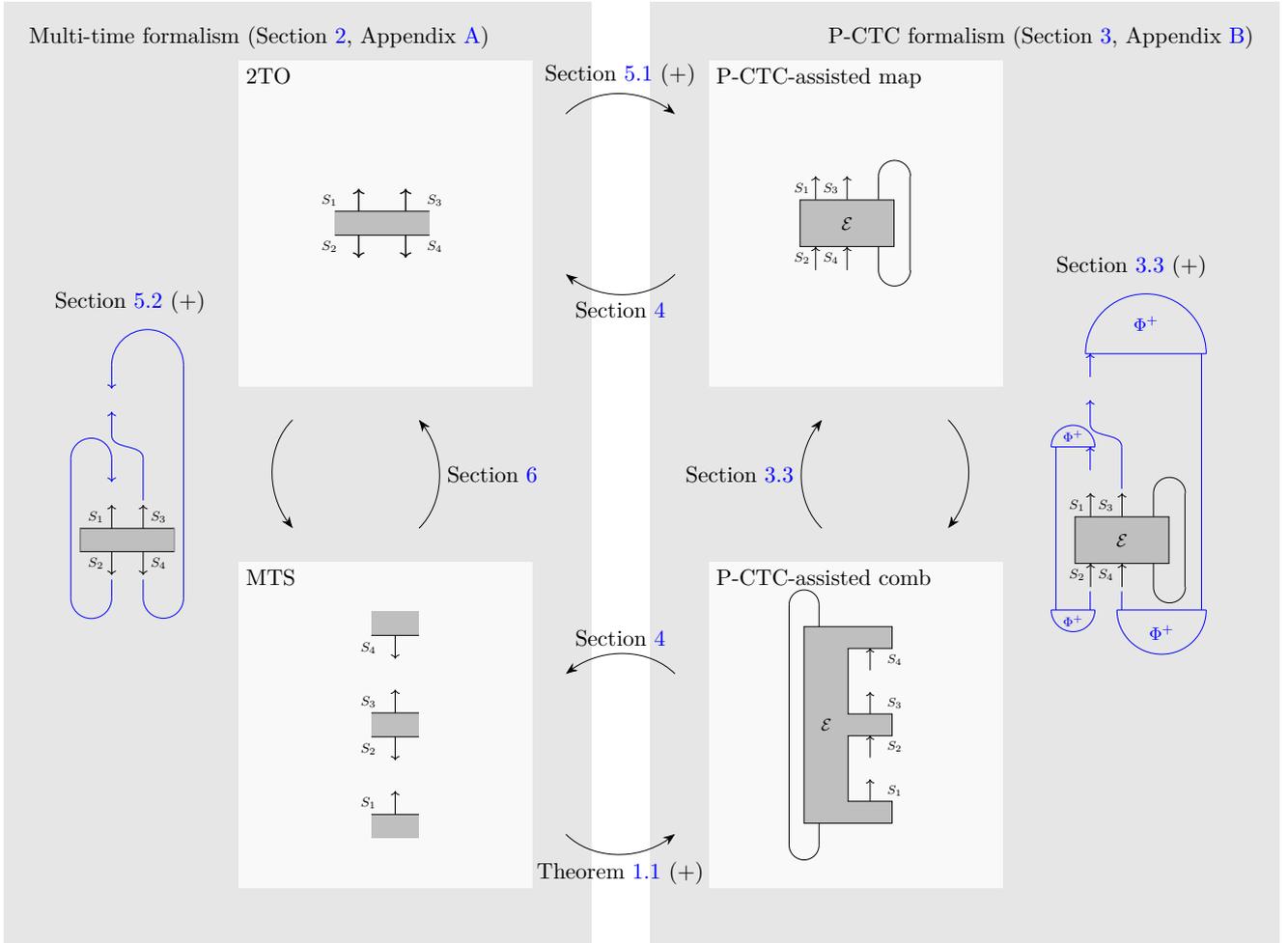

    \centering
    \scalebox{0.8}{\begin{tikzpicture}[
  node distance=2cm,
  myrect/.style={
    draw=none,
    fill=gray!5,
    minimum width=5cm,
    minimum height=5.6cm,
  },
  myarrow/.style={
    -{Stealth[bend, length=2mm]},
    shorten >=1mm,
    shorten <=1mm,
  },
  bigrect/.style={
    draw=none,
    fill=gray!20
  }
]

\node[myrect] (rect1) {%
  \scalebox{0.8}{\input{1-Intro/Figs/Fig_2to}}%
};
\node[below right] at (rect1.north west){2TO};
 
\node[myrect, right= 3cm of rect1] (rect2) {%
  \scalebox{0.8}{\input{1-Intro/Figs/fig_pctc_map}}%
};
\node[below right] at (rect2.north west){P-CTC-assisted map};

\node[myrect, below= 3cm of rect1] (rect3) {%
  \scalebox{0.8}{\input{1-Intro/Figs/fig_mts_test}}%
};
\node[below right] at (rect3.north west){MTS};

\node[myrect, right= 3cm of rect3] (rect4) {%
  \scalebox{0.75}{\input{1-Intro/Figs/fig_pctc_comb}}%
};
\node[below right] at (rect4.north west){P-CTC-assisted comb};

\begin{scope}[myarrow]
  \draw ([xshift=0.5cm, yshift=-1cm]rect1.north east) to[bend left=45] node[midway,above] {\cref{subsubsec:2TO-PCTC} (+)} ([xshift=-0.5cm, yshift=-1cm]rect2.north west);
  
  \draw ([xshift=1cm, yshift=-0.5cm]rect1.south west) to[bend right=45] node[midway,left] (arrow1) {} ([xshift=1cm, yshift=+0.5cm]rect3.north west);

  \draw ([xshift=0.5cm, yshift=1cm]rect3.south east) to[bend right=45] node[midway,below] {\cref{thm: main} (+)} ([xshift=-0.5cm, yshift=1cm]rect4.south west);
  
  \draw ([xshift=-1cm, yshift=-0.5cm]rect2.south east) to[bend left=45] node[midway,right] (arrow2) {} ([xshift=-1cm, yshift=0.5cm]rect4.north east);
  
  \draw ([xshift=-0.5cm, yshift=2cm]rect2.south west) to[bend left=45] node[midway,below] {\cref{sec:PCTC-MTF}} ([xshift=0.5cm, yshift=2cm]rect1.south east);
  
  \draw ([xshift=-2cm, yshift=0.5cm]rect3.north east) to[bend right=45] node[midway,right] {\shortstack{
      \cref{sec: partialorder}
    }} ([xshift=-2cm, yshift=-0.5cm]rect1.south east);
  
  \draw ([xshift=-0.5cm, yshift=-2cm]rect4.north west) to[bend right=45] node[midway,above] {\cref{sec:PCTC-MTF}} ([xshift=0.5cm, yshift=-2cm]rect3.north east);
  
  \draw ([xshift=2cm, yshift=0.5cm]rect4.north west) to[bend left=45] node[midway,left] {\cref{subsec:comb}} ([xshift=2cm, yshift=-0.5cm]rect2.south west);
\end{scope}

\node[left=1cm of arrow1] (trafo_2to){
    \scalebox{0.8}{\input{1-Intro/Figs/fig_trafo_2to}}%
  };

\node[right=1cm of arrow2] (trafo_pctc) {
    \scalebox{0.8}{\input{1-Intro/Figs/fig_trafo_pctc}}%
  };

\begin{pgfonlayer}{background}
  \node[bigrect, fit=(rect1)(rect3)(trafo_2to), inner sep=1cm, label={[anchor=north west, xshift=0.3cm, yshift=-0.3cm]north west:Multi-time formalism (\cref{sec:MTF}, \cref{sec:AppendixMTF})}] (groupA) {};
\end{pgfonlayer}

\begin{pgfonlayer}{background}
  \node[bigrect, fit=(rect2)(rect4)(trafo_pctc), inner sep=1cm, label={[anchor=north east, xshift=-0.3cm, yshift=-0.3cm]north east:P-CTC formalism (\cref{sec:PCTC}, \cref{sec:AppendixPCTC})}] (groupB) {};
\end{pgfonlayer}

\node[above=0cm of trafo_2to] () {\cref{subsubsec:2TO-MTS} (+)};

\node[above=0cm of trafo_pctc] () {\cref{subsec:comb} (+)};

\end{tikzpicture}}
    \caption{Summary of the main results and relationships between the objects of the multi-time formalism and the P-CTC framework. Each arrow represents a relationship and includes a reference to the corresponding section of the paper. The results from sections~\cref{sec:PCTC} and~\ref{sec:PCTC-MTF} appearing here are primarily based on the previous literature or immediate implications thereof. The novel contributions from this work are the connections established in \cref{sec:Connection} and~\cref{sec: partialorder}. The $+$ in parentheses indicates that P-CTCs are required to transition from the source object to the target object, while the absence of $+$ indicates that a construction without any P-CTCs exists.}
    \label{fig:summary}
\end{figure*}

\section{Multi-time States: review of mathematical and operational aspects}\label{sec:MTF}

In this section, we review the multi-time formalism. We begin with the motivation and explain the simplest representative case of a multi-time object, both in Sec. \ref{sec:mts:motivation}. From the representative case we generalise in two ways: to a general multi-time mathematical framework in Sec. \ref{sec:mts:mathframework}, and a general operational scenario of states, instruments and post-selection in Sec. \ref{sec:mts:operational}. We conclude by arguing that these are in one-to-one correspondence.

\subsection{Motivation and representative example}\label{sec:mts:motivation}
The MT formalism is, in essence, based on an extension of standard quantum theory. In standard QM, one has states/preparations and measurements/effects, and the temporal relationship is one-way: states evolve forward into effects, preparations into measurements, a relationship depicted in Fig. \ref{fig:StandardQM}. The flow of information is in one direction; at any given $t$ our description of a quantum system --- such as a density matrix --- is a description based on the past of the system.

There are two reasons why one needs to go beyond this time asymmetry. The first is a motivation in common with the field of quantum causality: investigating the fundamental nature of causality requires - at least on an abstract level - consistent descriptions of quantum scenarios that permit information flow or causal order outside what is conventionally expected; indeed MT states \cite{Aharonov1964, Aharonov1991, Silva2014}, process matrices \cite{Oreshkov2012, Chiribella2013} and CTC's \cite{Deutsch1991, Lloyd2011, Lloyd2011a} all arise from this perspective. 

The second motivation is operational. While the usual notion of quantum states and measurements can describe multiple systems in space but at a single time and involves time asymmetry, the MT formalism provides a notion of quantum states and measurements across multiple times while preserving time-symmetry by accounting for pre and post-selections on an equal footing. There is a class of paradoxes arising in operational quantum mechanical scenarios such as Hardy's paradox \cite{hardy1992paradox}, the pigeon-hole paradox \cite{aharonov2014_pigeonhole,aharonov2016pigeonhole}, etc. and the MT formalism provides one manner to recover the correct reasoning for the measurement results in such scenarios.

The most simple example devised by Aharanov et al. \cite{Aharonov1964} serves to demonstrate the above aspects and illustrate the core of the MT formalism. Consider that a quantum system is known to be in the pure state $\ket{\psi} \in \mathcal{H}_S$ at $t_1$, and that at time $t_2$ a projective measurement is performed, the result of which corresponds to the state $\ket{\phi}$. If a measurement described by the normalized set of Kraus operators $\{A_k\}_k$ is performed at a time $t$ between $t_1$ and $t_2$, then what is the probability of obtaining the outcome $k$?

Using the quantum mechanical Born rule together with the Bayes' rule for conditional probabilities given the post-selection, one gets the ABL expression \cite{Aharonov1964} for the probability $P(k| \ket{\psi}, \ket{\phi}, \{A_k\}_k)$ of obtaining the outcome $k$ given the pre-selection, post-selection and the measurement. When the items in the conditional are evident from context, we simply denote this as $P(k)$.
\begin{align}
\label{eq: ABL}
    P(k) &= \frac{ \left| \braket{\phi|A_k|\psi} \right|^2 }{\sum_l \left| \braket{\phi|A_l|\psi} \right|^2 }.
\end{align}

On an abstract level one can understand $\ket{\phi}$ to be a description of a \textit{backward-evolving} state and denoted by $\bra{\phi}$: information about the system at time $t_2$ flows backward in time to influence the measurement outcome at time $t$.

One can promote this understanding to a technical level by taking the states $\ket{\psi}$ and $\bra{\phi}$ as belonging to distinct Hilbert spaces labelled $\mathcal{H}^{t_1}$ and $\mathcal{H}_{t_2}$ respectively. The superscript/subscript differentiates between Hilbert spaces that are understood to evolve forward/backward in time, this is also reflected in the ket/bra form of the state. The object that encodes the information one has is the \textit{2-time state}:
\begin{align}
    \Psi &= \bra{\phi}_{t_2} \otimes \ket{\psi}^{t_1} \in \mathcal{H}_{t_2} \otimes \mathcal{H}^{t_1},
\end{align}
this is depicted in Fig. \ref{fig:PrePost_Simple}.

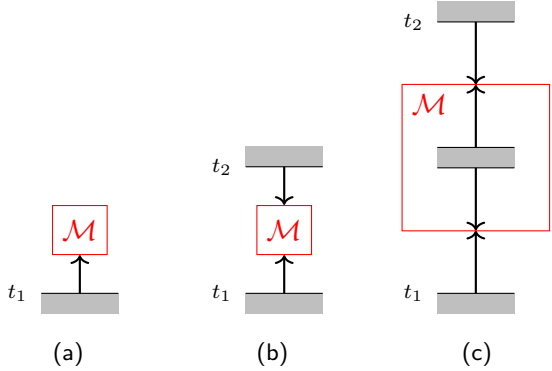
\begin{figure}[t]
    \centering
    \begin{subfigure}{.33\columnwidth}
        \centering
        \scalebox{1}{\begin{tikzpicture}[
    op/.style={shape= Op, minimum width = \WidthPrePost},
    pre/.style={shape = Pre, minimum width = \WidthPrePost},
    post/.style={shape = Post, minimum width = \WidthPrePost},
    Kop/.style={draw=red, minimum width=\WidthMeasurement, minimum height=\WidthMeasurement},
    t/.style={font=\scriptsize},
    ghost/.style={minimum width=\WidthMeasurement, minimum height=\WidthMeasurement},
    decoration={snake, segment length=4mm, amplitude=0.5mm}
]
\pgfsetmatrixrowsep{0.5cm}
\pgfsetmatrixcolumnsep{0.3cm}
\pgfmatrix{rectangle}{center}{mymatrix}
{\pgfusepath{}}{\pgfpointorigin}{\let\&=\pgfmatrixnextcell}
{
\node[Kop, text=red](11){$\mathcal{M}$}; \\
\node[pre](21){}; \\
}

\draw[->, thick] (21.north) -- (11.south) node [midway,inner sep=0 cm] (21to11) {};
\node[t,left=.5 mm of 21.north west](t1){$t_1$};

\end{tikzpicture}}
        \caption{}
        \label{fig:StandardQM}
    \end{subfigure}%
    \begin{subfigure}{.33\columnwidth}
        \centering
        \scalebox{1}{\begin{tikzpicture}[
    op/.style={shape= Op, minimum width = \WidthPrePost},
    pre/.style={shape = Pre, minimum width = \WidthPrePost},
    post/.style={shape = Post, minimum width = \WidthPrePost},
    Kop/.style={draw=red, minimum width=\WidthMeasurement, minimum height=\WidthMeasurement},
    t/.style={font=\scriptsize},
    ghost/.style={minimum width=\WidthMeasurement, minimum height=\WidthMeasurement},
    decoration={snake, segment length=4mm, amplitude=0.5mm}
]
\pgfsetmatrixrowsep{0.5cm}
\pgfsetmatrixcolumnsep{0.3cm}
\pgfmatrix{rectangle}{center}{mymatrix}
{\pgfusepath{}}{\pgfpointorigin}{\let\&=\pgfmatrixnextcell}
{
\node[post](01){}; \\
\node[Kop, text=red](11){$\mathcal{M}$}; \\
\node[pre](21){}; \\
}

\draw[->, thick] (21.north) -- (11.south) node [midway,inner sep=0 cm] (21to11) {};
\draw[->, thick] (01.south) -- (11.north) node [midway,inner sep=0 cm] (01to11) {};

\node[t,left=.5 mm of 21.north west](t1){$t_1$};
\node[t,left=.5 mm of 01.south west](t2){$t_2$};

\end{tikzpicture}}
        \caption{}
        \label{fig:PrePost_Simple}
    \end{subfigure}%
    \begin{subfigure}{.33\columnwidth}
        \centering
        \scalebox{1}{\begin{tikzpicture}[
    op/.style={shape= Op, minimum width = \WidthPrePost},
    pre/.style={shape = Pre, minimum width = \WidthPrePost},
    post/.style={shape = Post, minimum width = \WidthPrePost},
    Kop/.style={draw=red, minimum width=\WidthMeasurement, minimum height=\WidthMeasurement},
    t/.style={font=\scriptsize},
    ghost/.style={minimum width=\WidthMeasurement, minimum height=\WidthMeasurement},
    decoration={snake, segment length=4mm, amplitude=0.5mm}
]
\pgfsetmatrixrowsep{0.5cm}
\pgfsetmatrixcolumnsep{0.3cm}
\pgfmatrix{rectangle}{center}{mymatrix}
{\pgfusepath{}}{\pgfpointorigin}{\let\&=\pgfmatrixnextcell}
{
\node[post](11){}; \\
\node[ghost](21){}; \\
\node[op](31){}; \\
\node[ghost](41){}; \\
\node[pre](51){}; \\
}

\draw[->, thick] (11.south) -- (21.center);
\draw[->, thick] (31.north) -- (21.center) node [midway,inner sep=0 cm] (31to21) {};
\draw[->, thick] (31.south) -- (41.center) node [midway, inner sep=0 cm] (31to41){};
\draw[->, thick] (51.north) -- (41.center);

\node[circle, fit=(21.center) (41.center), inner sep=0cm](cir){};
\node[draw=red, fit=(cir), inner sep=0cm](fitSquare){};

\node[anchor=north west ,text=red] at (fitSquare.north west) (M) {$\mathcal{M}$};

\node[t,left=.5 mm of 51.north west](t1){$t_1$};
\node[t,left=.5 mm of 11.south west](t2){$t_2$};

\end{tikzpicture}}
        \caption{}
        \label{fig:PrePost_full}
    \end{subfigure}
    \caption{(a) From standard quantum mechanics to the multi-time formalism. \emph{Single-time state}: At time $t_1$, a forward-evolving state, corresponding to a state in standard QM, is prepared. Later a measurement is performed on that state. (b) \emph{Two-time state}: As before, a state is prepared at time $t_1$, then a measurement is performed on that state. Finally, some quantum state is post-selected at time $t_2$, $t_2>t_1$. (c) \emph{Two-time operator}: The measurement itself can also be considered as a two-time object, which corresponds to a two-time operator (see also \cref{def: 2TO2TS} later).}
    \label{fig:CompStandardQM}
\end{figure}

The Kraus operator $A_k$ can also be understood in the same manner: given a basis in which to express it $A_k = \sum_{lm} a^k_{lm} \ket{l}\!\bra{m}$ we interpret $\bra{m}$ as the part that measures upon the forward evolving state $\ket{\psi}^{t_1} \in \mathcal{H}^{t_1}$, and $\ket{l}$ as the post-measurement state that $\bra{\phi}_{t_2} \in \mathcal{H}_{t_2}$ evolves backward into. This suggests that the Kraus operator can also be written as a 2-time object:
\begin{align}
    \textbf{A}_k &= \sum_{lm} a^k_{lm} \ket{l}^{t_2} \otimes \bra{m}_{t_1} \in \mathcal{H}^{t_2} \otimes \mathcal{H}_{t_1},\label{eq:2-timeKraus}
\end{align}
this is depicted in Fig. \ref{fig:PrePost_full}.

Within this example are the core ingredients of the general multi-time framework, namely:
\begin{enumerate}
    \item For each system and point of time at which one has information, there is a Hilbert space with a direction of evolution, chosen according to whether an object in this space meets another one at an earlier or a later time.
    \item Standard quantum operations are also understood as multi-time objects, with their measurement and post-measurement parts being in separate Hilbert spaces, rather than as linear operators upon the same Hilbert space.
    \item The probabilities of outcomes of measurements are conditional probabilities, normalised by the probability that some post-selection is successful.
\end{enumerate}

With these principles in mind, one can construct more complicated objects, notably: states that are entangled between the two times \cite{Aharonov1991, Aharonov2007}, mixtures of such pure states, coarse-grained measurements and POVMs \cite{Silva2014}, and generalise to more than just two times and multiple systems \cite{Aharonov1982, Aharonov1985, Vaidman1987_thesis,Popescu1991_thesis, Aharonov2009}. The entire result is the \textit{multi-time} (MT) formalism that we proceed to formally characterise as a mathematical framework.

\subsection{Mathematical characterisation of the multi-time formalism}\label{sec:mts:mathframework}

We begin by describing the mathematical formalism for pure multi-time objects, composition and probabilities therein before introducing the general mixed objects of the formalism. 

\paragraph{Hilbert spaces, pure states and fine-grained measurements}
\begin{definition}[MT Hilbert space]\label{def:MTspace}
A MT Hilbert space $\mathcal{H}$ is the tensor product of Hilbert spaces, where each space has three defining labels:
\begin{enumerate}
    \item which system it is associated with,
    \item the time at which the system interacts with another,
    \item the direction of evolution: forward if the system evolves forward into the interaction, backward for the opposite.
\end{enumerate}
For each unique trio of labels, there can only be a single Hilbert space present in the tensor product.
\end{definition}

As will be demonstrated in the results of this paper, the property of MT spaces that determines how they can be constructed is the relative time ordering between Hilbert spaces that are distinct from each other, either in dimension or in direction of evolution. As such it is useful to define an equivalence class of spaces as follows:
\begin{definition}[Time-order equivalence]
\label{def:timeorder_equiv}
Two MT Hilbert spaces are said to be time-order equivalent if they can be transformed into each other using any combination of two operations:
\begin{enumerate}
    \item shifting the time labels without changing their ordering,
    \item switching system labels between isomorphic Hilbert spaces evolving in the same temporal direction.
\end{enumerate}

\end{definition}

For the rest of this paper, any result pertaining to a MT Hilbert space can be treated as applicable to its equivalence class.

For notational simplicity, one can reduce the trio of labels into a single one w.l.o.g. by combining the system and time label into a single label, for example a system $S$ at time $t_1$ may be labeled as $S_1$. The direction of evolution is indicated by the placement of the label: if forward evolving then as a superscript $\mathcal{H}^S$, if backward then as a subscript $\mathcal{H}_S$. The general Hilbert space is thus of the form
\begin{align}\label{eq:MTHilbertspace}
    \mathcal{H} &= \bigotimes_i \mathcal{H}^{S_i} \bigotimes_j \mathcal{H}_{S_j},
\end{align}
where $S_i \neq S_j$ for any $i,j$.

Notice that, as each label contains all of the information pertaining to the system and time, the written order of the single Hilbert spaces is irrelevant. This is one example of a more general feature of the MT formalism that we will revisit: it has less hidden information due to having more explicitly labelled objects. The formalism can also be used without including the time information in the label but instead depicting this explicitly in the written order of the bras and kets in the expression for the MT object, and in its corresponding diagrammatic representation. Indeed, we will sometimes adopt this convention in later examples (see \cref{subsubsec:2TO-MTS}) as a simplification, while making it explicit that we are doing so. 

\begin{definition}[Reversed Hilbert space]
For any MT Hilbert space $\mathcal{H}$, the reversed Hilbert space $\bar{\mathcal{H}}$ is the one obtained by switching the direction of evolution of each Hilbert space within $\mathcal{H}$,
\begin{align}
    \text{if} \quad \mathcal{H} &= \bigotimes_i \mathcal{H}^{S_i} \bigotimes_j \mathcal{H}_{S_j}, \nonumber\\
    \text{then} \quad \bar{\mathcal{H}} &= \bigotimes_i \mathcal{H}_{S_i} \bigotimes_j \mathcal{H}^{S_j}.
\end{align}
\end{definition}

The reversed Hilbert space can be understood as being the one containing the `measurements' corresponding to the `states' in the original Hilbert space, for example the Hilbert spaces of the two-time state and measurement in Fig. \ref{fig:PrePost_full}. In general, there is no a priori reason to bias one or the other to be `states' rather than `measurements', only that they are concomitant in this manner. As in the case of states and measurement operators in standard QM here too vectors from these spaces form an inner product, as is defined shortly.

With the Hilbert spaces in place, one can now define the analogs of pure states and fine-grained quantum instruments \footnote{A fine-grained quantum instrument has a single Kraus operator corresponding to each classical outcome.},
\begin{definition}[Pure MT states and fine-grained instruments]
Given a multi-time Hilbert space $\mathcal{H}$, we define
\begin{enumerate}
    \item a pure MT state as any $\Psi \in \mathcal{H}$,
    \item a fine-grained MT instrument as an indexed set of vectors $\{A_k\}$, where $\{A_k\} \in \mathcal{H}$.
\end{enumerate}

\end{definition}

Thus both states and instruments correspond to vectors, see Table \ref{tab:SummaryMTNotation} for an example. Neither need to be normalized yet, we will discuss normalisation after the general MT state and instrument are defined.

\begin{table*}[ht!]
    \centering
    \begin{tabular}{p{0.25\linewidth}  p{0.3\linewidth} p{.3\linewidth}}
      \toprule
        Object & Hilbert space & Notation \\
      \midrule
      Prepared state & $\preH{S_1}$ & $\preS{\psi}{S_1}$ \\
      Post-selected state & $\postH{S_2}$ & $\postS{\phi}{S_2}$ \\
      2-time state state & $\prepostH{S_1}{S_2}:=\postH{S_2}\otimes\preH{S_1}$ & $\Psi=\postS{\phi}{S_2}\otimes\preS{\psi}{S_1}$ \\
      Kraus operator & $\prepostH{S_2}{S_1}:=\postH{S_1}\otimes\preH{S_2}$  & $E_a=\sum_{i,j} \beta_{a,i,j}\preS{i}{S_2}\otimes\postS{j}{S_1}$ \\
      \bottomrule
    \end{tabular}
    \caption{Basic objects in the MT notation. Pre- and post-selected states are denoted with superscripts and subscripts respectively. The Hilbert spaces of states and Kraus operators are dual to each other.}
    \label{tab:SummaryMTNotation}
\end{table*}

\paragraph{Composition and probabilities}

Finally, we define an operation that encompasses both the tensor product and the inner product, that we call composition and denote by the symbol $\bullet$. The technical definition may be found in Appendix \ref{app:MTcomposition}. It encodes the idea that if we combine two multi-time objects, each corresponding to a different part of a general quantum circuit, then the vectors from pairs of Hilbert spaces and their reverses should form an inner product, corresponding to the bringing together of a state and an operation upon that state. The rest of the Hilbert spaces correspond to disjoint operations, and should form the usual tensor product (Figure \ref{fig:venn}). Taken together this allows for the construction and manipulation of arbitrary circuit-like objects. 

\begin{figure}[ht!]
    \centering
    \scalebox{0.69}{\begin{tikzpicture}
    \begin{scope}
        \clip (-1,0) -- (1,0) -- (1, 1.5) -- (-1, 1.5) -- cycle;
        \fill[blue!30] (0,0) ellipse (1 and 1.5);
    \end{scope}
    \begin{scope}
        \clip (-1,0) -- (1,0) -- (1, -1.5) -- (-1, -1.5) -- cycle;
        \fill[red!30] (0,0) ellipse (1 and 1.5);
    \end{scope}
    \draw[thick] (0,0) ellipse (1 and 1.5);
    \node at (0,-2) {$\mathcal{H}_1$};
    \draw[thick] (-1,0) -- (1,0);
    \node at (0,0.5) {$G^1$};
    \node at (0,-0.5) {$G_1$};

    \node at (2,0) {$\bullet$};
    
    \begin{scope}
        \clip (3,0) -- (5,0) -- (5, 1.5) -- (3, 1.5) -- cycle;
        \fill[red!30] (4,0) ellipse (1 and 1.5);
    \end{scope}
    \begin{scope}
        \clip (3,0) -- (5,0) -- (5, -1.5) -- (3, -1.5) -- cycle;
        \fill[blue!30] (4,0) ellipse (1 and 1.5);
    \end{scope}
    \draw[thick] (4,0) ellipse (1 and 1.5);
    \node at (4,-2) {$\mathcal{H}_2$};
    \draw[thick] (3,0) -- (5,0);
    \node at (4,0.5) {$G_2$};
    \node at (4,-0.5) {$G^2$};
    
    \node at (6,0) {$=$};
    
    \begin{scope}
        \clip (7,0) -- (9,0) -- (9, 1.5) -- (7, 1.5) -- cycle;
        \fill[blue!30] (8,0) ellipse (1 and 1.5);
        \fill[white] (9,0) ellipse (1 and 1.5);
    \end{scope}
    \begin{scope}
        \clip (9,0) ellipse (1 and 1.5);
        \clip (8,0) -- (10,0) -- (10, 1.5) -- (8, 1.5) -- cycle;
        \fill[red!30] (9,0) ellipse (1 and 1.5);
        \fill[white] (8,0) ellipse (1 and 1.5);
    \end{scope}
    \begin{scope}
        \clip (8,0) ellipse (1 and 1.5);
        \clip (7,0) -- (9,0) -- (9, -1.5) -- (7, -1.5) -- cycle;
        \fill[red!30] (8,0) ellipse (1 and 1.5);
        \fill[white] (9,0) ellipse (1 and 1.5);
    \end{scope}
    \begin{scope}
        \clip (9,0) ellipse (1 and 1.5);
        \clip (8,0) -- (10,0) -- (10, -1.5) -- (8, -1.5) -- cycle;
        \fill[blue!30] (9,0) ellipse (1 and 1.5);
        \fill[white] (8,0) ellipse (1 and 1.5);
    \end{scope}
    \draw[thick] (8,0) ellipse (1 and 1.5);
    \draw[thick] (9,0) ellipse (1 and 1.5);
    \draw[thick] (7,0) -- (10,0);
    \node at (8,-2) {$\mathcal{H}_1$};
    \node at (9,-2) {$\mathcal{H}_2$};
    \node[text=blue!30] at (10.5,-1) {$G^{12}$};
    \node[text=red!30] at (10.5,1) {$G_{12}$};
\end{tikzpicture}}
    \caption{The composed Hilbert space of two MT Hilbert spaces. The blue/red regions denote the forward/backward-evolving spaces. A forward and backward evolving space with the same label (one from $\mathcal{H}_1$ and the other from $\mathcal{H}_2$ corresponds to a point in the quantum circuit where a state from one space meets an effect from the other space. In the composition, the vectors from these overlapping spaces form an inner product in the composed MT object, with the corresponding Hilbert spaces removed from the composed MT space.}
    \label{fig:venn}
\end{figure}
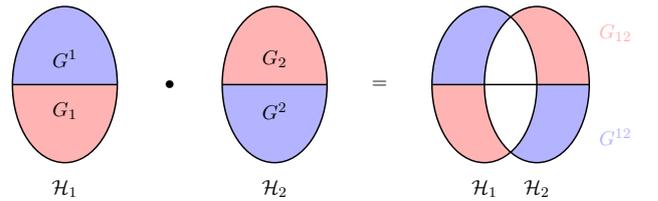

As discussed prior, the written order of Hilbert spaces is irrelevant in the MT formalism as the labels contain all pertinent information vis-\'a-vis temporal order. As a result the composition operation is also independent of the written order (thus automatically commutative for instance). In fact, as the composition operation includes the tensor product, one could in principle stop using the tensor product symbol $\otimes$ in its entirety as $\bullet$ would mean the same. It does however aid visual clarity to keep using $\otimes$ whenever the total Hilbert space does not contain pairs of Hilbert spaces and their reverse, so that it is clear that none of the spaces can contract.

Note that the composition of MT vectors reduces to a number in the case of two Hilbert spaces that are the reverse of one another, $\mathcal{H}_2 = \bar{\mathcal{H}_1}$; then the probabilities given by the ABL rule of \cref{eq: ABL} can be expressed as follows.
\begin{definition}[Probabilities of outcomes (pure states)]

Given a state $\Psi \in \mathcal{H}$ and a fine-grained instrument $\{A_k\}$ on the reversed space, $A_k \in \bar{\mathcal{H}}$, the probability of the outcome $k$ is defined as
\begin{align}\label{eq:2-timepureprob}
    P(k) &= \frac{ \abs{ \Psi \bullet A_k }^2 }{ \sum_l \abs{ \Psi \bullet A_l }^2 }.
\end{align}

\end{definition}

\paragraph{Mixed states and coarse-grained instruments}

To describe a mixed state in standard quantum theory, one promotes the pure state $\ket{\psi}_S \in \mathcal{H}_S$ to the density operator $\ket{\psi}_S\!\bra{\psi} \in \mathcal{L}(\mathcal{H}_S)$; the density operator of a mixed state then corresponds to a convex combination of the latter. This does not fit into the MT description for two reasons. Firstly, $\bra{\psi}_S \in \mathcal{H}_S$ already has a role --- as a vector in a different physical Hilbert space to that of $\ket{\psi}^S$.

We therefore choose a different --- albeit mathematically equivalent --- notation to denote mixed states in the MT formalism. For every MT Hilbert space $\mathcal{H}$ we construct a dual space that we call the \textit{daggered} space to $\mathcal{H}$, labelled $\mathcal{H}^\dagger$. For the forward-evolving space $\mathcal{H}^S$ the daggered space is denoted $\mathcal{H}_{S^\dagger}$, while for $\mathcal{H}_S$ it is denoted $\mathcal{H}^{S^\dagger}$\footnote{This is just a choice of convention to be consistent with existing works on the MT formalism, such as \cite{Silva2014,Silva2017}.}, with the states in these Hilbert spaces following the same convention for the label placements. Fig. \ref{fig:MTduality} depicts the relation between the four possible spaces with the same label.
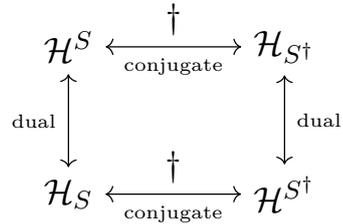
\begin{figure}[ht!]
    \centering
    \scalebox{1.2}{\begin{tikzpicture}
\pgfsetmatrixrowsep{1cm}
\pgfsetmatrixcolumnsep{1.5cm}
\pgfmatrix{rectangle}{center}{mymatrix}
{\pgfusepath{}}{\pgfpointorigin}{\let\&=\pgfmatrixnextcell}
{
\node[rectangle](1){$\mathcal{H}^{S}$}; \& \node[rectangle](2){$\mathcal{H}_{S^{\dagger}}$}; \\
\node[rectangle](3){$\mathcal{H}_{S}$}; \& \node[rectangle](4){$\mathcal{H}^{S^{\dagger}}$};\\
}

\draw[<->] (1.east) -- (2.west) node [midway,above] {$\dagger$} node [midway,below] {\tiny conjugate};
\draw[<->] (1.south) -- (3.north) node [midway,left] {\tiny dual};
\draw[<->] (3.east) -- (4.west) node [midway,above] {$\dagger$} node [midway,below] {\tiny conjugate};
\draw[<->] (2.south) -- (4.north) node [midway,right] {\tiny dual};

\end{tikzpicture}}
    \caption{Relations between the daggered and non-daggered spaces of bras and kets. The reversed space $\mathcal{H}_{S}$ is the true dual space to $\mathcal{H}^{S}$, while $\mathcal{H}_{S^\dagger}$ is only the complex conjugate space of $\mathcal{H}^{S}$.}
    \label{fig:MTduality}
\end{figure}

Here the pure multi-time state $\ket{\psi}^S$ is promoted to the \textit{pure density vector} $\ket{\psi}^S \otimes \bra{\psi}_{S^\dagger}$ rather than an operator, for a backward evolving state $\ket{\psi}_S$ it would be $\bra{\psi}_S \otimes \ket{\psi}^{S^\dagger}$. The only further clarification to add is for the composition rule --- this acts the same on daggered Hilbert spaces and vectors, but can only affect a reversed pair of Hilbert spaces or states that are both normal, or both daggered, never one of each.

The above definition of a density vector has an additional advantage. In standard quantum theory, both the density operator and Kraus operator belong to the space of linear operators $\mathcal{L}(\mathcal{H})$, even though they are conceptually distinct: one corresponds to the state as it is, the other to a transformation of the state. In the MT formalism the density vector of a forward-evolving state belongs to $\mathcal{H}_1 \otimes \mathcal{H}^{1\dagger}$, while a Kraus operator acting upon it belongs to $\mathcal{H}^1 \otimes \mathcal{H}_2$.

One may now define the analogs of mixed states and general quantum instruments for the MT formalism:

\begin{definition}[MT states and instruments]\label{def:MTobjects}
Given a multi-time Hilbert space $\mathcal{H}$, we define
\begin{enumerate}
    \item a mixed MT state from $\mathcal{H}$ as any \textit{positive} combination of pure density vectors from $\mathcal{H}$
    \begin{align}\label{eq:definitionMTstate}
        \eta &= \sum_r q_r \Psi_r \otimes \Psi_r^\dagger,
    \end{align}
    where $\Psi_r^\dagger \in \mathcal{H}^\dagger$ and $q_r \geq 0$,
    \item a MT instrument from $\mathcal{H}$ as an indexed set of MT Kraus density vectors $\{J_k\}$, each a sum of pure MT Kraus density vectors from $\mathcal{H}$
    \begin{align}
        J_k &= \sum_\chi A_{k,\chi} \otimes A_{k,\chi}^\dagger,
    \end{align}
    where $A_{k,\chi} \in \mathcal{H}$ and $A_{k,\chi}^\dagger \in \mathcal{H}^\dagger$. The MT channel corresponding to the MT instrument is thus the density vector which is the sum
      \begin{align}
      \label{eq: MTinstrument}
       \sum_k J_k &= \sum_{k,\chi} A_{k,\chi} \otimes A_{k,\chi}^\dagger.
    \end{align}
\end{enumerate}

\end{definition}
The above notion of MT Kraus density vectors $J_k$  generalises the definition of two-time Kraus density vectors introduced in \cite{Silva2014}.\footnote{Here, in line with terminology in \cite{Silva2014}: $A_{k,\chi} \in \mathcal{H}$ is a MT Kraus vector, $A_{k,\chi} \otimes A_{k,\chi}^\dagger$ is a pure MT Kraus density vector and $J_k$ is a generic MT Kraus density vector.} 

The above definition of an MT instrument is analogous to the definition of standard quantum instruments which are described by a collection of completely positive (CP) maps $\{\mathcal{E}_k\}_k$ where the action of the instrument is given by their sum $\sum_k \mathcal{E}_k$. By the Kraus representation theorem \cite{Kraus1983}, each CP map $\mathcal{E}_k$ can be represented in terms of a set of Kraus operators as $\mathcal{E}_k=\sum_\chi K_{k,\chi} (\rho) K_{k,\chi}^\dagger$.  Here the MT Kraus density vectors $J_k$ are analogous to the CP maps $\mathcal{E}_k$ and the MT Kraus vectors $A_{k,\chi}$ to the Kraus operators $K_{k,
\chi}$. However, while standard quantum instruments impose $\sum_k \mathcal{E}_k$ to be completely positive and trace preserving (CPTP) corresponding to a quantum channel, which would in turn imply a normalisation condition on the set $\{K_{k,\chi}\}$ of Kraus operators, MT instruments and Kraus (density) vectors do not have to be normalised. See \cref{sec:mts:operational} as well as Appendix D of \cite{Silva2014} for further details on the normalisation of these objects and links to standard operational quantum theory.

 In the case of having a state from $\mathcal{H}$ and an instrument from $\bar{\mathcal{H}}$, the probabilities of outcomes are defined analogously to that for pure states \eqref{eq:2-timepureprob}:
\begin{definition}[Probabilities of outcomes (mixed states)]

Given a mixed state $\eta$ from $\mathcal{H}$ and an instrument $\{J_k\}_k$ from $\bar{\mathcal{H}}$, the probability of the outcome $k$ is defined as
\begin{align}\label{eq:2-timemixedprob}
    P(k) &= \frac{ \eta \bullet J_k }{ \sum_l \eta \bullet J_l }.
\end{align}

\end{definition}

Note the consistency between probability expressions: the general formula \eqref{eq:2-timemixedprob} reduces to that for pure states and fine-grained measurements \eqref{eq:2-timepureprob} , i.e. in the case that $\eta = \Psi \otimes \Psi^\dagger$ and $J_k = A_k \otimes A_k^\dagger$.

Two multi-time objects $\Psi$ and $\Psi'$ differing by a constant $\Psi=k\Psi'$ would yield the same outcome probabilities for any instrument as the constant would appear in both the numerator and denominator and cancel out. This motivates the following definition. 
\begin{definition}
\label{def: op_equiv}
Two multi-time objects $\Psi$ and $\Psi'$ such that $\Psi=k\Psi'$ with $k\in\mathbb{C}$ are called operationally equivalent.
\end{definition}

A more detailed discussion on the normalisation of MT states may be found in Appendix \ref{app:MTnormalisation}

\paragraph{Partial and full trace}

To complete the MT formalism, we translate the operation of the trace from standard QM to MT notation. Given the standard partial trace is the following operation:
\begin{align}
    \Tr_A \left[ \rho_{AB} \right] &= \sum_i \bra{i}_A \rho_{AB} \ket{i}_A,
\end{align}
where $\{\ket{i}\}_A$ is an orthonormal basis for $\mathcal{H}_A$. The partial trace of a MT density vector $\eta$ is given by
\begin{align}
    \Tr^A \left[ \eta \right] &= \eta \bullet \left( \sum_i \bra{i}_A \otimes \ket{i}^{A^\dagger} \right) \\
    &= \eta \bullet \mathds{1}_A,
\end{align}
where we have defined the `identity operator' $\mathds{1}_A$ from $\mathcal{H}_A$.

\paragraph{Positivity}

An operator $A \in \mathcal{L}(\mathcal{H}_S)$ is said to be positive if $v^\dagger A v \geq 0$ for all $v \in \mathcal{H}_S$. For MT objects, this is mathematically equivalent to the following \cite{Silva2017}.
\begin{definition}[Positivity]\label{def:positivity}

An MT vector $J \in \mathcal{H} \otimes \mathcal{H}^\dagger$ is positive if and only if
\begin{align}
    J \bullet \left( v \otimes v^\dagger \right) \geq 0,
\end{align}
for all $v \in \bar{\mathcal{H}}$.

\end{definition}

We also have the spectral theorem for positive operators that follows from the usual spectral theorem, proven in Appendix \ref{app:compositionpositivity}.
\begin{lemma}

An MT vector $J \in \mathcal{H} \otimes \mathcal{H}^\dagger$ is positive if and only if it can be expressed as the positive sum of pure density vectors,
\begin{align}
    J &= \sum_r a_r u_r \otimes u_r^\dagger,
\end{align}
where $a_r > 0$ and $u_r \in \mathcal{H}$.
\label{lemma:spectral_thm}
\end{lemma}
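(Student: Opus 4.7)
The plan is to reduce the claim to the standard spectral theorem for positive semidefinite matrices by making explicit the natural linear isomorphism between $\mathcal{H} \otimes \mathcal{H}^\dagger$ and the space of operators on $\bar{\mathcal{H}}$.

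First, I would fix an orthonormal product basis $\{e_i\}$ of $\mathcal{H}$, understood as a tensor-product basis across all forward- and backward-evolving factors that make up $\mathcal{H}$. This canonically induces a daggered basis $\{e_i^\dagger\}$ of $\mathcal{H}^\dagger$, and any MT vector $J \in \mathcal{H} \otimes \mathcal{H}^\dagger$ then admits a unique expansion $J = \sum_{ij} J_{ij}\, e_i \otimes e_j^\dagger$, yielding a linear bijection between such MT vectors and square matrices $(J_{ij})$.

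Second, I would translate the positivity condition of \cref{def:positivity} through this bijection. Expanding an arbitrary $v = \sum_i v_i\, \bar e_i \in \bar{\mathcal{H}}$ in the dual basis and unpacking the composition rule, the matching pairs of dual Hilbert spaces contract pointwise to give
\begin{equation}
J \bullet (v \otimes v^\dagger) \;=\; \sum_{ij} J_{ij}\, v_i\, \overline{v_j}.
\end{equation}
Thus $J$ is positive in the MT sense precisely when the matrix $(J_{ij})$ is positive semidefinite in the usual sense. For the forward direction I would then invoke the ordinary spectral theorem on $(J_{ij})$ to obtain an eigendecomposition $J_{ij} = \sum_r a_r (u_r)_i \overline{(u_r)_j}$ with $a_r > 0$ and orthonormal eigenvectors $u_r$; setting $u_r := \sum_i (u_r)_i\, e_i \in \mathcal{H}$ yields the desired form $J = \sum_r a_r\, u_r \otimes u_r^\dagger$. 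The converse is an immediate computation: distributing $\bullet$ over the sum gives $J \bullet (v \otimes v^\dagger) = \sum_r a_r\, (u_r \bullet v)(u_r^\dagger \bullet v^\dagger) = \sum_r a_r\, |u_r \bullet v|^2 \geq 0$.

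The main obstacle is really just the bookkeeping in the first two steps, namely verifying that the $\bullet$-composition, which in general pairs several dual Hilbert spaces drawn from multiple forward- and backward-evolving factors of $\mathcal{H}$, acts on the expansion coefficients exactly as the ordinary inner product does on the combined multi-index $i$. Once this identification of the scalar $J \bullet (v \otimes v^\dagger)$ with the standard quadratic form $\langle v | \hat J | v\rangle$ is in place, the lemma reduces directly to the usual spectral theorem for positive semidefinite operators.
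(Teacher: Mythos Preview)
Your proposal is correct and matches the paper's stated approach: the paper simply asserts that the lemma ``follows from the usual spectral theorem'' without spelling out the reduction, and your argument makes that reduction explicit by identifying $J \bullet (v \otimes v^\dagger)$ with the standard quadratic form $\sum_{ij} J_{ij} v_i \overline{v_j}$ and then invoking the ordinary spectral decomposition.

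One observation worth making: the appendix section labeled ``Proof of Lemma~\ref{lemma:spectral_thm}'' in the paper does not in fact prove the spectral-theorem lemma at all. Its content instead establishes the separate claims (stated in the paragraph after the lemma) that positivity is preserved under composition and under partial trace. So the paper never writes out a proof of the lemma beyond the one-line appeal to the usual spectral theorem; your writeup is therefore strictly more detailed than anything the paper provides for this statement.
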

This immediately implies the positivity of MT states and instruments. As one might expect, positivity is preserved by the operations of composition and trace.

\paragraph{Conclusion}

The crux of the mathematical MT formalism is in the following objects:
\begin{enumerate}
    \item the MT Hilbert space, a composition of forward and backward-evolving Hilbert spaces,
    \item MT states and instruments, which are composed of arbitrary positive density vectors on MT Hilbert spaces,
    \item the rule for probabilities.
\end{enumerate}

So far these are only mathematical objects. We proceed to argue that these are in one-to-one correspondence with standard quantum mechanics if post-selection is allowed.

\subsection{The operational MT scenario: quantum mechanics with post-selection}\label{sec:mts:operational}

In this section we characterize the MT formalism in an operational manner, and prove that this is equivalent to the mathematical formalism. The operational scenario is defined by the following three allowed processes:
\begin{enumerate}
    \item (preparation) the creation of any quantum state, pure or mixed,
    \item (instruments) an arbitrary quantum instrument, including measurements/channels,
    \item (post-selection) the discarding of results corresponding to some outcome/set of outcomes of any of the instruments.
\end{enumerate}

The first two of the above are part of standard QM, they include preparation, measurements and even the partial trace, which is one type of quantum instrument. The third is the general version of post-selection. More precisely, every quantum instrument is an indexed set of linear operators. At the end of the circuit, as long as the entire circuit is repeated so that data is collected upon the probability of various measurements, it is possible to choose a set of indexes from the various quantum instruments to discard.

Any scenario that can be created with the above three processes can be encoded into a MT object, for the following reasons. Firstly, states and instruments are clearly included in the MT formalism by construction, the only difference being the semantics and added explicit notation. Rather than density matrices, the states are density vectors, and rather than being a collection of Kraus operators, the instruments are collections of Kraus density vectors.

The third process is post-selection. A post-selection in the most general sense can be thought of as a generic quantum instrument, of whom a subset of the Kraus operators are ignored, corresponding to ignoring the statistics when the classical outcome corresponds to one of the ignored operators. More precisely, for every post-selection there exists a normalised set of Kraus operators $\{A_k\}$, where the index $k$ runs through some set $Q$, and the post-selection corresponds to a subset $Q^\prime \subset Q$.

This also falls under the MT formalism. A MT instrument need only be composed of Kraus density vectors, it does not need to be normalised. Thus by taking the set $\{A_k\}$ where $k \in Q^\prime$ and converting them into Kraus density vectors, we get a valid MT instrument. The only adjustment is statistical: all probabilities must be conditioned on the post-selection being successful, which is encoded into the definition of probabilities in the MT formalism, \eqref{eq:2-timemixedprob}.

Thus quantum mechanics with post-selection is automatically a subset of the MT formalism. The natural question is whether this correspondence is one-to-one, i.e. whether any MT state and instrument can be prepared by QM plus post-selection. The answer is yes. In \cite{Silva2014} it was argued that any MT state could be prepared using post-selection.

The main result of this paper gives an argument to show that any MT instrument can also be created in an operational manner, albeit using the language of P-CTC's. First notice that the action of an MT instrument is fully specified by a set of MT Kraus vectors $\{A_{k,\chi}\}_{k,\chi}$ as in \cref{eq: MTinstrument}, i.e., the object we wish to operationally realize is of the form $\sum_{k,\chi} A_{k,\chi} \otimes A_{k,\chi}^\dagger$ for $A_{k,\chi}$ in some MT Hilbert space $\mathcal{H}$ and $A_{k,\chi}^\dagger\in \mathcal{H}^\dagger$. It follows from the results of \cref{subsec:Mixed} that any MT object $\eta$ of the form $\eta=\sum_r p_r C_r \otimes C_r^\dagger$ can be realised through an operationally equivalent P-CTC assisted circuit, where $C_r$ is a pure MT object living in some MT Hilbert space $\mathcal{H}$ and where $p_r$ are probabilities, $\sum_r p_r =1$. In the case where all the probabilities $p_r$ are equal, $\eta$ is operationally equivalent to the MT object $\eta’=\sum_r C_r\otimes C_r^\dagger$, since overall constants do not affect the observable probabilities (\cref{def: op_equiv}). This is precisely of the form of an MT instrument of \cref{eq: MTinstrument} where the MT Kraus vectors $A_{k,\chi}$ take the role of $C_r$. The construction of such mixed MT objects is done by conditioning on an ancilla whose basis states correspond to $\{\ket{r}\}_r$ and implementing the corresponding  $C_r$ on the systems in $\mathcal{H}$ via a unitary P-CTC assisted circuit, and tracing out the ancilla. The result is understood in terms of a CPTP map assisted by P-CTCs (see \cref{fig:MixPCTC_2Figs} for an example).

Although P-CTCs provide an information-theoretic model for exotic causal structures associated to closed time-like curves, they also provide an operational procedure to simulate such exotic scenarios in a regular quantum experiment by means of experimental pre and post-selection on maximally entangled states. Therefore the above-mentioned construction gives an operational implementation of arbitrary MT instruments.

\section{Post-selected Closed Timelike Curves: review and generalisation }\label{sec:PCTC}

Inspired by the possibility of spacetime geometries with closed timelike curves (CTCs) in Einstein's general relativity, Deutsch \cite{Deutsch1991} was among the first to consider CTCs in the context of quantum mechanics and quantum computation by considering a CTC-assisted unitary quantum interaction. The idea (see also \cref{fig:PCTC_map}) is to consider a chronology respecting system $S$ interacting with a chronology violating quantum system $A$. This interaction can be a unitary or a more general non-unitary quantum channel. After the interaction, the chronology-violating system is sent backwards in time through a CTC and the chronology-respecting system continues towards the future. The result is a map $\mathcal{C}_{CTC}$ acting on the chronology respecting system $S$ alone. In Deutsch's model, this map is defined through a consistency condition which preserves the quantum state of the system $A$ in the CTC. An alternative and distinct model of quantum CTCs, which will be the focus of this paper, is given by the formalism of postselected CTCs \cite{Lloyd2011,Lloyd2011a}. Here, the CTC-assisted quantum map is described by a post-selected teleportation protocol, a mechanism which has been found independently by different authors \cite{Bennett2005,Lloyd2011a}, also in the context of the Horowitz-Maldacena model for black-hole evaporation \cite{Horowitz2004}. 

In the following, we first review post-selected quantum teleportation and the resulting idea of P-CTC assisted maps modelling a one-time interaction \cite{Lloyd2011, Lloyd2011a}. We then extend this formalism to include more general P-CTC assisted objects which we call \emph{time-labelled P-CTC assisted combs} that can describe multi-time quantum protocols assisted by P-CTCs. We introduce these objects briefly and more intuitively here, while further technical details are deferred to \cref{appendix: PCTC_Combs}.

\subsection{Post-selected teleportation}

\begin{figure*}[ht!]
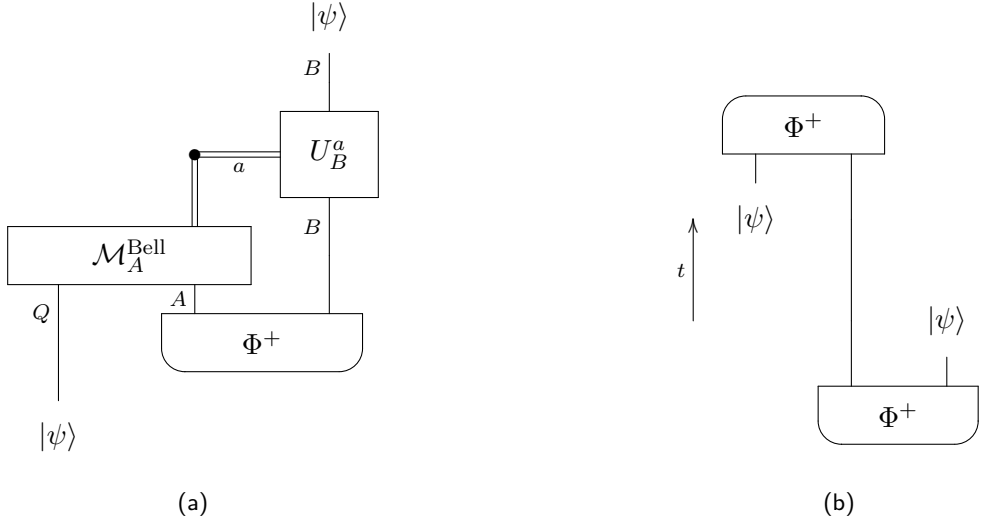

    \centering
    \begin{subfigure}{.5\textwidth}
        \centering
        \scalebox{1}{
            \input{3-P-CTC/Figs/StandardTeleportation.tex}
        }
        \caption{}
        \label{fig:StandardTeleportation}
    \end{subfigure}%
    \begin{subfigure}{.5\textwidth}
        \centering
        \scalebox{1}{
            \input{3-P-CTC/Figs/PostSelectedTeleportation.tex}
        }
        \caption{}
        \label{fig:PostSelectedTeleportation}
    \end{subfigure}
    \caption{(a) Standard quantum teleportation protocol, as described in the main text. The doubled lines denote classical communication from Alice to Bob.  
  (b)  Post-selected quantum teleportation. It corresponds to the standard protocol along with a post-selection on the outcome of Alice's Bell measurement being that of the maximally entangled state $\Phi^+$. It thus consists of pre- and post-selection on a maximally entangled state, and simulates a backward in time identity channel as described in the main text. This figure is inspired by \cite{Lloyd2011a}.}
    \label{fig:Teleportation}
\end{figure*}

Consider a teleportation protocol where Alice and Bob share a maximally entangled state of two qubits $\ket{\Phi^+}_{AB}$ while Alice has an additional qubit $Q$ in a state $\ket{\psi}_Q$ which she wishes to teleport to Bob. Alice jointly measures her qubits $A$ and $Q$ in the Bell basis and communicates the classical measurement outcome $a\in \{0,1,2,3\}$ to Bob. Bob can then recover the state $\ket{\psi}_B$ on his system $B$ by performing a unitary $U^a_B$ which depends on Alice's outcome $a$, on his half of the maximally entangled state. Notice that in a run of the experiment where Alice obtains the outcome corresponding to the Bell state $\ket{\Phi^+}$ (say $a=0$), Bob's operation is trivial $U_B^0=\id_B$. In other words, post-selecting on such a run of the experiment simulates a backwards in time identity channel whereby Bob already had the state $\ket{\psi}$ in his lab at a time earlier than when Alice attempts to teleport this state to him.\footnote{Thus experimental postselection on maximally entangled states allows to simulate closed timelike curves. In practice however, there is no causality violation, as the $a=0$ outcome only occurs probabilistically, and Bob cannot gain any information about the initial state on $Q$ until he receives physical classical communication from Alice about the outcome $a$. }
The protocol readily generalises to qudits ($d$-dimensional quantum systems for finite $d$), by taking $\ket{\Phi^+}_{AB}=\frac{1}{\sqrt{d}}\sum_{i=0}^{d-1}\ket{ii}_{AB}$ and the outcome $a=0$ to correspond to the same qudit maximally entangled state. The standard and post-selected teleportation protocols are  illustrated in \cref{fig:StandardTeleportation} and \cref{fig:PostSelectedTeleportation} respectively.

\subsection{P-CTC-assisted maps}
\label{ssec:ctcmap}

Consider a completely positive (CP) map $\mathcal{E}$ acting on a system $S$ (chronology-respecting system) and an ancilla system $A$ (chronology-violating system). This map can be assisted by a P-CTC which employs the post-selected teleportation protocol to ``teleport'' the chronology-violating output system backwards in time to the corresponding chronology violating input, as shown in figure~\ref{fig:PCTC_map}. In a hypothetical universe that allows such CTCs (e.g., by ensuring the post-selection fundamentally through appropriate boundary conditions), this would look like a backwards-in-time causal influence, if we regard the output as being later in time than the input of the interaction. On the other hand, an experimental simulation of the P-CTC would involve pre and post-selection on maximally entangled states, and discarding the experimental rounds where the post-selection does not succeed. At a mathematical level, we have the following definition.

\begin{figure}
    \centering
\begin{tikzpicture}[phiprep/.pic={\node at (1.3,-0.65) {\large{$\Phi^+$}};\draw (0,0) arc (180:360:1.3);
\draw (0,0)--(2.6,0); },phipost/.pic={\node at (1.3,0.65) {\large{$\Phi^+$}};\draw (0,0) arc (180:0:1.3);
\draw (0,0)--(2.6,0);},scale=0.5, transform shape]

\draw[->] (-2,-1)--node[anchor=east]{\large{$t$}}(-2,3); 
 
\draw (0,0) rectangle node[align=center]{\large{$\mathcal{C}_{CTC}$}} (2,2);

 \draw (1,-2)--node[anchor=east]{$S$}(1,0);  \draw (1,2)--node[anchor=east]{$S$}(1,4); 

   \node[align=center, black] at (3.5,1) {\Huge{$\mathbf{:=}$}};
   
   \draw (5,0) rectangle node[align=center]{\large{$\mathcal{E}$}} (8,2);

\draw (5.5,2)--node[anchor=east]{$S$}(5.5,4); \draw (7.5,2)--node[anchor=east]{$A$}(7.5,4);   \draw (7.5,4) arc (180:0:1);  \draw (7.5,-2) arc (180:360:1);

\draw (5.5,-2)--node[anchor=east]{$S$}(5.5,0);
\draw (7.5,-2)--node[anchor=east]{$A$}(7.5,0);
\draw (9.5,-2)--(9.5,4);

\begin{scope}[shift={(-2,-10)}]
    \draw (1,-2)--node[anchor=east]{$A$}(1,0);  \draw (1,2)--node[anchor=east]{$A$}(1,4); \draw (3,-2)--(3,4); \draw (1,4) arc (180:0:1);  \draw (1,-2) arc (180:360:1); \node[align=center, black] at (5.5,1) {\Huge{$\mathbf{:=}$}};

      \draw (8,-2)--node[anchor=east]{$A$}(8,0);  \draw (8,2)--node[anchor=east]{$A$}(8,4); \draw (10,-2)--node[anchor=east]{$B$}(10,4); \draw (7.7,-2) pic {phiprep}; \draw (7.7,4) pic {phipost};
      
\end{scope}

\end{tikzpicture}
    \caption{
    Top: Illustration of the idea of a CTC-assisted interaction $\mathcal{E}$ (e.g., a quantum channel or unitary). Bottom: In the P-CTC formalism, the CTC is obtained through pre and post-selection on a maximally entangled state which effectively teleports the system $A$ from future to past.}
    \label{fig:PCTC_map}
\end{figure}
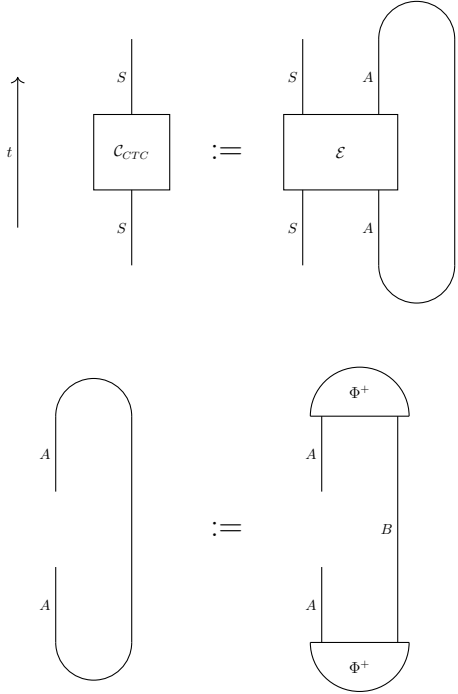

\begin{definition}[P-CTC-assisted map]
A PCTC-assisted map $\mathcal{C}_{CTC}$ is a linear CP map $\mathcal{C}_{CTC}: \mathcal{L}(\mathcal{H}_S)\mapsto \mathcal{L}(\mathcal{H}_S)$ which is obtained by assisting a linear CP map $\mathcal{E}: \mathcal{L}(\mathcal{H}_S)\otimes \mathcal{L}(\mathcal{H}_A)\mapsto \mathcal{L}(\mathcal{H}_S)\otimes \mathcal{L}(\mathcal{H}_A)$ through a P-CTC on the $A$ system. Explicitly, the action of this map is given as follows where we use the shorthand $\Phi^+_{AA'}:=\ket{\Phi^+}\bra{\Phi^+}_{AA'}$ and $A'$ is a system isomorphic to $A$, $\mathcal{H}_A\cong \mathcal{H}_{A'}$.
\begin{align}
    \begin{split}
          \label{eq: PCTC_map}
    &\mathcal{C}_{CTC} (\rho_S)\\=&\tr_{AA'}\Bigg[\Phi^+_{AA'}\Big(\mathcal{E} \otimes \id_{A'} (\rho_S \otimes \Phi^+_{AA'})\Big)\Bigg]\\
    =&\bra{\Phi^+}_{AA'} \mathcal{E} \otimes \id_{A'} (\rho_S \otimes \Phi^+_{AA'})\ket{\Phi^+}_{AA'}.
    \end{split}
\end{align}

\end{definition}

In the above, we have allowed $\mathcal{E}$ to be any CP map for generality, often this will take the form of a completely positive and trace preserving (CPTP) map which models physical quantum channels that includes unitaries. A convenient form of the action of a P-CTC assisted map is given by the following lemma, which generalises a result shown for the case of unitary interactions in \cite{Brun2012} to CP maps.

\begin{restatable}[Action of P-CTC-assisted maps]{lemma}{PCTCAction}
\label{lemma: PCTC_map_action}
Consider a P-CTC-assisted map $\mathcal{C}_{CTC}: \mathcal{L}(\mathcal{H}_S)\mapsto \mathcal{L}(\mathcal{H}_S)$ obtained through an interaction given by a CP map $\mathcal{E}: \mathcal{L}(\mathcal{H}_S)\otimes \mathcal{L}(\mathcal{H}_A)\mapsto \mathcal{L}(\mathcal{H}_S)\otimes \mathcal{L}(\mathcal{H}_A)$. If $\{K_j\}_j$ are a set of Kraus operators for $\mathcal{E}$ i.e., $\mathcal{E}(\sigma_{SA})=\sum_j K_j \sigma_{SA}K_j^{\dagger}$ for all $\sigma_{SA}\in \mathcal{L}(\mathcal{H}_S)\otimes \mathcal{L}(\mathcal{H}_A)$, then the action of $\mathcal{C}_{CTC}$ is given as follows.
\begin{align}
        \label{eq: PCTC_map_action}
    \mathcal{C}_{CTC}(\rho_S)=\frac{1}{d_A^2}\sum_j \tr_A [K_j] (\rho_S) \tr_A [K_j]^\dagger 
\end{align}

\end{restatable}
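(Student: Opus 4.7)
The plan is to expand the CP map $\mathcal{E}$ using its Kraus decomposition and reduce the right-hand side of \eqref{eq: PCTC_map} to a sum of sandwiches on $\mathcal{H}_S$, with the effective Kraus operators being $\tfrac{1}{d_A}\tr_A[K_j]$. The computation rests on two elementary ingredients: a partial-trace identity for $\ket{\Phi^+}_{AA'}$, and a rewriting of the initial joint state $\rho_S\otimes \Phi^+_{AA'}$ that isolates $\rho_S$ in the middle of a sandwich, so that the $\ket{\Phi^+}_{AA'}$'s on either side can be processed symmetrically.

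First, I would establish the operator-valued identity
\begin{equation}
\bra{\Phi^+}_{AA'}\,(M_{SA}\otimes \id_{A'})\,\ket{\Phi^+}_{AA'} \;=\; \frac{1}{d_A}\,\tr_A[M_{SA}],
\end{equation}
valid for every operator $M_{SA}$ on $\mathcal{H}_S\otimes \mathcal{H}_A$, where the right-hand side is interpreted as an operator on $\mathcal{H}_S$. This follows by a direct computation from $\ket{\Phi^+}_{AA'} = \tfrac{1}{\sqrt{d_A}}\sum_i \ket{i}_A\ket{i}_{A'}$, which yields $\tfrac{1}{d_A}\sum_{ij}\langle i|j\rangle_{A'}\bra{i}_A M_{SA}\ket{j}_A = \tfrac{1}{d_A}\sum_i \bra{i}_A M_{SA}\ket{i}_A$. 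I would then combine this with the factorisation
\begin{equation}
\rho_S \otimes \Phi^+_{AA'} \;=\; (\id_S\otimes \ket{\Phi^+}_{AA'})\,\rho_S\,(\id_S\otimes \bra{\Phi^+}_{AA'}),
\end{equation}
which, substituted into \eqref{eq: PCTC_map} along with $\mathcal{E}(\sigma)=\sum_j K_j\sigma K_j^\dagger$, turns the $j$-th term into $L_j\rho_S L_j^\dagger$ with $L_j := \bra{\Phi^+}_{AA'}(K_j\otimes \id_{A'})\ket{\Phi^+}_{AA'}$. The partial-trace identity applied to $M_{SA}=K_j$ gives $L_j = \tfrac{1}{d_A}\tr_A[K_j]$, and summing over $j$ delivers \eqref{eq: PCTC_map_action} with the prefactor $\tfrac{1}{d_A^2}$.

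The only genuinely delicate step is the bookkeeping of the $\sqrt{d_A}$ normalisation factors attached to the bra and the ket of $\ket{\Phi^+}_{AA'}$: one factor of $\tfrac{1}{d_A}$ comes from the contraction on the left of $\rho_S$ and another from the contraction on the right, so the $\tfrac{1}{d_A^2}$ prefactor is not accidental. Beyond that, the argument extends the unitary case of \cite{Brun2012} essentially for free, because the Kraus expansion reduces the CP interaction to a sum of operator-sandwich terms and the $\ket{\Phi^+}_{AA'}$-contraction acts linearly on each Kraus operator, treating it exactly as a unitary $U$ would be treated in the earlier proof.
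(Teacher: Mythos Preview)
Your proposal is correct and follows essentially the same route as the paper: both reduce the $\Phi^+$-sandwich of each Kraus operator to $\tfrac{1}{d_A}\tr_A[K_j]$ via the same index computation. The only cosmetic difference is that the paper first spectrally decomposes $\rho_S=\sum_i\lambda_i\ket{\psi_i}\bra{\psi_i}$ and carries out the contraction at the level of pure vectors, whereas your factorisation $\rho_S\otimes\Phi^+_{AA'}=(\id_S\otimes\ket{\Phi^+}_{AA'})\rho_S(\id_S\otimes\bra{\Phi^+}_{AA'})$ keeps $\rho_S$ intact and is slightly cleaner.
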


A proof of the above lemma can be found in appendix~\ref{appendix: PCTC_map_action}. In the particular case that $\mathcal{E}$ is a unitary $U$, then the corresponding P-CTC-assisted map is given as follows, as was originally shown in \cite{Brun2012}.

\begin{align}
            \label{eq: PCTC_unitary}
        \mathcal{C}_{CTC}(\rho_S)=\frac{1}{d_A^2}\tr_A [U] (\rho_S) \tr_A [U]^\dagger
\end{align}

\paragraph{Measurement probabilities}

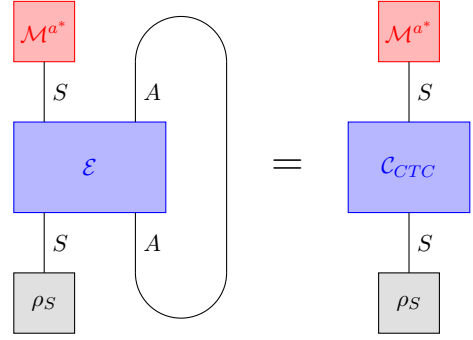
\begin{figure}
    \centering
\begin{tikzpicture}[scale=0.8, transform shape, trace/.pic={\draw [thick](-0.4,0)--(0.4,0);\draw [thick](-0.3,0.1)--(0.3,0.1);\draw [thick](-0.2,0.2)--(0.2,0.2);\draw [thick](-0.1,0.3)--(0.1,0.3);}]
    \draw[blue, fill=blue!70!white, fill opacity=0.4, text opacity=1] (0.5,0) rectangle node[align=center]{$\mathcal{E}$} (3,1.5); 
 \draw[red, fill=red!70!white, fill opacity=0.4, text opacity=1] (0.5,2.5) rectangle node[align=center]{$\mathcal{M}^{a^*}$} (1.5,3.5); 

  \draw[ fill=gray!60!white, fill opacity=0.4, text opacity=1] (0.5,-1) rectangle node[align=center]{$\rho_S$} (1.5,-2); 

  \draw (1,-1)--node[anchor=west]{$S$}(1,0);
    \draw (1,1.5)--node[anchor=west]{$S$}(1,2.5);

    \draw (2.5,-1)--node[anchor=west]{$A$}(2.5,0);
    \draw (2.5,1.5)--node[anchor=west]{$A$}(2.5,2.5);
    \draw (4,-1)--(4,2.5);

    \draw (4,-1) arc (360:180:0.75); \draw (4,2.5) arc (0:180:0.75);

    \node at (5,0.75) {\huge{$=$}};

    \begin{scope}[shift={(6,0)}]
          \draw[blue, fill=blue!70!white, fill opacity=0.4, text opacity=1] (0,0) rectangle node[align=center]{$\mathcal{C}_{CTC}$} (2,1.5); 
 \draw[red, fill=red!70!white, fill opacity=0.4, text opacity=1] (0.5,2.5) rectangle node[align=center]{$\mathcal{M}^{a^*}$} (1.5,3.5); 

  \draw[ fill=gray!60!white, fill opacity=0.4, text opacity=1] (0.5,-1) rectangle node[align=center]{$\rho_S$} (1.5,-2); 

  \draw (1,-1)--node[anchor=west]{$S$}(1,0);
    \draw (1,1.5)--node[anchor=west]{$S$}(1,2.5);

    \end{scope}
\end{tikzpicture}
    \caption{A map $\mathcal{E}$ assisted by a P-CTC on $A$ applied on an initial state $\rho_S$, and where we measure the output state on $S$ through a measurement yielding outcome $a=a^*$ (tracing out the post-measurement state). The probability of the outcome given the map, state and measurement is obtained through \cref{eq: PCTC_map_prob} where $\mathcal{C}_{CTC}$ is the map on $S$ obtained from assisting $\mathcal{E}$ with the P-CTC.}
    \label{fig:PCTC_map_meas}
\end{figure}

Consider an initial state $\rho_S$, acted upon by a P-CTC-assisted map $\mathcal{C}_{CTC}$, and subsequently measured through a measurement $\mathcal{M}$ given by a \emph{positive operator valued measure} (POVM) $\{\mathcal{M}^a\}_{a}$, where $\sum_a \mathcal{M}^a=\id_S$ (\cref{fig:PCTC_map_meas}). Then the probability of obtaining a specific measurement outcome $a^*$ is given as 
\begin{equation}
 \label{eq: PCTC_map_prob}   P(a^*|\mathcal{C}_{CTC},\rho_S,\{\mathcal{M}^a\}_{a})=\frac{\tr[\mathcal{C}_{CTC}(\rho_S)\mathcal{M}^{a^*}]}{\tr[\mathcal{C}_{CTC}(\rho_S)]}
\end{equation}

This probability expression can be readily derived using the Born rule and the standard rule for conditional probabilities, noting that $P(a^*|\mathcal{C}_{CTC},\rho_S, \{\mathcal{M}^a\}_{a})$ corresponds to the probability of obtaining the outcome $a$ given that the post-selection involved in creating the P-CTC-assisted map $\mathcal{C}_{CTC}$ has succeeded. Thus the numerator of the above expression corresponds to the probability that the measurement outcome $a^*$ is obtained in the final measurement and the outcome corresponding to the desired maximally entangled state is obtained in the post-selecting measurement involved in $\mathcal{C}_{CTC}$, while the denominator is simply the success probability of this post-selection. For a formal derivation of this rule, see appendix~\ref{appendix: PCTC_map_prob}.

Notice that any two P-CTC-assisted maps $\mathcal{C}_{CTC}$ and $\mathcal{D}_{CTC}$ which differ by some constant $k$ i.e., $\mathcal{C}_{CTC}=k\mathcal{D}_{CTC}$  will yield the same outcome probabilities for all measurements, as the factor $k$ cancels out between the numerator and denominator of \cref{eq: PCTC_map_prob}. Hence such maps $\mathcal{C}_{CTC}$ and $\mathcal{D}_{CTC}$ will be operationally equivalent (see \cref{def: equiv_PCTC} later).

\paragraph{P-CTC-assisted unitaries and pure representation} Often in this paper, it will be sufficient to work with the case where the initial channel $\mathcal{E}$ is a unitary, i.e., $\mathcal{E}(\cdot)=U_{SA}(\cdot)U_{SA}^\dagger$, where $U_{SA}:\mathcal{H}_S\otimes\mathcal{H}_A\mapsto \mathcal{H}_S\otimes\mathcal{H}_A$ is a unitary. In such cases, we will directly work with the operator $U_{SA}$ acting on the Hilbert space of pure states. We will distinguish evolutions on a Hilbert space $\mathcal{H}$ (such as $U_{SA}$) vs those on linear operators or density matrices on a Hilbert space $\mathcal{L}(\mathcal{H})$ (such as $\mathcal{E}$), by using mathcal font vs regular font for naming the latter and former respectively.

In the pure case, the unnormalised CTC-assisted map corresponding to having a P-CTC on the $A$ system in the unitary $U_{SA}$ is given as follows (as implied by the proof of \cref{lemma: PCTC_map_action} and \cite{Brun2012})
\begin{equation}
\label{eq: partialtrace_CTC}
    C_{CTC}=\frac{1}{d_A}\Tr_A[U_{SA}]: \mathcal{H}_S\mapsto \mathcal{H}_S
\end{equation}

Invoking the operational equivalence of P-CTC assisted objects differing by an overall constant (\cref{def: equiv_PCTC}), it would suffice to consider $\Tr_A[U_{SA}]$ as the relevant operator here.

\subsection{Time labelled P-CTC-assisted quantum combs}
\label{subsec:comb}
\begin{figure*}
\centering
\includegraphics[scale=1.0]{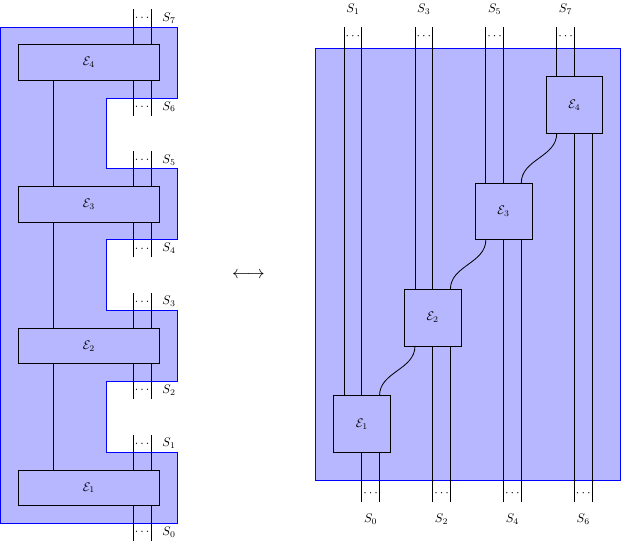}
\caption{The internal structure of a quantum comb \cite{Chiribella2008}  involves one CPTP map $\mathcal{E}_i$ for each tooth of the comb, where subsequent operations can be connected by an internal memory. The comb has a well-defined, acyclic causal order, with each $\mathcal{E}_i$ ordered later than $\mathcal{E}_{i-1}$ i.e., the outputs of $\mathcal{E}_{i-1}$ can (but does not necessarily) influence inputs of $\mathcal{E}_{i}$ but never the other way around. An $N$-slot comb can equivalently be represented as a single CPTP map (right) from inputs $\cup_{n=0}^N S_{2n}$ to outputs $\cup_{n=0}^N S_{2n+1}$. }
\label{fig:comb_sequence}
\end{figure*}

While the multi-time formalism as well as the P-CTC framework both involve post-selection, the type of objects considered in the two cases are a priory quite distinct. Multi-time states come with time labelled Hilbert-spaces which may be forward or backward evolving and include ``slots'' for plugging in different intermediate measurements while the P-CTC formalism considers a composition of fixed operations assisted by CTCs between Hilbert spaces that do not come with explicit time labels or allow ``slots'' for other operations. In this work, we define general P-CTC assisted objects that account for both these features: including explicit ``slots'' where other external operations can be plugged in and including time labels. For the former feature, we will associate time labels on the in/output systems in the P-CTC formalism. For the latter feature, we will use the concept of quantum combs introduced in \cite{Chiribella2008}, which can be thought of as (acyclic) quantum circuits with empty slots for plugging in channels, and extend these to also allow P-CTCs. The general objects we will thus obtain will be called time-labelled P-CTC assisted combs.

The full details of the definitions and properties of these general P-CTC assisted objects can be found in \cref{appendix: PCTC_Combs}. All of these details are not necessary for our main results but may be of independent interest in future work. 
For instance, \cref{appendix: PCTC_Combs} details how measurement probabilities for multi-time measurements are computed for arbitrary P-CTC assisted combs, along with how P-CTC-assisted combs act on external operations plugged into their slots. These will not be required in the main text, as we prove operational equivalence between objects in the two frameworks by constructing identical objects up to an overall constant, noting that in both the MTS and P-CTC frameworks, objects related in this way lead to the same measurement probabilities for all measurements (\cref{def: op_equiv} and \cref{def: equiv_PCTC}). Here, we provide an intuitive and illustrative overview of time-labelled P-CTC assisted combs, which will be relevant for the main results of the paper.

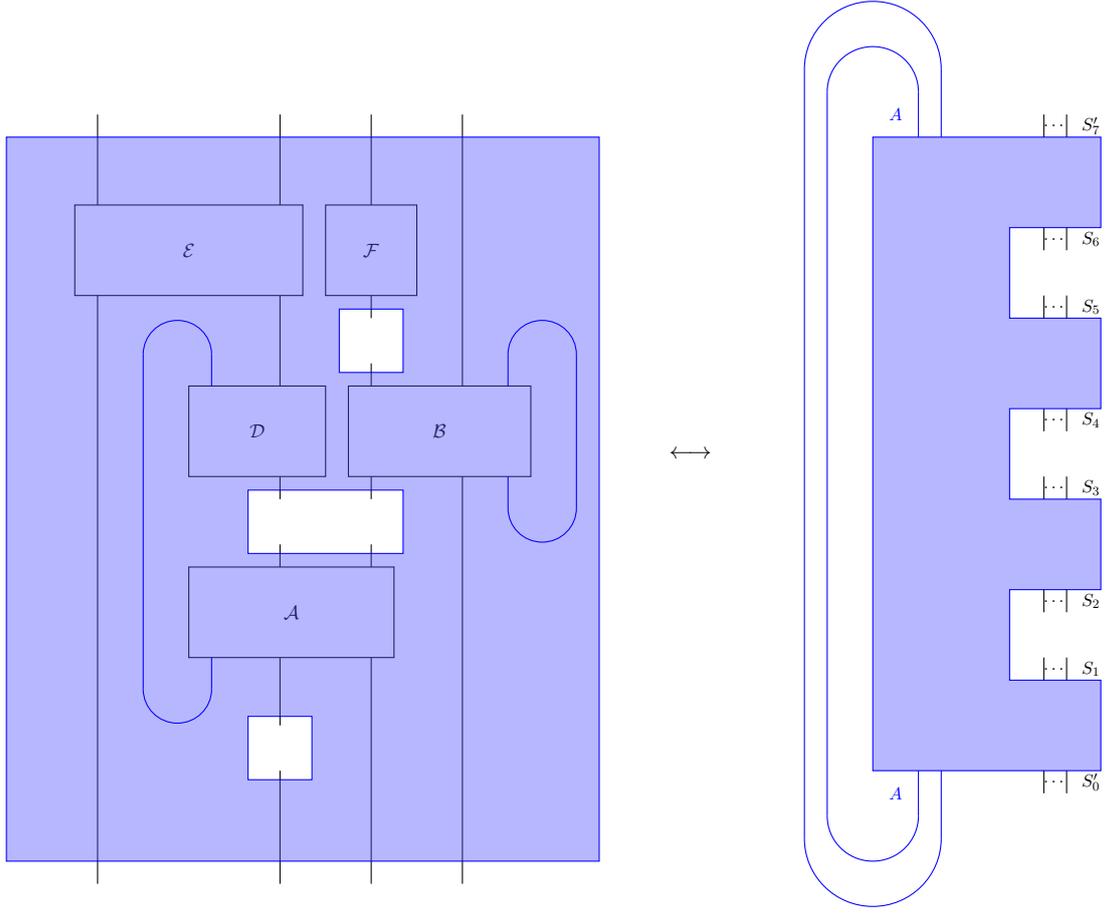
\begin{figure*}[ht]
 \centering
\begin{tikzpicture}[scale=0.6, transform shape]

\draw (-1.5,0) rectangle node[align=center]{\large{$\mathcal{A}$}} (3,2);

 \draw(0.5,-5)--(0.5,0);  \draw (2.5,-5)--(2.5,0); \draw (0.5,2)--(0.5,4); \draw (2.5,2)--(2.5,4); 
 
\draw (4.5,-5)--(4.5,4);\draw (2,4) rectangle node[align=center]{\large{$\mathcal{B}$}} (6,6);

\draw (-1.5,4) rectangle node[align=center]{\large{$\mathcal{D}$}} (1.5,6); \draw (2.5,6)--(2.5,8);\draw (4.5,6)--(4.5,10); \draw (0.5,6)--(0.5,8);

\draw (-4,8) rectangle node[align=center]{\large{$\mathcal{E}$}} (1,10);  \draw(-3.5,-5)--(-3.5,8);

\draw (-3.5,10)--(-3.5,12); \draw (0.5,10)--(0.5,12); \draw (2.5,10)--(2.5,12); \draw (4.5,10)--(4.5,12);

\draw (1.5,8) rectangle node[align=center]{\large{$\mathcal{F}$}} (3.5,10);
\draw[blue, fill=blue!70!white, fill opacity=0.4] (-5.5,-4.5) rectangle (7.5,11.5); 
\draw[blue, fill=white] (-0.2,-2.7) rectangle (1.2,-1.3); \draw (0.5,-2.7)--(0.5,-2.5); \draw (0.5,-1.5)--(0.5,-1.3);
\draw[blue, fill=white] (-0.2,2.3) rectangle (3.2,3.7);  \draw[blue, fill=white] (1.8,6.3) rectangle (3.2,7.7);

 \draw (0.5,2.3)--(0.5,2.5); \draw (0.5,3.5)--(0.5,3.7); \draw (2.5,2.3)--(2.5,2.5); \draw (2.5,3.5)--(2.5,3.7); \draw (2.5,6.3)--(2.5,6.5); \draw (2.5,7.5)--(2.5,7.7); 

\node at (9.5,4.5) {\Large{$\longleftrightarrow$}};

\draw[blue] (-1,6)--(-1,6.7); \draw[blue] (-1,0)--(-1,-0.7); \draw[blue] (-2.5,-0.7)--(-2.5,6.7); \draw[blue] (-1,6.7) arc (0:180:0.75); \draw[blue] (-1,-0.7) arc (0:-180:0.75);
\draw[blue] (5.5,6)--(5.5,6.7); \draw[blue] (5.5,4)--(5.5,3.3); \draw[blue] (7,3.3)--(7,6.7); \draw[blue] (7,6.7) arc (0:180:0.75);  \draw[blue] (7,3.3) arc (0:-180:0.75);

\begin{scope}[shift={(13.5,0)}]

 \draw[blue, fill=blue!70!white, fill opacity=0.4] (0,-2.5)--(0,11.5)--(5,11.5)--(5,9.5)--(3,9.5)--(3,7.5)--(5,7.5)--(5,5.5)--(3,5.5)--(3,3.5)--(5,3.5)--(5,1.5)--(3,1.5)--(3,-0.5)--(5,-0.5)--(5,-2.5)--cycle;

\draw(3.75,-0.5)--(3.75,0); \draw(4.25,-0.5)--(4.25,-0) node[midway, anchor=west, xshift=2mm] {$S_1$}; \node at (4,-0.25) {$\dots$}; \draw(3.75,1)--(3.75,1.5); \draw(4.25,1)--(4.25,1.5) node[midway, anchor=west, xshift=2mm] {$S_2$}; \node at (4,1.25) {$\dots$};
\begin{scope}[shift={(0,4)}]

\draw(3.75,-0.5)--(3.75,0); \draw(4.25,-0.5)--(4.25,-0) node[midway, anchor=west, xshift=2mm] {$S_3$}; \node at (4,-0.25) {$\dots$}; \draw(3.75,1)--(3.75,1.5); \draw(4.25,1)--(4.25,1.5) node[midway, anchor=west, xshift=2mm] {$S_4$}; \node at (4,1.25) {$\dots$};
\end{scope}
\begin{scope}[shift={(0,8)}]

\draw(3.75,-0.5)--(3.75,0); \draw(4.25,-0.5)--(4.25,-0) node[midway, anchor=west, xshift=2mm] {$S_5$}; \node at (4,-0.25) {$\dots$}; \draw(3.75,1)--(3.75,1.5); \draw(4.25,1)--(4.25,1.5) node[midway, anchor=west, xshift=2mm] {$S_6$}; \node at (4,1.25) {$\dots$};
\end{scope}

\begin{scope}[shift={(0,-4)}]
\draw(3.75,1)--(3.75,1.5); \draw(4.25,1)--(4.25,1.5) node[midway, anchor=west, xshift=2mm] {$S'_0$}; \node at (4,1.25) {$\dots$};
\end{scope}

\begin{scope}[shift={(0,12)}]
\draw(3.75,-0.5)--(3.75,0); \draw(4.25,-0.5)--(4.25,-0) node[midway, anchor=west, xshift=2mm] {$S'_7$}; \node at (4,-0.25) {$\dots$};
\end{scope}

\draw[blue] (1,11.5)--(1,12.5); \draw[blue] (1,-2.5)--(1,-3.5); \draw[blue] (-1,-3.5)--(-1,12.5);  \draw[blue] (-1,-3.5) arc (180:360:1); \draw[blue] (-1,12.5) arc (180:0:1);
\draw[blue] (1.5,11.5)--(1.5,13); \draw[blue] (1.5,-2.5)--(1.5,-4); \draw[blue] (-1.5,-4)--(-1.5,13); \draw[blue] (-1.5,-4) arc (180:360:1.5); \draw[blue] (-1.5,13) arc (180:0:1.5);
\node[blue] at (0.5,12) {$A$}; \node[blue] at (0.5,-3) {$A$};
\end{scope}

\end{tikzpicture}
    \caption{Any P-CTC-assisted quantum circuit with ``empty slots'' (left) can be represented in the form of a P-CTC assisted quantum comb (right), by stretching and rearranging internal wires}
    \label{fig: CTC_comb}
\end{figure*}

\paragraph{Quantum Combs} Quantum combs \cite{Chiribella2008} are useful for describing standard quantum circuits which include certain fixed operations along with ``slots'' for plugging in arbitrary operations. Any such ``quantum circuit with slots'' can be expressed in the general form of a quantum comb which is illustrated and explained in \cref{fig:comb_sequence}.

\paragraph{Including P-CTCs} In the same way that quantum circuits are built up through parallel and sequential composition of ordinary quantum channels, we can consider P-CTC-assisted circuits built up from parallel and sequential composition of P-CTC-assisted quantum channels. Then any such P-CTC-assisted circuit with ``slots'' for plugging in external operations corresponds to a P-CTC-assisted comb. This is illustrated in \cref{fig: CTC_comb}. Moreover, analogous to the general form of a regular quantum comb given in \cref{fig:comb_sequence}, we have the general form of a P-CTC-assisted comb illustrated in \cref{fig:CTCcomb_sequence}.
We will denote P-CTC-assisted combs as $\mathcal{C}_{CTC}$ and $C_{CTC}$ in the ``mixed'' (on density operators) and ``pure'' (on Hilbert spaces) representations respectively. Note that (P-CTC-assisted) quantum channels/maps are special cases of (P-CTC-assisted) quantum combs which justifies the identical notation for both. Wherever relevant throughout the paper, we will explicitly state whether we are considering a P-CTC-assisted map or a P-CTC-assisted comb.

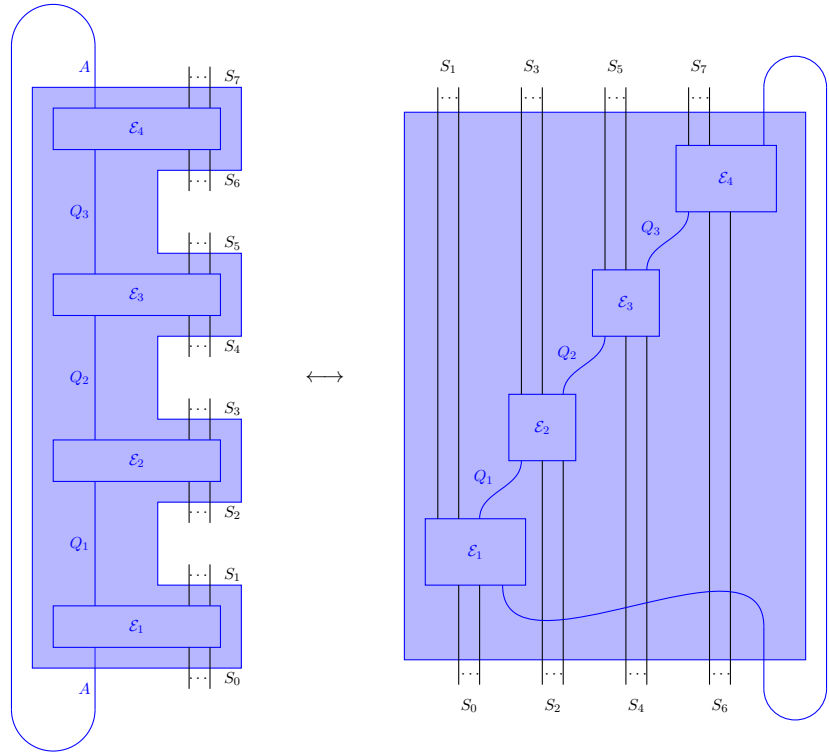
\begin{figure*}
    \centering
    \begin{tikzpicture}[scale=0.55, transform shape]

\draw[blue] (1.5,11)--(1.5,12.5); \draw[blue] (1.5,-2)--(1.5,-3.5); \draw[blue] (-0.5,-3.5)--(-0.5,12.5);  \draw[blue] (-0.5,-3.5) arc (180:360:1); \draw[blue] (-0.5,12.5) arc (180:0:1); \node[blue] at (1.25,12) {$A$}; \node[blue] at (1.25,-3) {$A$};

 \draw[blue, fill=blue!70!white, fill opacity=0.4] (0,-2.5)--(0,11.5)--(5,11.5)--(5,9.5)--(3,9.5)--(3,7.5)--(5,7.5)--(5,5.5)--(3,5.5)--(3,3.5)--(5,3.5)--(5,1.5)--(3,1.5)--(3,-0.5)--(5,-0.5)--(5,-2.5)--cycle;

\draw(3.75,-0.5)--(3.75,0); \draw(4.25,-0.5)--(4.25,-0) node[midway, anchor=west, xshift=2mm] {$S_1$}; \node at (4,-0.25) {$\dots$}; \draw(3.75,-1)--(3.75,-0.5); \draw(4.25,-1)--(4.25,-0.5);

\draw(3.75,1)--(3.75,1.5); \draw(4.25,1)--(4.25,1.5) node[midway, anchor=west, xshift=2mm] {$S_2$}; \node at (4,1.25) {$\dots$}; \draw(3.75,1.5)--(3.75,2);\draw(4.25,1.5)--(4.25,2);

\draw[blue] (0.5,-2) rectangle node[align=center]{$\mathcal{E}_1$} (4.5,-1); \draw[blue] (1.5,-1)--node[anchor=east]{$Q_1$}(1.5,2);
\begin{scope}[shift={(0,4)}]

\draw(3.75,-0.5)--(3.75,0); \draw(4.25,-0.5)--(4.25,-0) node[midway, anchor=west, xshift=2mm] {$S_3$}; \node at (4,-0.25) {$\dots$}; \draw(3.75,-1)--(3.75,-0.5); \draw(4.25,-1)--(4.25,-0.5);
\draw(3.75,1)--(3.75,1.5); \draw(4.25,1)--(4.25,1.5) node[midway, anchor=west, xshift=2mm] {$S_4$}; \node at (4,1.25) {$\dots$}; \draw(3.75,1.5)--(3.75,2);\draw(4.25,1.5)--(4.25,2);
\draw[blue] (0.5,-2) rectangle node[align=center]{$\mathcal{E}_2$} (4.5,-1); \draw[blue] (1.5,-1)--node[anchor=east]{$Q_2$}(1.5,2);
\end{scope}

\begin{scope}[shift={(0,8)}]
\draw(3.75,-0.5)--(3.75,0); \draw(4.25,-0.5)--(4.25,-0) node[midway, anchor=west, xshift=2mm] {$S_5$}; \node at (4,-0.25) {$\dots$}; \draw(3.75,-1)--(3.75,-0.5); \draw(4.25,-1)--(4.25,-0.5);
\draw(3.75,1)--(3.75,1.5); \draw(4.25,1)--(4.25,1.5) node[midway, anchor=west, xshift=2mm] {$S_6$}; \node at (4,1.25) {$\dots$}; \draw(3.75,1.5)--(3.75,2);\draw(4.25,1.5)--(4.25,2);
\draw[blue] (0.5,-2) rectangle node[align=center]{$\mathcal{E}_3$} (4.5,-1);\draw[blue] (1.5,-1)--node[anchor=east]{$Q_3$}(1.5,2);
\end{scope}

\begin{scope}[shift={(0,-4)}]
\draw(3.75,1)--(3.75,1.5); \draw(4.25,1)--(4.25,1.5) node[midway, anchor=west, xshift=2mm] {$S_0$}; \node at (4,1.25) {$\dots$}; \draw(3.75,1.5)--(3.75,2);\draw(4.25,1.5)--(4.25,2);
\end{scope}

\begin{scope}[shift={(0,12)}]
\draw(3.75,-0.5)--(3.75,0); \draw(4.25,-0.5)--(4.25,-0) node[midway, anchor=west, xshift=2mm] {$S_7$}; \node at (4,-0.25) {$\dots$}; \draw(3.75,-1)--(3.75,-0.5); \draw(4.25,-1)--(4.25,-0.5);
\draw[blue] (0.5,-2) rectangle node[align=center]{$\mathcal{E}_4$} (4.5,-1); 
\end{scope}

\node at (7,4.5) {\Large{$\longleftrightarrow$}};

\begin{scope}[shift={(11,-0.5)}, rotate=90, transform shape]
 \draw[blue, fill=blue!70!white, fill opacity=0.4] (-1.8,-7.5) rectangle (11.4,2.1);

\draw[blue] (0,-0.8) rectangle node[align=center]{\rotatebox{-90}{$\mathcal{E}_1$}} (1.6,1.6); \draw (1.6,1.3)--(12,1.3); \draw (1.6,0.8)--(12,0.8); \node[rotate=90] at (11.75,1.05) {$\dots$}; \node[rotate=-90] at (12.5,1.05) {$S_1$};
\draw[blue] (1.6,0.3) to[out=0,in=180] (3,-0.7); \node[blue, rotate=-90] at (2.6,0.2) {$Q_1$};

\draw (-2.4,0.3)--(0,0.3); \draw (-2.4,0.8)--(0,0.8); \node[rotate=90] at (-2.15,0.55) {$\dots$}; \node[rotate=-90] at (-2.9,0.55) {$S_0$};

\begin{scope}[shift={(3,-2)}, transform shape]
\draw[blue] (0,0) rectangle node[align=center]{\rotatebox{-90}{$\mathcal{E}_2$}} (1.6,1.6); \draw (1.6,1.3)--(9,1.3); \draw (1.6,0.8)--(9,0.8); \node[rotate=90] at (8.75,1.05) {$\dots$}; \node[rotate=-90] at (9.5,1.05) {$S_3$};
\draw[blue] (1.6,0.3) to[out=0,in=180] (3,-0.7); \node[blue, rotate=-90] at (2.6,0.2) {$Q_2$};
\draw (-5.4,0.3)--(0,0.3); \draw (-5.4,0.8)--(0,0.8); \node[rotate=90] at (-5.15,0.55) {$\dots$}; \node[rotate=-90] at (-5.9,0.55) {$S_2$};

\end{scope}

\begin{scope}[shift={(6,-4)}, transform shape]
\draw[blue] (0,0) rectangle node[align=center]{\rotatebox{-90}{$\mathcal{E}_3$}} (1.6,1.6); \draw (1.6,1.3)--(6,1.3); \draw (1.6,0.8)--(6,0.8); \node[rotate=90] at (5.75,1.05) {$\dots$}; \node[rotate=-90] at (6.5,1.05) {$S_5$};
\draw[blue] (1.6,0.3) to[out=0,in=180] (3,-0.7); \node[blue, rotate=-90] at (2.6,0.2) {$Q_3$};
\draw (-8.4,0.3)--(0,0.3); \draw (-8.4,0.8)--(0,0.8); \node[rotate=90] at (-8.15,0.55) {$\dots$}; \node[rotate=-90] at (-8.9,0.55) {$S_4$};
\end{scope}

\begin{scope}[shift={(9,-6)}, , transform shape]
\draw[blue] (0,-0.8) rectangle node[align=center]{\rotatebox{-90}{$\mathcal{E}_4$}} (1.6,1.6); \draw (1.6,1.3)--(3,1.3); \draw (1.6,0.8)--(3,0.8); \node[rotate=90] at (2.75,1.05) {$\dots$}; \node[rotate=-90] at (3.5,1.05) {$S_7$};
\draw (-11.4,0.3)--(0,0.3); \draw (-11.4,0.8)--(0,0.8); \node[rotate=90] at (-11.15,0.55) {$\dots$}; \node[rotate=-90] at (-11.9,0.55) {$S_6$};

\draw[blue] (1.6,-0.5) --(3,-0.5); \draw[blue] (-11.5,-0.5) --(-10,-0.5); \draw[blue] (-10,-0.5) to[out=0,in=180] (-9,5.75); \draw[blue] (-11.5,-2)--(3,-2);  \draw[blue] (3,-0.5) arc (90:-90:0.75); \draw[blue] (-11.5,-0.5) arc (90:270:0.75); 
\end{scope}

\end{scope}

\end{tikzpicture}
    \caption{Analogous to the case of regular quantum combs, illustrated in \cref{fig:comb_sequence}, any $N$-slot P-CTC assisted comb (left) can be equivalently represented as a single P-CTC assisted map (right). }
    \label{fig:CTCcomb_sequence}
\end{figure*}

\paragraph{Action of P-CTC-assisted combs on external operations}

The action of a P-CTC-assisted comb $\mathcal{C}_{CTC}$ on external operations $\{\mathcal{M}_i\}_i$ plugged in within each of its slots can be equivalently written in terms of a sequential composition of quantum channels associated with the comb and external operations followed by P-CTCs, as illustrated in \cref{fig:pctc_comb_meas}. In particular, if the operations $\{\mathcal{M}_i\}_i$ are measurements, their outcome probabilities can be computed through this procedure as detailed in \cref{appendix: PCTC_Combs}. In other words, the measurement probabilities in a P-CTC-assisted comb are equivalently computed in terms of composition of P-CTC-assisted maps. 

As we noted before in the case of P-CTC-assisted maps in \cref{ssec:ctcmap}, two such objects that differ only by a constant factor lead to the same probabilities for all measurements. This motivates the following definition, where P-CTC assisted object refers to maps and combs alike.

\begin{definition}[Operational equivalence for P-CTC assisted objects]
\label{def: equiv_PCTC}
Two P-CTC assisted objects, $\mathcal{C}_{CTC}$ and $\mathcal{D}_{CTC}$ are said to be \emph{operationally equivalent} whenever there exists  $k\in \mathbb{R}$ such that $\mathcal{C}_{CTC}=k\mathcal{D}_{CTC}$.
    
\end{definition}

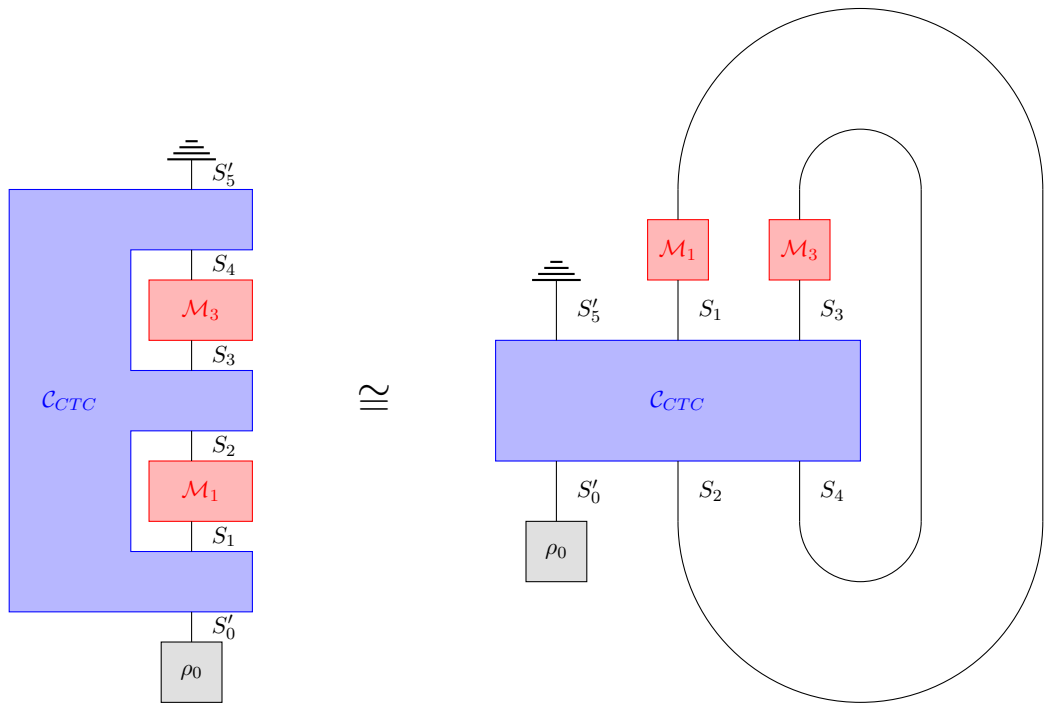
\begin{figure*}
    \centering
\begin{tikzpicture}[scale=0.8, transform shape, trace/.pic={\draw [thick](-0.4,0)--(0.4,0);\draw [thick](-0.3,0.1)--(0.3,0.1);\draw [thick](-0.2,0.2)--(0.2,0.2);\draw [thick](-0.1,0.3)--(0.1,0.3);}]

 \draw[blue, fill=blue!70!white, fill opacity=0.4] (0,0)--(0,1)--(-2,1)--(-2,3)--(0,3)--(0,4)--(-2,4)--(-2,6)--(0,6)--(0,7)--(-4,7)--(-4,0)--cycle; \node[blue] at (-3,3.5) {$\mathcal{C}_{CTC}$};
 \draw[red, fill=red!70!white, fill opacity=0.4, text opacity=1] (0,1.5) rectangle node[align=center]{$\mathcal{M}_1$} (-1.7,2.5);
 
 \draw[red, fill=red!70!white, fill opacity=0.4, text opacity=1] (-1.7,4.5) rectangle node[align=center]{$\mathcal{M}_3$} (0,5.5);

\draw(-1,-0.5)--(-1,0)  node[midway, anchor=west, xshift=2mm] {$S'_0$}; \draw(-1,1)--(-1,1.5)  node[midway, anchor=west, xshift=2mm] {$S_1$}; \draw(-1,2.5)--(-1,3)  node[midway, anchor=west, xshift=2mm] {$S_2$}; \draw(-1,4)--(-1,4.5)  node[midway, anchor=west, xshift=2mm] {$S_3$}; \draw(-1,5.5)--(-1,6)  node[midway, anchor=west, xshift=2mm] {$S_4$}; \draw(-1,7)--(-1,7.5)  node[midway, anchor=west, xshift=2mm] {$S'_5$}; 

\draw[fill=gray!60!white, fill opacity=0.4, text opacity=1] (-1.5,-1.5) rectangle node[align=center]{$\rho_0$} (-0.5,-0.5);

\draw (-1,7.5) pic {trace};

\node at (2,3.5) {\huge{$\cong$}};

\begin{scope}[shift={(5,0)}]
    \draw[blue, fill=blue!70!white, fill opacity=0.4, text opacity=1] (-1,2.5) rectangle node[align=center]{$\mathcal{C}_{CTC}$} (5,4.5); 

    \draw[red, fill=red!70!white, fill opacity=0.4, text opacity=1] (1.5,5.5) rectangle node[align=center]{$\mathcal{M}_1$} (2.5,6.5); 
    \draw[red, fill=red!70!white, fill opacity=0.4, text opacity=1] (3.5,5.5) rectangle node[align=center]{$\mathcal{M}_3$} (4.5,6.5); 

    \draw[fill=gray!60!white, fill opacity=0.4, text opacity=1] (-0.5,0.5) rectangle node[align=center]{$\rho_0$} (0.5,1.5);
   \draw(0,1.5)--(0,2.5)  node[midway, anchor=west, xshift=2mm] {$S'_0$}; 
      \draw(2,1.5)--(2,2.5)  node[midway, anchor=west, xshift=2mm] {$S_2$}; 
    \draw(4,1.5)--(4,2.5)  node[midway, anchor=west, xshift=2mm] {$S_4$};  

    \draw(0,4.5)--(0,5.5)  node[midway, anchor=west, xshift=2mm] {$S'_5$}; \draw (0,5.5) pic {trace};
    \draw(2,4.5)--(2,5.5)  node[midway, anchor=west, xshift=2mm] {$S_1$};
\draw(4,4.5)--(4,5.5)  node[midway, anchor=west, xshift=2mm] {$S_3$};
\draw(2,6.5)--(2,7); \draw(4,6.5)--(4,7);
\draw(6,1.5)--(6,7); \draw(8,1.5)--(8,7);
\draw (6,1.5) arc (360:180:1); \draw (6,7) arc (0:180:1);
\draw (8,1.5) arc (360:180:3); \draw (8,7) arc (0:180:3);
      
\end{scope}

\end{tikzpicture}
    \caption{The action of a P-CTC assisted comb on maps plugged into its slots can be equivalently viewed in terms of a composition of linear CP maps through additional P-CTCs (one for each slot) as shown on the right. The loops correspond to pre and post selection on appropriate maximally entangled states as shown in figure~\ref{fig:PCTC_map}.}
    \label{fig:pctc_comb_meas}
\end{figure*}

In \cref{fig:pctc_comb_meas}, we considered  a separate operation in each slot of a P-CTC-assisted comb, i.e., the overall external operation being plugged contains a tensor factor corresponding to each slot of the comb, and there are no external memory systems connecting the operations plugged into different slots. 

More generally, one may consider plugging in another P-CTC-assisted comb that fits into all the slots of the original P-CTC-assisted comb, which would allow to model the analogue of ``multi-time measurements''. However, as shown in \cref{fig:CTC_comb_nonprod} and explained in \cref{appendix: PCTC_Combs}, the action of any P-CTC-assisted comb $\mathcal{C}_{CTC}$ on another P-CTC-assisted comb $\mathcal{D}_{CTC}$ (where the ``teeth'' of one comb fit into the slots of the other) can equivalently be described by the action of another P-CTC-assisted comb $\mathcal{C}'_{CTC}$ on a memoriless or product quantum comb  $\mathcal{D}_{\otimes}$ (which is not assisted by any P-CTCs) i.e., the combination of the two P-CTC-assisted combs $\mathcal{C}_{CTC}$ and $\mathcal{D}_{CTC}$ are operationally equivalent to the combination of the P-CTC-assisted comb $\mathcal{C}'_{CTC}$ and a regular memoriless comb $\mathcal{D}_{\otimes}$. 

\paragraph{Time labelling} To bring our discussion of combs closer to the multi-time formalism, we will add time labels to the in/output systems. Specifically, for regular and P-CTC-assisted combs when expressed in the general form as in the left hand side of \cref{fig:comb_sequence} and \cref{fig:CTCcomb_sequence} we will consider time as flowing from bottom to the top of the page such that the systems $S_0$, $S_1$, $S_2$,... are assigned the ordered sequence of times $t_0$, $t_1$, $t_2$,.. with $t_0<t_1<t_2<...$. We will represent time-labelled combs in the representation of the left hand side of \cref{fig:comb_sequence} and \cref{fig:CTCcomb_sequence}, to make explicit this time ordering. Moreover, the order $t_0<t_1<t_2<...$ is the relevant aspect, not the actual labels.

\section{Mapping P-CTC assisted combs to MTS}
\label{sec:PCTC-MTF}

It is useful to begin by formally distinguishing between the two types of multi-time states (MTS) which involve two relevant times: a two-time operator (2TO) vs two-time state (2TS), which were introduced more informally in \cref{fig:CompStandardQM}.

\begin{definition}[2TO and 2TS]
\label{def: 2TO2TS}
    We call a MTS $M$ a 2TO if all backward evolving systems in $M$ occur earlier in time than all of its forward evolving systems. 
       We call an MTS $M$ a 2TS if all backward evolving systems in $M$ occur later in time than all of its forward evolving systems.  
    
\end{definition}

Consider a P-CTC as depicted in \cref{fig:PCTC_map}, which connects an output system (say $A_2$) of some quantum map/channel $\mathcal{E}$ to an isomorphic input system (say $A_1$). We distinguish the in and output ancillary systems here (both labelled $A$ in \cref{fig:PCTC_map}) for clarity in the following equations. The P-CTC (bottom of the figure) involves maximally entangled states between $A_2B$ (post-selection) and $A_1B$ (pre-selection) which can be expressed as $\postS{\Phi^+}{A_2 B} \otimes \preS{\Phi^+}{A_1 B}$, where $B$ is an isomorphic ancillary system that is contracted between the pre- and post-selections. Performing this contraction while disregarding the overall normalization factor (which preserves operational equivalence, \cref{def: op_equiv}), yields a two-time state (2TS) between $A_1$ (evolving forward at time $t_1$) and $A_2$ (evolving backward at time $t_2 > t_1$) where $d$ is the dimension of the system $A_1$ and the isomprphic system $A_2$:
\begin{equation}
\label{eq: ent2TS}
    \Psi = \sum_{i=0}^{d-1} \postS{i}{A_2} \otimes \preS{i}{A_1}
\end{equation}
This is a 2TS since the pre-selection occurs before the post-selection in this scenario. 
Indeed Aharonov et. al. \cite{Aharonov2009} have already noted that a maximally entangled 2TS (\cref{eq: ent2TS}) describes a closed timelike curve from the later in time system $A_2$ to the earlier in time system $A_1$. Thus, the following proposition follows immediately from what is known in the literature.

\begin{proposition}
    A maximally entangled 2TS, represented as  $\Psi$ of \cref{eq: ent2TS} in the pure MTS formalism with $A_2$ later in time than $A_1$ (and as $\eta = \Psi \otimes \Psi^\dagger$ in the general, mixed MTS formalism) is operationally equivalent to a P-CTC linking $A_2$ to $A_1$.
\end{proposition}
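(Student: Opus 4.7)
The plan is to directly compute what the P-CTC description of \cref{fig:PCTC_map} reduces to in the MTS language by carrying out the contraction over the intermediate system $B$, and verify that the result matches $\Psi$ (and $\eta = \Psi \otimes \Psi^\dagger$ in the mixed case) up to an overall constant. Operational equivalence then follows immediately from \cref{def: op_equiv}.

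First, I would note that the P-CTC connecting $A_2$ (later) to $A_1$ (earlier) is implemented by two halves of a maximally entangled state: a pre-selection $\preS{\Phi^+}{A_1 B}$ at time $t_1$ and a post-selection $\postS{\Phi^+}{A_2 B}$ at time $t_2 > t_1$, with the auxiliary system $B$ carrying information between the two. Expanding in the computational basis, these are
\begin{align}
\preS{\Phi^+}{A_1 B} &= \tfrac{1}{\sqrt{d}}\sum_i \preS{i}{A_1}\otimes\preS{i}{B}, \\
\postS{\Phi^+}{A_2 B} &= \tfrac{1}{\sqrt{d}}\sum_j \postS{j}{A_2}\otimes\postS{j}{B}.
\end{align}
In the MTS formalism, the system $B$ appears with opposite directions of evolution in the two objects (forward in the pre-selection, backward in the post-selection), so the composition $\bullet$ contracts it via the natural inner product while leaving $A_1$ and $A_2$ untouched.

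The second step is to carry out this contraction. Applying \cref{def:MTspace} and the composition rule of \cref{app:MTcomposition}, one gets
\begin{align}
\postS{\Phi^+}{A_2 B}\bullet \preS{\Phi^+}{A_1 B}
&= \tfrac{1}{d}\sum_{i,j}\langle j | i\rangle_{B}\, \postS{j}{A_2}\otimes \preS{i}{A_1} \notag \\
&= \tfrac{1}{d}\sum_i \postS{i}{A_2}\otimes \preS{i}{A_1} \notag \\
&= \tfrac{1}{d}\,\Psi,
\end{align}
where $\Psi$ is the 2TS of \cref{eq: ent2TS}. Since the resulting MTS differs from $\Psi$ only by the overall scalar $1/d$, and the ABL-style probability rule \cref{eq:2-timepureprob} is invariant under rescaling of the state, the two objects yield identical outcome probabilities for every fine-grained MT instrument inserted between $t_1$ and $t_2$. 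By \cref{def: op_equiv} they are operationally equivalent.

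For the mixed case, the plan is to repeat the same computation in the daggered picture of \cref{sec:mts:mathframework}: the density-vector form of the P-CTC contraction produces $\tfrac{1}{d^2}\,\Psi\otimes\Psi^\dagger$, which is again $\eta$ up to an overall positive constant, and hence operationally equivalent by \cref{def: op_equiv} applied to mixed MT states (and the probability rule \cref{eq:2-timemixedprob}). The only subtlety worth checking carefully is that the $B$-contraction in the daggered copy uses the dual spaces on $B^\dagger$ without accidentally coupling to the primary $B$ system; this follows from the rule (stated after the introduction of daggered spaces) that composition only pairs normal--normal or daggered--daggered factors. I do not anticipate any real obstacle here: the proposition is essentially a bookkeeping statement that translates the pre/post-selection on $\ket{\Phi^+}$ used in the P-CTC teleportation construction of \cref{ssec:ctcmap} into the language of composition of MTS, with the intermediate ancilla $B$ playing the role analogous to Alice's qubit in the teleportation protocol of \cref{fig:Teleportation}.
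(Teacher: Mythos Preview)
Your proposal is correct and follows essentially the same route as the paper: the paper's argument is the paragraph immediately preceding the proposition, which identifies the P-CTC with $\postS{\Phi^+}{A_2 B}\otimes\preS{\Phi^+}{A_1 B}$, contracts over $B$, and invokes \cref{def: op_equiv} to discard the normalization factor. You have simply written out the contraction explicitly and additionally spelled out the mixed case, which the paper leaves implicit.
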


More generally, this immediately yields the next proposition, because by definition, P-CTC assisted combs are formed by composing quantum state preparations, and quantum channels with P-CTCs, each of which has an operationally equivalent counterpart in the MTS formalism as discussed in \cref{sec:MTF} and the above. 
\begin{proposition}
\label{prop: CTC_to_MTS}
Every time-labelled P-CTC assisted comb $\mathcal{C}^{\mathbf{t}}_{\text{CTC}}$ (\cref{def: timelab_PCTC_comb}) is operationally equivalent to an MTS defined on the same systems and associated with the same time labels, with each input/output system of $\mathcal{C}^{\mathbf{t}}_{\text{CTC}}$ corresponding to a backward/forward evolving system in the MTS description. 

\end{proposition}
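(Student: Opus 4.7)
The plan is to proceed constructively by decomposing any time-labelled P-CTC-assisted comb into its primitive ingredients and mapping each to an operationally equivalent MTS object, then showing that the MTS composition operation $\bullet$ correctly reassembles the overall comb.

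First, I would recall from \cref{subsec:comb} that any time-labelled P-CTC-assisted comb is built via parallel and sequential composition of three primitive types of object: (i) state preparations at specified times, (ii) fixed CP maps with Kraus decompositions acting between specified in/output times, and (iii) P-CTCs connecting some later output to some earlier input. Next, I would associate each primitive with its MTS counterpart as reviewed in \cref{sec:MTF}: a mixed state $\rho=\sum_r p_r\proj{\psi_r}{\psi_r}$ prepared at $t_i$ becomes the MT density vector $\sum_r p_r \preS{\psi_r}{S_i}\otimes\postS{\psi_r}{S_i^\dagger}$; a channel $\mathcal{E}$ with Kraus operators $\{K_\chi\}_\chi$ from input $S_{\text{in}}$ at $t_i$ to output $S_{\text{out}}$ at $t_j>t_i$ becomes the MT channel density vector $\sum_\chi \mathbf{K}_\chi\otimes\mathbf{K}_\chi^\dagger$, with $\mathbf{K}_\chi=\sum_{lm}(K_\chi)_{lm}\preS{l}{S_{\text{out}}}\otimes\postS{m}{S_{\text{in}}}$ lying in $\preH{S_{\text{out}}}\otimes\postH{S_{\text{in}}}$ as in \cref{eq:2-timeKraus}; and a P-CTC from later output $S_j$ to earlier input $S_i$ becomes, by the preceding proposition, the mixed MTS $\Psi\otimes\Psi^\dagger$ with $\Psi=\sum_k \postS{k}{S_j}\otimes\preS{k}{S_i}$.

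Having mapped the primitives, I would then argue that the entire comb is operationally equivalent to the MTS composition (via $\bullet$) of these MTS primitives. The key observation is that every internal wire of the comb — whether it is the memory wire between successive teeth, an internal line of a fixed sub-circuit, or a line entering the $A$-system of a P-CTC — corresponds to a pair of dual MTS Hilbert spaces: one forward-evolving leg produced by one primitive and one backward-evolving leg consumed by the adjacent primitive. By \cref{def:MTspace} and the definition of $\bullet$, such pairs contract and produce precisely the inner products required by standard circuit composition. The Hilbert spaces that remain open after all such internal contractions are exactly the comb's external in/output systems: outputs, which are sent to external operations, remain forward-evolving (\emph{pre-selections} from the comb's point of view), while inputs, which are received from external operations, remain backward-evolving (\emph{post-selections}). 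The time labels carry over unchanged. Scalar prefactors from Kraus decompositions and from the normalization of the maximally entangled 2TS can be discarded, since both notions of operational equivalence (\cref{def: op_equiv} and \cref{def: equiv_PCTC}) identify objects differing by an overall constant.

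The main obstacle is not conceptual but combinatorial: one has to keep careful track of the routing of memory wires around P-CTC loops and through parallel composition in the general comb picture of \cref{fig:CTCcomb_sequence}, and verify that for every such wire the correct forward/backward MTS pair is present for $\bullet$ to contract. Because each constituent has already been endowed with an operationally equivalent MTS image, and because the composition rule $\bullet$ is engineered precisely to mimic circuit composition, this verification reduces to standard wire-bookkeeping and no additional nontrivial step is required to conclude the proposition.
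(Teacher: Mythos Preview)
Your proposal is correct and follows essentially the same approach as the paper, which justifies the proposition in one sentence by noting that P-CTC-assisted combs are by definition built from state preparations, quantum channels, and P-CTCs, each of which has an operationally equivalent MTS counterpart as established in \cref{sec:MTF} and the preceding proposition. Your version simply unpacks this into explicit primitive-by-primitive mappings and a more careful account of the wire-contraction bookkeeping, but the underlying argument is identical.
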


\section{Mapping MTS to P-CTC assisted combs}\label{sec:Connection}
This section describes the mapping from MTS to operationally equivalent P-CTC assisted objects in three main steps. This is first done for pure multi-time objects in two steps: mapping 2TOs to P-CTC-assisted maps, and mapping MTSs to 2TOs. The conclusions are generalised to mixed MT objects in the last step.
\subsection{Connecting pure 2TOs and P-CTC-assisted maps}\label{subsubsec:2TO-PCTC}

We first consider any operator $C\in \mathcal{H}_{S_1}\otimes \mathcal{H}^{S_2}$, this corresponds to a two time operator (2TO) which is a special MTS having a backward evolving space $\mathcal{H}_{S_1}$ at an earlier time and forward evolving space $\mathcal{H}^{S_2}$ at a later time. Then the question considered in this section is: given any operator $C\in \mathcal{H}_{S_1}\otimes \mathcal{H}^{S_2}$, does there exist a P-CTC-assisted map that implements an operator $C_{CTC}$ operationally equivalent to $C$?

For simplicity, we first consider $S_1$ and $S_2$ to be isomorphic to a $d$-dimensional quantum system, and denote both by $S$. We begin by observing that an arbitrary operator $C$ on a d-dimensional system $S$, i.e. on qudits, can be written as
\begin{equation}
    C = r \sum_{i=0}^{d-1} a_i\proj{\psi_i}{i}_{S}
    \label{eq:C_arbitrary}
\end{equation}
where $r\in\mathbb{C}$, and for all $i$, $a_i\in \mathbb{R}$ and $\left|a_i\right|\in[0,1]$, each $\ket{\psi_i}\in\mathbb{C}^d$ is a normalized state with $\braket{\psi_i|\psi_i}=1$, and $\{\ket{i}\}_{i=0}^{d-1}$ is the computational basis in $d$ dimensions. 
This is because, writing out an arbitrary such operator $C$ in the computational basis, we have $C=\sum_{i=0}^{d-1}\sum_{j=0}^{d-1} c_{ij} \ket{j}\bra{i}$ for some complex coefficients $c_{ij}$. We can then define for every $i$,  $\sum_j c_{ij} \ket{j}:= b'_i \ket{\psi'_i}$ where $b'_i\in \mathbb{C}_i$ and $\braket{\psi'_i|\psi'_i}=1$. Expressing $b'_i$ in the polar form and absorbing the complex phase into $\ket{\psi'_i}$ gives $b'_i \ket{\psi'_i}=b_i \ket{\psi_i}$ where $b_i$ is a non-negative real number and $\braket{\psi_i|\psi_i}=1$. Therefore $C=\sum_i b_i\ket{\psi_i}\bra{i}$. Let $r$ denote the largest such $b_i$, then $a_i:=\frac{b_i}{r}\in [0,1]$ and we obtain \cref{eq:C_arbitrary}.

The circuit of \cref{fig:GeneralCircuit} adapted from \cite{Brun2012} implements the following operator that is operationally equivalent (\cref{def: op_equiv}) to $C$ of \cref{eq:C_arbitrary},
\begin{equation}
    C_{CTC} = \sum_i a_i\proj{\psi_i}{i}_{S}
    \label{eq:CCTC_arbitrary},
\end{equation}
if there exists a set $\{C_i\}_{i=1}^{d-1}$ of operators acting upon $S$ such that for each $i\in\{0,...,d-1\}$
\begin{equation}
    C_i\ket{i}_S = a_i \ket{\psi_i}_S\label{eq:Cond1}.
\end{equation}

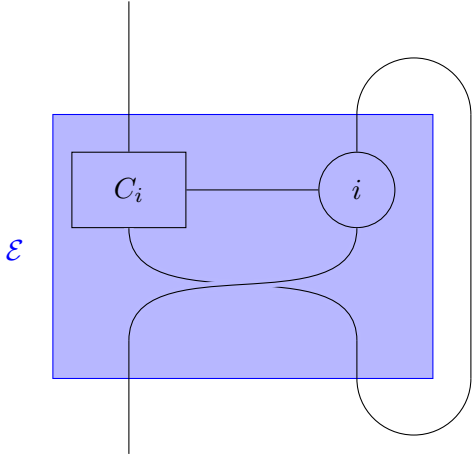
\begin{figure}[ht!]
    \centering
    \scalebox{1}{
\begin{tikzpicture}

    \draw[blue, fill=blue!70!white, fill opacity=0.4] (-1,-2.5) -- (4, -2.5) -- (4, 1) -- (-1, 1) -- cycle;
    
    \node[rectangle, draw, minimum width = 1.5cm, minimum height=1cm] (c) at (0,0) {$C_i$};
    \node[circle, draw, minimum size=1cm] (d) at (3, 0) {$i$};

    \draw (c.east) -- (d.west);

    \node[rectangle] (e) at (0,-2) {};
    \node[rectangle] (f) at (3,-2) {};

    \draw[draw=blue!28!white,double=black,double distance=\pgflinewidth,ultra thick] (f.center) to[out=90, in=-90] (c.south);
    \draw [draw=blue!28!white,double=black,double distance=\pgflinewidth,ultra thick](e.center) to[out=90,in=-90] (d.south);

    \draw (c.north) -- (0, 2.5);
    \draw (e.center) -- (0, -3.5);
    \draw (f.center) -- (3,-2.5);
    \draw (d.north) -- (3,1);

    \draw (4.5, -2.5) -- (4.5, 1);
    \draw (3,-2.5) arc (180:360:0.75);
    \draw (3, 1) arc (180:0:0.75);

    \node[blue] at (-1.5, -0.8) {$\mathcal{E}$};

    
\end{tikzpicture}
    }
    \caption{The CP map $\mathcal{E}$ that can implement any P-CTC-assisted map $C_{CTC}$ of the form of \cref{eq:CCTC_arbitrary} given a set of operators $\{C_i\}_{i=0}^{d-1}$ of the form of \cref{eq:Cond1}. Here $\mathcal{E}$ is formed by a sequential composition of a SWAP operation (a unitary), followed by a controlled operation (generically, a CP map): $ \sum_{i=0}^{d-1} \left(C_i\right)_S \otimes \pure{i}_{A}$. }
    \label{fig:GeneralCircuit}
\end{figure}

We now show that it is always possible to construct such a set of operators $\{C_i\}_{i=0}^{d-1}$ on $S$, using P-CTCs. First, we show through construction that each such operator $C_i$ on $S$ can be obtained by assisting a unitary $U_i$ on $S$ and a qubit $Q$ by a single P-CTC on $Q$. Plugging this into \cref{fig:GeneralCircuit} would give a construction for our arbitrary 2TO $C$ (operationally equivalent to \cref{eq:CCTC_arbitrary}) in terms of two P-CTCs, of $d$ and 2 dimensions respectively. We then build on this to prove the existence of such a construction involving only one $d$-dimensional P-CTC.

\paragraph{Explicit construction of a circuit assisted by two P-CTCs}

Consider a qubit $Q$ and a unitary operation on it defined as follows, for a given choice of $i\in\{0,...,d-1\}$,

\begin{equation}
    V_i = \begin{pmatrix}
    a_i & -b_i\\
    b_i & a_i
    \end{pmatrix},
    \label{eq:Vi_bis}
\end{equation}
where $b_i = \sqrt{1-a_i^2}$. We extend this to a controlled operation that is a unitary on $S$ and $Q$,

\begin{equation}
\label{eq: ctrl_Ci}
  C_{V_i}:= \pure{i}_S\otimes (V_i)_Q + \sum_{j\neq i}\pure{j}_S\otimes\id_Q 
\end{equation}

Next, consider a unitary $W_i$ on $S$ which acts as follows
\begin{equation}
    \label{eq: Wi}
    W_i\ket{i}_S=\ket{\psi_i}_S
\end{equation}
This is a unitary since $\ket{\psi_i}$ is a normalised state by construction. Then taking the unitary 
\begin{equation}
\label{eq: Ui}
    U_i:=(W_i\otimes \id_Q)(C_{V_i})
\end{equation}
on $S$ and $Q$ formed by applying $W_i$ on $S$ after  $C_{V_i}$, and assisting it with a P-CTC on the qubit $Q$ yields precisely the operator $C_i$ as in \cref{eq:Cond1} (see \cref{fig:Ci}), as the P-CTC together with the controlled operation allows to extract the factor $a_i$ (via the partial trace on the CTC system, c.f. \cref{eq: partialtrace_CTC}) in addition to the state transformation given by $W_i$. An explicit demonstration of this can be found in \hyperref[subsec:DetailsExplConstr]{appendix ~\ref*{subsec:DetailsExplConstr}}. Each such $C_i$ can be constructed analogously, and plugging in the construction of the $C_i$ operators (\cref{fig:Ci}) into the construction of $C_{CTC}$ from these operators (\cref{fig:GeneralCircuit}) we obtain \cref{fig:OverallCircuit}. This shows that an arbitrary 2TO $C$ on the qudit $S$ is operationally equivalent to a P-CTC assisted circuit (\cref{fig:OverallCircuit}) assisted by one $d$ dimensional and one 2 dimensional P-CTC. 

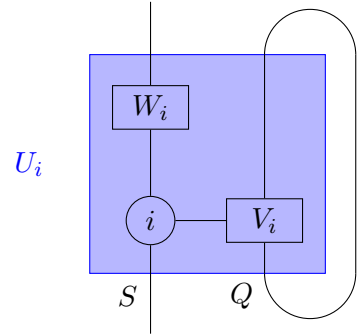
\begin{figure}[ht!]
    \centering
    \scalebox{1}{\begin{tikzpicture}

    \draw[blue, fill=blue!70!white, fill opacity=0.4] (-0.8, -0.7) -- (2.3,-0.7) -- (2.3, 2.2) -- (-0.8, 2.2) -- cycle;

    \node[circle, draw] (a) at (0,0){$i$};
    \node[rectangle, draw, minimum width=1cm] (b) at (1.5,0){$V_i$};
    \node[rectangle, draw, minimum width=1cm] (c) at (0,1.5){$W_i$};

    \draw (0,-1.5) -- (a.south);
    \draw (1.5,-0.7) -- (b.south);
    \draw (b.north) -- (1.5, 2.2);
    \draw (a.north) -- (c.south);
    \draw (c.north) -- (0, 2.9);
    \draw (a.east) -- (b.west);
    \draw (2.7, -0.7) -- (2.7, 2.2);

    \draw (1.5,-0.7) arc (180:360:0.6);
    \draw (1.5, 2.2) arc (180:0:0.6);

    \node[blue] at (-1.6, 0.75) {$U_i$};

    \node[rectangle] at (-0.3,-1) {$S$};
    \node[rectangle] at (1.2, -1) {$Q$};
    
\end{tikzpicture}}
    \caption{P-CTC-assisted circuit that implements a general operator $C_i$ as in \cref{eq:Cond1}. The bottom part depicts the controlled operation $C_{V_i}$ of \cref{eq: ctrl_Ci} and the following operation $W_i$ is as in \cref{eq: Wi}. Each of these is a unitary, and hence their composition $U_i$ is unitary.}
    \label{fig:Ci}
\end{figure}

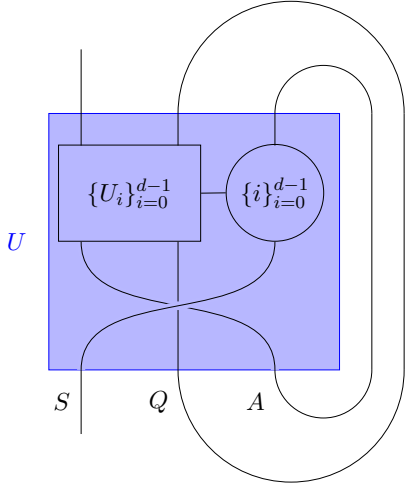
\begin{figure}[ht!]
    \centering
    \scalebox{.85}{\begin{tikzpicture}

    \draw[blue, fill=blue!70!white, fill opacity=0.4] (-0.5, 0) -- (4, 0) -- (4, 4) -- (-0.5, 4) -- cycle;

    \draw (1.5,0) -- (1.5,2);
    \draw[draw=blue!28!white,double=black,double distance=\pgflinewidth,ultra thick] (3, 0) to[out=90, in=-90] (0,2);
    \draw [draw=blue!28!white,double=black,double distance=\pgflinewidth,ultra thick](0, 0) to[out=90,in=-90] (3, 2);

    \node[rectangle, draw, minimum width = 2.2cm, minimum height=1.5cm, anchor=south] (c) at (0.75,2) {$\{U_i\}_{i=0}^{d-1}$};
    \node[circle, draw, anchor=south, minimum size=1cm] (d) at (3, 2) {$\{i\}_{i=0}^{d-1}$};

    \draw (c.east) -- (d.west);
    \draw (1.5, 3.5) -- (1.5, 4);
    \draw (0, 3.5) -- (0, 5);
    \draw (0,0) -- (0,-1);
    \draw (3,3.5) -- (3,4);

    \draw (4.5, 0) -- (4.5, 4);
    \draw (3,0) arc (180:360:0.75);
    \draw (3, 4) arc (180:0:0.75);

    \draw (5, 0) -- (5, 4);
    \draw (1.5,0) arc (180:360:1.75);
    \draw (1.5, 4) arc (180:0:1.75);

    \node[blue] at (-1, 2) {$U$};

    \node[rectangle] at (-0.3,-0.5) {$S$};
    \node[rectangle] at (1.2, -0.5) {$Q$};
    \node[rectangle] at (2.7, -0.5) {$A$};
    
\end{tikzpicture}}
    \caption{Overall circuit that implements $C_{CTC}=\sum_i a_i\proj{\psi_i}{i}$ (operationally equivalent to an arbitrary 2TO) as a unitary $U$ assisted by two P-CTCs, of dimensions $d$ (on $A$) and 2 (on $Q$). The controlled operation depicted here is of the form: $\sum_{i=0}^{d-1}\pure{i}\otimes U_i$, where $U_i$ is featured in Figure \ref{fig:Ci}.}
    \label{fig:OverallCircuit}
\end{figure}

\paragraph{Existence of a circuit assisted by a single P-CTC}
If all the coefficients $a_i$ in equations \ref{eq:C_arbitrary} and \ref{eq:CCTC_arbitrary} are the same, which means that they can be factored out of the sum, the 2-dimensional P-CTC that is used to extract those factors is not needed anymore. This is because such overall factors preserve operational equivalence. It is possible to find a basis in which all the coefficients are the same, as stated by the following proposition (c.f. \cref{subsec:ExistenceBasis} for the proof and for details).
\begin{restatable}[]{proposition}{ExistenceBasis}
\label{prop: existencebasis}
    Let $C = r\sum_{i=0}^{d-1} a_i\ket{\psi_i}\bra{i}$ with $r\in\mathbb{C}$, $\left|a_i\right|\in[0,1]$ , $\ket{\psi_i}$ are normalized, i.e. $\braket{\psi_i|\psi_i}=1$, and $\ket{i}$ is the d-dimensional computational basis. Then there exists an orthonormal basis $\{\ket{\phi_i}\}_{i=0}^{d-1}$ such that $C=r'\sum_{i=0}^{d-1}\ket{\psi'_i}\bra{\phi_i}$, where $\ket{\psi'_i}$ are normalized $\braket{\psi'_i|\psi'_i}=1$.
\end{restatable}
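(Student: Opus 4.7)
The plan is to reduce the claim to finding an orthonormal basis $\{\ket{\phi_i}\}$ of $\mathbb{C}^d$ in which the positive semidefinite operator $A := C^\dagger C$ has uniform diagonal entries, each equal to $\tr(A)/d$. Indeed, given such a basis, I would set $r'\in \mathbb{C}$ with $|r'|^2 = \tr(A)/d$ and define $\ket{\psi'_i} := C\ket{\phi_i}/r'$. Then $\|\ket{\psi'_i}\|^2 = \bra{\phi_i}A\ket{\phi_i}/|r'|^2 = 1$ by construction, and the claimed decomposition follows from completeness of the new basis:
\begin{equation*}
r'\sum_i \ket{\psi'_i}\bra{\phi_i} = \sum_i C\ket{\phi_i}\bra{\phi_i} = C\sum_i \ket{\phi_i}\bra{\phi_i} = C.
\end{equation*}
So the whole proposition boils down to finding such a ``diagonal-equalizing'' basis for $A$.

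To construct it, I would first diagonalize $A$ as $A = \sum_l \lambda_l \pure{e_l}$ in an orthonormal eigenbasis $\{\ket{e_l}\}$ with eigenvalues $\lambda_l \geq 0$, and then take the discrete Fourier transform of this eigenbasis,
\begin{equation*}
\ket{\phi_k} := \frac{1}{\sqrt{d}}\sum_{l=0}^{d-1} \omega^{kl}\ket{e_l}, \qquad \omega := e^{2\pi i/d}.
\end{equation*}
Unitarity of the DFT gives orthonormality of $\{\ket{\phi_k}\}$, and a direct computation yields $\bra{\phi_k}A\ket{\phi_k} = \frac{1}{d}\sum_l \lambda_l = \tr(A)/d$ independently of $k$, which is exactly what is needed. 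Any complex Hadamard matrix, i.e.\ any unitary whose entries all have modulus $1/\sqrt d$, would work equally well in place of the DFT. Conceptually, this construction is an explicit instance of the Schur--Horn theorem for a uniform target diagonal, but invoking that theorem in full generality is unnecessary here.

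The degenerate case $C = 0$ (equivalently $\tr(A) = 0$) is handled trivially by taking $r' = 0$ together with any orthonormal basis $\{\ket{\phi_i}\}$ and any normalized vectors $\ket{\psi'_i}$: the identity $C = r'\sum_i \ket{\psi'_i}\bra{\phi_i}$ then reduces to $0=0$. I do not foresee a significant obstacle; the only subtlety is the observation that conjugating a positive operator by a complex Hadamard always averages its eigenvalues onto the diagonal, and from there the rest of the argument is essentially algebraic bookkeeping.
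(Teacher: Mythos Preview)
Your proof is correct and takes a genuinely different, more streamlined route than the paper's. The paper proceeds iteratively: it finds the indices with maximal and minimal $|a_i|^2$, rotates the corresponding pair of computational basis vectors through an angle $\theta$, and invokes the intermediate value theorem to show some $\theta^*$ makes one of the new coefficients exactly $1/\sqrt{d}$; repeating this $d-2$ times fixes all coefficients. This is essentially a sequence of Givens rotations combined with a continuity argument. Your approach bypasses the iteration entirely by observing that the problem reduces to equalising the diagonal of $A=C^\dagger C$ in some orthonormal basis, and that a single DFT (or any complex Hadamard) applied to the eigenbasis of $A$ does this in one stroke. What your argument buys is brevity and an explicit closed-form basis; what the paper's argument buys is a construction using only real $2\times2$ rotations, which may be closer in spirit to their later circuit-level use of the basis, though for the bare existence statement your method is cleaner.
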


Given the above proposition, we know that an arbitrary 2TO $C$ can be expressed in the form $C=r'\sum_{i=0}^{d-1}\ket{\psi'_i}\bra{\phi_i}$ for an orthonormal basis $\{\ket{\phi_i}\}_{i=0}^{d-1}$ and normalised states $\ket{\psi'_i}$. Thus an operator 
\begin{equation}
\label{eq: CTC_arbit2}
   C_{CTC}=\sum_{i=0}^{d-1}\ket{\psi'_i}\bra{\phi_i} 
\end{equation} would be operationally equivalent to it. We can construct such a $C_{CTC}$ as a P-CTC assisted map following the same arguments as we did for the operator in \cref{eq:CCTC_arbitrary}. However, in place of \cref{eq:Cond1} where we used the set of operators $\{C_i\}_{i=0}^{d-1}$, we now  use a set $\{W'_i\}_{i=0}^{d-1}$ of operators on $S$, that act as
\begin{equation}
\label{eq: Wp_i}
    W'_i\ket{\phi_i}=\ket{\psi'_i}
\end{equation}
Contrary to $C_i$ in \cref{eq:Cond1}, these $W'_i$ are unitaries (analogous to the $W_i$ in \cref{eq: Wi}).

This means that we no longer need the two dimensional P-CTC of \cref{fig:Ci} and can directly construct the operator of \cref{eq: CTC_arbit2} (operationally equivalent to an arbitrary 2TO $C$) by a single $d$-dimensional P-CTC as shown in \cref{fig:SinglePCTC}.

\begin{figure}[ht!]
    \centering
    \scalebox{0.8}{
        \begin{tikzpicture}
    \draw[blue, fill=blue!70!white, fill opacity=0.4] (-1.2, 0) -- (3.2, 0) -- (3.2, 4) -- (-1.2, 4) -- cycle;

    \draw (2, 0) to[out=90, in=-90] (0,2);
    \draw [draw=blue!28!white,double=black,double distance=\pgflinewidth,ultra thick](0, 0) to[out=90,in=-90] (2, 2);
    
    \node[rectangle, draw, minimum height=1.7cm, anchor=south] (c) at (0,2) {$\{W_i'\}_{i=0}^{d-1}$};
    \node[circle, draw, anchor=south, minimum size=1cm] (d) at (2, 2) {$\{\phi_i\}_{i=0}^{d-1}$};

    \draw (c.east) -- (d.west);
    \draw (c.north) -- (0,5);
    \draw (d.north) -- (2,4);
    \draw (0,0) -- (0,-1);

    \draw (3.7, 0) -- (3.7, 4);
    \draw (2,0) arc (180:360:0.85);
    \draw (2, 4) arc (180:0:0.85);

    \node[blue] at (-1.6, 2) {$U$};

    \node[rectangle] at (-0.4, 1.5) {$S$};
    \node[rectangle] at (2.2, 1.5) {$A$};
\end{tikzpicture}
    }
    \caption{P-CTC-assisted circuit that implements, up to operational equivalence, an arbitrary 2TO $C$ on $S$ expressed in the basis such that the coefficients $a_i$ can be factored out. As shown in \cref{prop: existencebasis}, such a basis always exists. The set of $d$ unitaries $\{W'_i\}_{i=0}^{d-1}$ on $S$ is given by \cref{eq: Wp_i}. This gives the illustrated controlled operation on $S$ and $A$ which is of the form, $\sum_{i=0}^{d-1} (W'_i)_S\otimes \ket{\phi_i}\bra{\phi_i}_A$.
The pre- and post-selection on maximally entangled states are also expressed in the new basis, $\Phi^+ = \sum_{i=0}^{d-1} \ket{\phi_i\phi_i}$. However, noticing that such pre and post-selection involved in a P-CTC on $A$ has equivalent effect to a partial trace on $A$ (\cref{eq: partialtrace_CTC}), which is basis independent, one can see that the result would be the same if we expressed these maximally entangled states in the computational basis.\footnote{Another way to see this basis independence is as follows. Let $U$ be the basis change unitary from the computational to the new basis i.e., $\sum_{i=0}^{d-1} \ket{\phi_i\phi_i}=\sum_{i=1}^{d-1} (U\otimes U) \ket{ii}$. Then forming a P-CTC by pre and post-selecting on $\sum_{i=0}^{d-1} \ket{\phi_i\phi_i}$, and using $UU^\dagger=U^\dagger U=\id$ removes the unitaries on the contracted subsystem in the right-most wire of \cref{fig:SinglePCTC}. Further using the transpose property of maximally entangled states, $\sum_{i=1}^{d-1} (U\otimes \id) \ket{ii}=\sum_{i=0}^{d-1}(\id \otimes U^T)\ket{ii}$, the unitaries $U$ and $U^\dagger$ on the $A$ subsystem can also be moved to the right-most wire and multiplied to the identity, thus resulting in a P-CTC in the computational basis, as required. }
}
    \label{fig:SinglePCTC}
\end{figure}
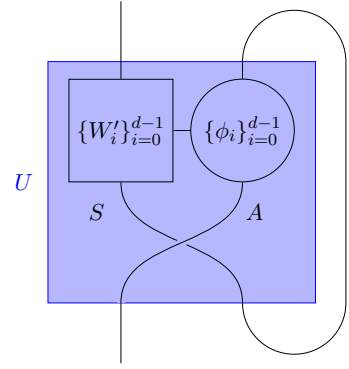

\paragraph{2TOs on multiple systems}
So far, we considered 2TOs where the systems $S_1$ and $S_2$ are the two times were both single $d$-dimensional systems. More generally, we could consider 2TOs on multiple systems, and the above construction would readily extend to these. Consider for instance a 2TO on two systems, where $S_2$ and $S_4$ are at the earlier time while $S_1$ and $S_3$ at the later time:
\begin{equation}
   C = \sum_{i,j,k,l} \alpha_{i,j,k,l} \ket{ki}^{S_3 S_1}\bra{lj}_{S_4 S_2}
    \label{eq:2TO_2Sys}
\end{equation}
The latter can be transformed in a 2TO on one system. Suppose for example that each space is a 2-dimensional Hilbert space, i.e. we have 2 qubits at each of the times. We could then unitarily transform the two qubit computational basis $\{\ket{00},\ket{01},\ket{10},\ket{11}\}$ to the computational basis of a single 4 dimensional system $\{\ket{0},\ket{1},\ket{2},\ket{3}\}$, apply our previous construction and transform back to construct the 2TO $C$ given above (illustrated in \cref{fig:Decoding}). More generally, if the product of the dimensions $d_2d_4$ and $d_1d_3$ differ, then we can choose the maximum of these two and embed the smaller space into the larger one without loss of generality. Thus it follows from the arguments presented thus far that any pure 2TO associated with arbitrary sets of systems at the two times can be recovered through an operationally equivalent P-CTC assisted map, as summarised in the following proposition.

\begin{figure}[ht!]
    \centering
    \scalebox{0.9}{\input{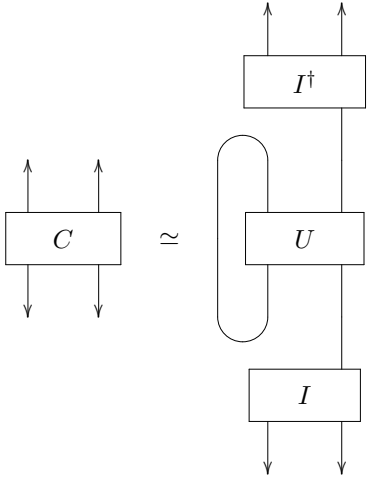}}
    \caption{Any pure 2TO $C$ on $n$ systems of dimensions $d_1$,...,$d_n$ (as forward and as backward evolving spaces)
    can be obtained by
sandwiching our earlier P-CTC assisted construction for arbitrary pure single system 2TOs (e.g., \cref{fig:SinglePCTC}) between encoding $I$ and decoding $I^\dagger$ isometries, where $I$ encodes the states of the $n$ $d_1$,...,$d_n$ dimensional systems into a single $d_1d_2...d_n$-dimensional system and the decoding unitary is its inverse. The dimension of the P-CTC system is therefore also given the product $d_1d_2...d_n$. }

    \label{fig:Decoding}
\end{figure}

\begin{proposition}
\label{lemma: pure2TO_gen}
Any pure 2TO with $n$ backward-evolving state spaces with dimensions $\{d_{i}\}_{i=1}^n$ and $m$ forward-evolving state spaces with dimension $\{\bar{d}_{j}\}_{j=1}^m$, can be obtained in terms of a unitary map assisted by a single P-CTC of dimension $\text{max}\left(\prod_{i=1}^n d_i, \prod_{j=1}^m \bar{d}_j\right)$.
\end{proposition}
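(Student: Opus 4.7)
The plan is to reduce the multi-system case to the single-system case already handled by the earlier construction (leading to Figure~\ref{fig:SinglePCTC}), then package the reduction via isometries as illustrated in Figure~\ref{fig:Decoding}. First, I group the $n$ backward-evolving spaces into a single composite backward-evolving space of dimension $D_{\text{in}}=\prod_{i=1}^n d_i$ and the $m$ forward-evolving spaces into a single composite forward-evolving space of dimension $D_{\text{out}}=\prod_{j=1}^m \bar{d}_j$. Concretely, fix computational bases on each individual space and define encoding isometries $I_{\text{in}}$ and $I_{\text{out}}$ that identify $\bigotimes_i \mathcal{H}_{S_i}$ with $\mathcal{H}_{S_{\text{in}}}$ and $\bigotimes_j \mathcal{H}^{\bar{S}_j}$ with $\mathcal{H}^{\bar{S}_{\text{out}}}$, respectively. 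Under these identifications, the 2TO $C$ on the original systems corresponds to an operator $\tilde{C}\in \mathcal{H}_{S_{\text{in}}}\otimes \mathcal{H}^{\bar{S}_{\text{out}}}$, i.e.\ a single-system 2TO, although in general between two spaces of possibly different dimensions $D_{\text{in}}$ and $D_{\text{out}}$.

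Next I deal with the possibly mismatched dimensions. Let $D=\max(D_{\text{in}},D_{\text{out}})$. I embed the smaller of $\mathcal{H}_{S_{\text{in}}}$ and $\mathcal{H}^{\bar{S}_{\text{out}}}$ into a $D$-dimensional space via an isometric embedding (padding the extra basis vectors with zero coefficients in the expansion of $\tilde{C}$). This yields an operator $\tilde{C}'$ that lives on a single $D$-dimensional space on each side and which equals $\tilde{C}$ after restriction to the image of the embedding. Since padding with zeros leaves the coefficients in the nonzero block unchanged, $\tilde{C}'$ and $\tilde{C}$ are operationally equivalent (as MTS objects they differ only by trivial extension on systems that carry no weight).

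Now I apply the earlier single-system construction (Proposition-level result around \cref{fig:SinglePCTC}): write $\tilde{C}'=r\sum_{k=0}^{D-1}\ket{\psi_k'}\bra{\phi_k}$ in the convenient basis guaranteed by \cref{prop: existencebasis}, and realise the operator $\tilde{C}'_{\text{CTC}}=\sum_{k=0}^{D-1}\ket{\psi_k'}\bra{\phi_k}$ (operationally equivalent to $\tilde{C}'$) as a single unitary on the composite system $S_{\text{in}}/\bar{S}_{\text{out}}$ plus a single ancilla, assisted by one $D$-dimensional P-CTC on the ancilla. Composing this with the encoding isometry $I_{\text{in}}$ on the input and the decoding isometry $I_{\text{out}}^\dagger$ on the output (Figure~\ref{fig:Decoding}) produces a map on the original multi-system spaces that, up to an overall constant, equals $C$, which by \cref{def: op_equiv} is the required operational equivalence. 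Finally, isometries can be dilated to unitaries by introducing fixed ancillary states and fixed effects that can be absorbed into the P-CTC structure or into the single unitary block, so the overall construction is indeed a unitary map assisted by one P-CTC of dimension $D=\max(\prod_i d_i,\prod_j \bar{d}_j)$.

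The main technical points are routine rather than deep: (i) checking that the encoding/decoding isometries preserve operational equivalence when combined with the P-CTC, and (ii) verifying that dilating the isometries to unitaries does not inflate the dimension of the P-CTC beyond $D$. The latter is the only place one has to be mildly careful, because naively one might worry that embedding into a $D$-dimensional space introduces additional ancillas that need their own P-CTC; however, since the embedding is deterministic and acts on ordinary (non-CTC) wires, the extra ancillas can be prepared and discarded in the usual acyclic way without additional pre/post-selection, so the P-CTC count and dimension remain as claimed.
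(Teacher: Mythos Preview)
Your proposal is correct and follows essentially the same approach as the paper: encode the multiple backward- and forward-evolving systems into single composite systems via isometries, embed the smaller into a space of dimension $D=\max(\prod_i d_i,\prod_j \bar d_j)$, apply the single-system construction of \cref{fig:SinglePCTC} (via \cref{prop: existencebasis}), and decode back. The paper presents this more tersely (in the paragraph preceding the proposition and in \cref{fig:Decoding}) and does not spell out the dilation of isometries to unitaries, whereas you do; but the strategy is identical.
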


\subsection{Connecting pure 2TOs and MTS}\label{subsubsec:2TO-MTS}

In the previous section, we showed that arbitrary pure 2TOs can be obtained through P-CTC assisted maps. Here, we will show how arbitrary pure MTS can be transformed to pure 2TOs, possibly using additional P-CTCs. 
Whether or not this transformation requires additional P-CTCs necessitates distinguishing two approaches.

\paragraph{First approach: disregarding the temporal structure}
Consider a multi-time object
\begin{equation}
\label{eq: 2TS_eg}
    \Psi_1 = \sum_{i,j}\alpha_{ij} \bra{j}_{S_1}\otimes\ket{i}^{S_2}
\end{equation} 
If we regard the forward evolving $S_2$ as being earlier in time than the backward evolving $S_1$, then this would correspond to a 2TS, however if we regard $S_1$ as being earlier in time than $S_2$, then this would correspond to a 2TO. Given only the coefficients $\alpha_{ij}$, the systems $S_1$ and $S_2$ and their direction of evolution (i.e., bras vs kets) is insufficient to distinguish a 2TS from 2TO. We additionally require information on the time order. This can generally be captured by explicitly including time labels on the systems (as in \cref{def:MTspace}), or for visual clarity and simplicity, we can express this information in the written order without including the time labels. For instance, in the absence of explicit time labels, when a two-time object is expressed as in \cref{eq: 2TS_eg}, we would regard it as a 2TS. To regard the object as a 2TO, we would denote it as 
\begin{equation}
\label{eq: 2TO_eg}
    \Psi_2 = \sum_{i,j}\alpha_{ij} \ket{i}^{S_2}\otimes\bra{j}_{S_1}.
\end{equation}

Generally, we will say that two multi-time objects (such as $\Psi_1$ and $\Psi_2$ above) are \emph{isomorphic} if they have the same Hilbert space structure and the same coefficients, and can be transformed reversibly into each other solely by permuting the order of the bras and kets, which corresponds to permuting the time order of the systems. For a more formal definition of this concept, see \cref{appendix: partialorder}. Notice that because they carry distinct labels, the bras and kets in the multi-time notation commute, and can be freely permuted in this manner. 

Figure \ref{fig:2TS_Iso} depicts a 2TS which can be deformed to a 2TO as in \cref{fig:2TO_Iso_b}, through the isomorphism mentioned above. That is, if a 2TS $\Psi$ is contracted with a multi-time object $M$, the resulting object is isomorphic to the one obtained by contracting the isomorphic 2TO with a multi-time object isomorphic to $M$. In particular, two isomorphic multi-time objects contracted with isomorphic measurements yield the same probabilities using the ABL rule. Despite being isomorphic, the two objects are still physically distinct, as they have a different temporal order. 

\begin{figure}[ht!]
    \centering
    \begin{subfigure}{.5\columnwidth}
        \centering
        \scalebox{1}{
            \begin{tikzpicture}[
    op/.style={shape= Op, minimum width = \WidthPrePost,font=\scriptsize},
    pre/.style={shape = Pre, minimum width = \WidthPrePost,font=\scriptsize},
    post/.style={shape = Post, minimum width = \WidthPrePost,font=\scriptsize},
    Kop/.style={draw, minimum width=\WidthMeasurement, minimum height=\WidthMeasurement,font=\scriptsize},
    t/.style={font=\scriptsize},
    ghost/.style={minimum width=\WidthMeasurement, minimum height=\WidthMeasurement,font=\scriptsize},
    decoration={snake, segment length=4mm, amplitude=0.5mm}
]
\pgfsetmatrixrowsep{0.5cm}
\pgfsetmatrixcolumnsep{0.3cm}
\pgfmatrix{rectangle}{center}{mymatrix}
{\pgfusepath{}}{\pgfpointorigin}{\let\&=\pgfmatrixnextcell}
{
\\
\node[post](post1) {};   \& \node(post2) {}; \\
\node[ghost](g1){};      \& \node[ghost](g2){}; \\
\node[pre](pre1) {};     \& \node(pre2) {}; \\
\\
}

\draw[->,thick] (post1.south) -- (g1.north) node [midway,inner sep=0 cm] (1) {};
\draw[->,thick] (pre1.north) -- (g1.south) node [midway,inner sep=0 cm] (2) {};
\draw[->,thick] (g2.north) -- (post2.south) node [midway,inner sep=0 cm] (3) {};
\draw[->,thick] (g2.south) -- (pre2.north) node [midway,inner sep=0 cm] (4) {};

\node[t, left =.5 mm of 1](S1){$S_1$};
\node[t, left =.5 mm of 2](S2){$S_2$};
\node[t, right =.5 mm of 3](A2){$A_2$};
\node[t, right =.5 mm of 4](A1){$A_1$};

\begin{scope}[on background layer]
    \fill[lightgray] (g1.north west) rectangle (g2.south east);
\end{scope}

\node[fit=(g1.north west) (g2.south east), inner sep=0cm, minimum height=\WidthMeasurement,label=center:$M$](M){};

\end{tikzpicture}
        }
        \caption{}
        \label{fig:2TS_Iso}
    \end{subfigure}%
    \begin{subfigure}{.5\columnwidth}
        \centering
        \scalebox{1}{
            \begin{tikzpicture}[
    op/.style={shape= Op, minimum width = \WidthPrePost,font=\scriptsize},
    pre/.style={shape = Pre, minimum width = \WidthPrePost,font=\scriptsize},
    post/.style={shape = Post, minimum width = \WidthPrePost,font=\scriptsize},
    Kop/.style={draw, minimum width=\WidthMeasurement, minimum height=\WidthMeasurement,font=\scriptsize},
    t/.style={font=\scriptsize},
    ghost/.style={minimum width=\WidthMeasurement, minimum height=\WidthMeasurement},
    decoration={snake, segment length=4mm, amplitude=0.5mm}
]
\pgfsetmatrixrowsep{0.5cm}
\pgfsetmatrixcolumnsep{0.2cm}
\pgfmatrix{rectangle}{center}{mymatrix}
{\pgfusepath{}}{\pgfpointorigin}{\let\&=\pgfmatrixnextcell}
{
\node[ghost](g1){};    \& \\
\node[ghost](g2){};    \& \node[ghost](g3){};\\
\node[op](o1){};        \& \node[ghost](g4){$M$}; \\
\node[ghost](g5){};      \& \node[ghost](g6){}; \\
\node[ghost](g7){};     \&  \\
}

\draw[->,thick] (g2.north) -- (g1.south) node [midway,inner sep=0 cm] (1) {};
\draw[->,thick] (o1.north) -- (g2.south) node [midway,inner sep=0 cm] (2) {};
\draw[->,thick] (o1.south) -- (g5.north) node [midway,inner sep=0 cm] (3) {};
\draw[->,thick] (g5.south) -- (g7.north) node [midway,inner sep=0 cm] (4) {};

\node[t, left =.5 mm of 1](A2){$A_2$};
\node[t, left =.5 mm of 2](S2){$S_2$};
\node[t, left =.5 mm of 3](S1){$S_1$};
\node[t, left =.5 mm of 4](A1){$A_1$};

\begin{scope}[on background layer]
    \fill[lightgray] (g2.north west) rectangle (g3.south east);
    \fill[lightgray] (g5.north west) rectangle (g6.south east);
    \fill[lightgray] (g3.north west) rectangle (g6.south east);
\end{scope}

\end{tikzpicture}
        }
        \caption{}
        \label{fig:2TO_Iso_b}
    \end{subfigure}
    \caption{The 2TS in (a) and 2TO in (b) defined with respect to the tensor product Hilbert space $\mathcal{H}^{S_2}\otimes\mathcal{H}_{S_1}$ are isomorphic if they have the same coefficients. In this case, when contracted with the same object $M$ (also deformed by an appropriate isomorphism) with tensor product Hilbert space $\mathcal{H}_{S_2}\otimes\mathcal{H}^{S_1}\otimes\mathcal{H}_{A_1}\otimes\mathcal{H}^{A_2}$, they both yield the same resulting object on $\mathcal{H}_{A_1}\otimes\mathcal{H}^{A_2}$.}
    \label{fig:EqOpEx}
\end{figure}
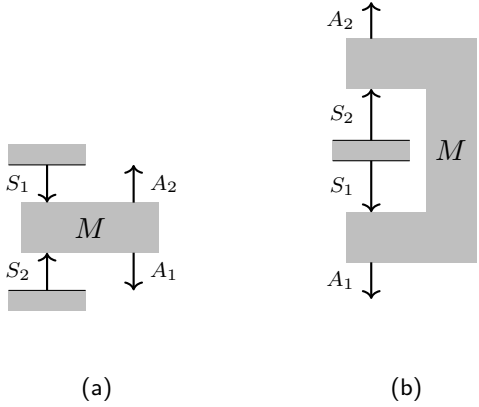
In the same way, any MTS is isomorphic to some 2TO with the same Hilbert space structure and the same coefficients, but where all backward evolving states of the MTS are permuted to be at the earlier time and all forward evolving at the later time. As long as all contractions of the MTS with other operators (e.g., measurements) are also deformed through the same isomorphism, additional P-CTCs would not be required. The 2TO thus obtained can then be related to an operationally equivalent P-CTC-assisted map using the results of \cref{subsubsec:2TO-PCTC}. 

One may however question the physicality of the deformations depicted in \cref{fig:EqOpEx}, despite the mathematical isomorphism. Indeed it conflicts with physical intuitions how one may move a forward evolving system such as $S_2$ in \cref{fig:2TO_Iso_b} backwards in time to obtain \cref{fig:2TS_Iso}, without using a causality violating object like a CTC. In the next paragraph we consider how an MTS can be mapped to a 2TO without trivially collapsing its temporal structure via a mathematical isomorphism, but instead considering how this might be explicitly realized using CTCs.

\paragraph{Second approach: preserving the temporal structure}

Consider a 2TO such $\Phi_2$ of \cref{eq: 2TO_eg}. This can be transformed to an isomorphic 2TS such as $\Psi_1$ of \cref{eq: 2TS_eg} using a single P-CTC as shown in \cref{fig:Relation_2TS_2TO}.

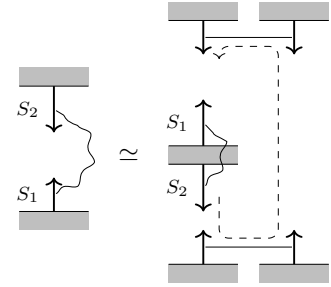
\begin{figure}[ht!]
    \centering
    \scalebox{0.9}{\begin{tikzpicture}[
    op/.style={shape= Op, minimum width = \WidthPrePost,font=\scriptsize},
    pre/.style={shape = Pre, minimum width = \WidthPrePost,font=\scriptsize},
    post/.style={shape = Post, minimum width = \WidthPrePost,font=\scriptsize},
    Kop/.style={draw, minimum width=\WidthMeasurement, minimum height=\WidthMeasurement,font=\scriptsize},
    t/.style={font=\scriptsize},
    ghost/.style={minimum width=\WidthMeasurement, minimum height=\WidthMeasurement,font=\scriptsize},
    decoration={snake, segment length=4mm, amplitude=0.5mm}
]
\pgfsetmatrixrowsep{0.5cm}
\pgfsetmatrixcolumnsep{0.3cm}
\pgfmatrix{rectangle}{center}{mymatrix}
{\pgfusepath{}}{\pgfpointorigin}{\let\&=\pgfmatrixnextcell}
{
\&\&\node[post](post2){};     \& \node[post](post3){};  \\
\node[post](post1){};\&\&\node[ghost](g21){};\& \node[rectangle,minimum size =\WidthMeasurement](g22){};\\
\node[ghost](Kop1){};\&\node{$\simeq$};\&\node[op](op2){}; \&  \\
\node[pre](pre1){};\&\&\node[rectangle](g41){};     \& \node[rectangle](g42){};  \\
\&\&\node[pre](pre2) {};    \& \node[pre](pre3) {}; \\
}

\draw[->,thick] (post2.south) -- (g21.north) node [midway,inner sep=0 cm] (21) {};
\draw[->,thick] (op2.north) -- (g21.south) node [midway,inner sep=0 cm] (31) {};
\draw[->,thick] (op2.south) -- (g41.north) node [midway,inner sep=0 cm] (41) {};
\draw[->,thick] (pre2.north) -- (g41.south) node [midway,inner sep=0 cm] (51) {};
\draw[->,thick] (post3.south) -- (g22.north) node [midway,inner sep=0 cm] (22) {};
\draw[->,thick] (pre3.north) -- (g42.south) node [midway,inner sep=0 cm] (52) {};
\draw[->,thick] (post1.south) -- (Kop1.north) node [midway,inner sep=0 cm] (00) {};
\draw[->,thick] (pre1.north) -- (Kop1.south) node [midway,inner sep=0 cm] (20) {};

\draw[-] (51) -- (52);
\draw[-] (21) -- (22);
\draw[decorate] (20) .. controls ($(20)+(0.7,0.3)$) and  ($(00)+(0.5,-0.3)$).. (00);
\draw[decorate] (41) .. controls ($(41)+(0.3,0.3)$) and  ($(31)+(0.3,-0.3)$).. (31);

\node[t, left=.5 mm of 31](A12){$S_1$};
\node[t, left=.5 mm of 41](A22){$S_2$};
\node[t, left=.5 mm of 20](s2){$S_1$};
\node[t, left=.5 mm of 00](s3){$S_2$};

\coordinate[below right=1mm and 2mm of 41](start){};
\coordinate[above right=1mm and 2mm of 51](c1){};
\coordinate[above left=1mm and 2mm of 52](c2){};
\coordinate[below left=1mm and 2mm of 22](c3){};
\coordinate[below right=1mm and 2mm of 21](c4){};
\coordinate[below right=2mm and 2mm of 21](finish){};

\draw[-<,dashed, rounded corners] (start) -- (c1) -- (c2) -- (c3) -- (c4) -- (finish);

\end{tikzpicture}}
    \caption{The backward-evolving state $S_2$ of the 2TO is teleported to the future using a P-CTC with an open end, to transform the 2TO into an isomorphic 2TS. }
    \label{fig:Relation_2TS_2TO}
\end{figure}

At the level of the circuit representation, a 2TO $C$ is transformed to an isomorphic 2TS-like object 
by bending the in and output wires appropriately, as shown in \cref{fig:Cir_2TS_2TO_U}. Such circuit representations are common in compositional frameworks for operational theories, such as operational probabilistic theories and process theories \cite{Chiribella2010,dAriano2017,Coecke2018,coecke_kissinger_2017}.

\begin{figure}[ht!]
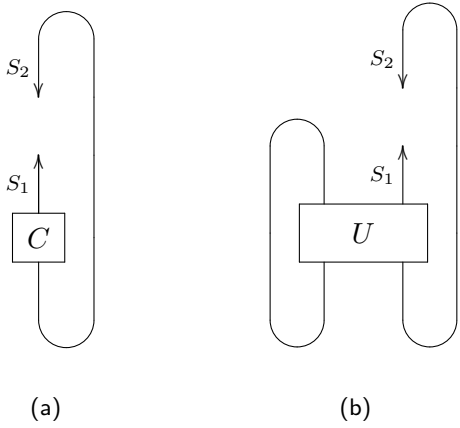

    \centering
    \begin{subfigure}{.5\columnwidth}
        \centering
        \scalebox{1}{
            \input{4-Connection/Figs/Cir_2TS_2TO}
        }
        \caption{}
        \label{fig:Cir_2TS_2TO}
    \end{subfigure}%
    \begin{subfigure}{.5\columnwidth}
        \centering
        \scalebox{1}{
            \input{4-Connection/Figs/Cir_2TS_2TO_U}
        }
        \caption{}
        \label{fig:Cir_2TS_2TO_U}
    \end{subfigure}
    \caption{(a) The construction of a 2TS from a 2TO in Figure \ref{fig:Relation_2TS_2TO} can be represented in a operational manner, in which the 2TO is represented by the operation $C$ and the wire $S_2$ (which was an open backward evolving system) is bent adequately. (b) Using previous results (\cref{subsubsec:2TO-PCTC}), any 2TO $C$ as in (a) can be obtained in an operationally equivalent manner, using a P-CTC assisted unitary map. This illustrates that the 2TS is operationally equivalent to a time-labelled P-CTC assisted comb, with $S_2$ at a later time than $S_1$. Note that this figure (and similar figures in our paper) is a visual representation of MTS in the style of operational quantum circuits. However it is strictly speaking not a P-CTC assisted map, circuit or comb as it still involves forward and backward evolving spaces as opposed to in an outputs.   }
    \label{fig:Cir_2TS_2TO_}
\end{figure}

The above ideas for two-time states, can be generalized to the multi-time case. To obtain an MTS from a 2TO (such that the two are isomorphic), some of the backward-evolving spaces of the 2TO can be teleported to the appropriate times, using P-CTCs. The number of those depends on the number of forward- and backward-evolving spaces in the MTS. Let $\mathcal{B}$ the set of all backward-evolving spaces and $\mathcal{F}$ the set of all forward-evolving spaces of a given MTS. Consider also the following disjoints sets, where before and after are relative to the time order of the MTS.
\begin{enumerate}
    \item $\mathcal{B}_1$: The set of backward-evolving spaces that are ordered before all the forward-evolving spaces.
    \item $\mathcal{B}_2$: The set of backward-evolving spaces that are ordered after at least one forward-evolving space.
    \item $\mathcal{F}_1$: The set of forward-evolving spaces that are ordered before at least one backward-evolving space.
    \item $\mathcal{F}_2$: The set of forward-evolving spaces that are ordered after all the backward-evolving spaces.
\end{enumerate}
The sets are such that $\mathcal{B} = \mathcal{B}_1\cup \mathcal{B}_2$ and $\mathcal{F} = \mathcal{F}_1\cup \mathcal{F}_2$. We denote by $\left|\cdot\right|$ the cardinality of a set, $d_{\mathcal{B}}$ to be the product of dimensions of all spaces in the set $\mathcal{B}$ and similarly for the remaining sets. 
The 2TS in \cref{fig:Cir_2TS_2TO} is one where $|\mathcal{B}_2|=|\mathcal{B}|=|\mathcal{F}_1|=|\mathcal{F}|=1$. An MTS with a different cardinality of these sets is depicted in \cref{fig:1F3B}.

Using these definitions, we have the following general result for MTS, a proof of which can be found in \cref{appendix: proof_2TOtoMTS}. Figures \ref{fig:IsoMTS} and \ref{fig:IsoMTS_Cir} illustrate the construction entailed in this result for the case of a 4-time state.

\begin{restatable}[]{proposition}{MtwoTO}
\label{prop: MTS_2TO}
   For any given MTS associated with sets of forward and backward evolving spaces $\mathcal{B}_1$, $\mathcal{B}_2$, $\mathcal{F}_1$, $\mathcal{F}_2$ there exists a 2TO from which it can be obtained either using $\left|\mathcal{B}_2\right|$ P-CTCs of dimensions $\{d_{S}\}_{S\in \mathcal{B}_2}$ or using $\left|\mathcal{F}_1\right|$ P-CTCs of dimensions $\{d_{S}\}_{S\in \mathcal{F}_1}$. This 2TO has, in total, $|\mathcal{B}|$ backward evolving spaces of dimensions $\{d_S\}_{S\in \mathcal{B}}$ and $|\mathcal{F}|$ forward-evolving spaces of dimensions $\{d_S\}_{S\in \mathcal{F}}$.
   
\end{restatable}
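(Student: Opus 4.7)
The plan is to exhibit an explicit 2TO together with the P-CTCs that recover the given MTS, generalizing the two-time construction of \cref{fig:Cir_2TS_2TO_}. Given an MTS $\Psi$ with forward-evolving spaces $\mathcal{F}=\mathcal{F}_1\cup\mathcal{F}_2$ and backward-evolving spaces $\mathcal{B}=\mathcal{B}_1\cup\mathcal{B}_2$ at their prescribed times, I first define a 2TO $C$ that shares the same complex coefficients as $\Psi$ (in a fixed product basis of the constituent systems) but with every element of $\mathcal{B}$ placed at a single earliest time and every element of $\mathcal{F}$ placed at a single latest time. By construction, $C$ is isomorphic to $\Psi$ in the sense of \cref{subsubsec:2TO-MTS}, so any difference between the two reduces to a rearrangement of temporal labels, and $C$ trivially possesses the $|\mathcal{B}|$ backward-evolving and $|\mathcal{F}|$ forward-evolving spaces of the claimed dimensions.

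For the first alternative, I attach one P-CTC per space in $\mathcal{B}_2$, each of dimension equal to that of the space being moved. For each $S\in \mathcal{B}_2$ with target time $t_S$, I contract the backward-evolving $S$ of $C$ at the earliest time with a pre-selection on a maximally entangled state $\ket{\Phi^+}$ between $S$ and a fresh auxiliary copy $S'$, and at time $t_S$ I introduce a post-selection on a matching $\ket{\Phi^+}$ between $S'$ and a new backward-evolving slot labelled $S$. By the post-selected teleportation mechanism reviewed in \cref{sec:PCTC}, this pre-/post-selection pair implements an effective identity channel that relocates the backward-evolving $S$ from the earliest time of $C$ to the prescribed time $t_S$, leaving all coefficients untouched up to an overall scalar (operationally irrelevant by \cref{def: op_equiv}). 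Iterating this over all $|\mathcal{B}_2|$ misplaced spaces yields an object operationally equivalent to $\Psi$ and uses $|\mathcal{B}_2|$ P-CTCs of the stated dimensions. The second alternative is fully symmetric: instead of teleporting backward-evolving spaces forward in time, I use $|\mathcal{F}_1|$ P-CTCs of dimensions $\{d_S\}_{S\in\mathcal{F}_1}$ to teleport each forward-evolving $S\in\mathcal{F}_1$ from the latest time of $C$ back to its prescribed time.

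The main subtlety will be verifying that each such P-CTC really acts as a genuine ``wire bend'' in the composition rule $\bullet$ of the MT formalism, so that the two contractions with the halves of each $\ket{\Phi^+}$ together reproduce the coefficients of $\Psi$ rather than some spurious re-scaling or cross-contraction with spaces that are already correctly placed. This amounts to careful bookkeeping of system labels together with the standard transpose identity $(A\otimes \id)\ket{\Phi^+}=(\id\otimes A^T)\ket{\Phi^+}$. Since the auxiliary $S'$ systems are pairwise disjoint across distinct $S$, the individual P-CTCs commute and may be applied in any order, and the normalisation scalar from each $\ket{\Phi^+}$ factors out globally. With this verified, the stated counts and dimensions follow by direct inspection of the construction.
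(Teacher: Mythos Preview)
Your overall strategy matches the paper's: take the isomorphic 2TO and apply one P-CTC per space in $\mathcal{B}_2$ (or, symmetrically, $\mathcal{F}_1$) to bend the misplaced wires to their target times. There is, however, a gap in how you handle the \emph{non}-teleported spaces. With all of $\mathcal{B}$ at one instant and all of $\mathcal{F}$ at another, teleporting only the $\mathcal{B}_2$ spaces cannot reproduce the required interleaving: the $\mathcal{F}_1$ and $\mathcal{F}_2$ spaces remain bunched at the single time $\tau_1$, so whichever side of $\tau_1$ the teleported $\mathcal{B}_2$ spaces land on, you either put every $\mathcal{F}_1$ space after every $\mathcal{B}_2$ space (contradicting the defining property of $\mathcal{F}_1$) or every $\mathcal{F}_2$ space before some $\mathcal{B}_2$ space (contradicting the defining property of $\mathcal{F}_2$). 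A minimal counterexample is a target with time-order $S_2$, then $S_3$, then $S_4$, where $S_2\in\mathcal{F}_1$, $S_3\in\mathcal{B}_2$, $S_4\in\mathcal{F}_2$: no destination for the teleported $S_3$ can separate $S_2$ from $S_4$ while both sit at the same instant. The same issue afflicts the internal relative order among the $\mathcal{B}_1$ spaces and among the $\mathcal{F}$ spaces themselves.

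The paper closes this gap by invoking \emph{free operations} (\cref{def: free_op}) alongside the P-CTCs: one first stretches backward-evolving spaces further into the past (respectively forward-evolving ones further into the future) to establish the correct relative ordering among the spaces that will not be teleported, and only then applies the P-CTCs to the complementary set. Equivalently, one may choose the initial 2TO so that the $\mathcal{B}$-spaces and the $\mathcal{F}$-spaces already carry their correct internal relative orders --- \cref{def: 2TO2TS} only requires every backward space to precede every forward one, not that they occupy a single instant. Either repair is straightforward, but one of them is needed for your argument to go through.
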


In the above, the construction using either set $\mathcal{B}_2$ or $\mathcal{F}_1$ work equivalently, one could therefore consider the set with minimal cardinality 
   $\min\left(\left|\mathcal{B}_2\right|, \left|\mathcal{F}_1\right|\right)$ or minimum total dimension $\min\left(d_{\mathcal{B}_2},d_{\mathcal{F}_1}\right)$ for simplicity and without loss of generality. 
   
 Replacing the 2TO in the above result by the operationally equivalent P-CTC assisted map given by \cref{lemma: pure2TO_gen} (in this case, this would involve a single P-CTC of dimension $\max \left(d_{\mathcal{B}},d_{\mathcal{F}}\right)$) would then allow to map an arbitrary pure MTS to a time-labelled P-CTC assisted comb (see \cref{fig:Cir_2TS_2TO_} for an illustration). In total, this will involve as many P-CTCs as given by either one of the options in \cref{prop: MTS_2TO} plus an additional P-CTC of $\max\left(d_{\mathcal{B}},d_{\mathcal{F}}\right)$ dimensions.

\begin{figure}[ht!]
    \centering
    \scalebox{0.9}{\begin{tikzpicture}[
    op/.style={shape= Op, minimum width = \WidthPrePost,font=\scriptsize},
    pre/.style={shape = Pre, minimum width = \WidthPrePost,font=\scriptsize},
    post/.style={shape = Post, minimum width = \WidthPrePost,font=\scriptsize},
    Kop/.style={draw, minimum width=\WidthMeasurement, minimum height=\WidthMeasurement,font=\scriptsize},
    t/.style={font=\scriptsize},
    ghost/.style={minimum width=\WidthMeasurement, minimum height=\WidthMeasurement,font=\scriptsize},
    decoration={snake, segment length=4mm, amplitude=0.5mm}
]
\pgfsetmatrixrowsep{0.5cm}
\pgfsetmatrixcolumnsep{0.3cm}
\pgfmatrix{rectangle}{center}{mymatrix}
{\pgfusepath{}}{\pgfpointorigin}{\let\&=\pgfmatrixnextcell}
{
\& \& \node[post](13){}; \& \& \& \node[post](16){}; \\
 \& \& \node[ghost](23){}; \& \node[post](24){}; \& \node[post](25){}; \& \node[ghost](26){};\\
\node[post](31){}; \& \& \node[ghost](33){}; \& \node[ghost](34){}; \& \node[ghost](35){}; \&  \\
\node[ghost](41){};\& \& \node[op](43){}; \& \node[op](44){}; \& \& \\
\node[op](51){};\& \node[rectangle]{$\simeq$}; \& \node[ghost](53){}; \& \node[ghost](54){}; \& \& \\
\node[ghost](61){};\&  \& \node[op](63){}; \& \node[op](64){}; \& \& \\
\node[pre](71){};\& \& \node[ghost](73){}; \& \node[ghost](74){}; \& \node[ghost](75){}; \& \\
\& \& \node[ghost](83){}; \& \node[pre](84){}; \& \node[pre](85){}; \& \node[ghost](86){}; \\
\& \&\node[pre](93){}; \& \& \& \node[pre](96){}; \\
}

\draw[->,thick] (31.south) -- (41.north) node [midway,inner sep=0 cm] (31to41) {};
\draw[->,thick] (51.north) -- (41.south) node [midway,inner sep=0 cm] (51to41) {};
\draw[->,thick] (51.south) -- (61.north) node [midway,inner sep=0 cm] (51to61) {};
\draw[->,thick] (71.north) -- (61.south) node [midway,inner sep=0 cm] (71to61) {};

\draw[->,thick] (13.south) -- (23.north) node [midway,inner sep=0 cm] (13to23) {};
\draw[->,thick] (43.north) -- (33.south) node [midway,inner sep=0 cm] (43to33) {};
\draw[->,thick] (43.south) -- (53.north) node [midway,inner sep=0 cm] (43to53) {};
\draw[->,thick] (63.north) -- (53.south) node [midway,inner sep=0 cm] (63to53) {};
\draw[->,thick] (63.south) -- (73.north) node [midway,inner sep=0 cm] (63to73) {};
\draw[->,thick] (93.north) -- (83.south) node [midway,inner sep=0 cm] (93to83) {};

\draw[->,thick] (24.south) -- (34.north) node [midway,inner sep=0 cm] (24to34) {};
\draw[->,thick] (44.north) -- (34.south) node [midway,inner sep=0 cm] (44to34) {};
\draw[->,thick] (44.south) -- (54.north) node [midway,inner sep=0 cm] (44to54) {};
\draw[->,thick] (64.north) -- (54.south) node [midway,inner sep=0 cm] (64to54) {};
\draw[->,thick] (64.south) -- (74.north) node [midway,inner sep=0 cm] (64to74) {};
\draw[->,thick] (84.north) -- (74.south) node [midway,inner sep=0 cm] (84to74) {};

\draw[->,thick] (25.south) -- (35.north) node [midway,inner sep=0 cm] (25to35) {};
\draw[->,thick] (85.north) -- (75.south) node [midway,inner sep=0 cm] (85to75) {};

\draw[->,thick] (16.south) -- (26.north) node [midway,inner sep=0 cm] (16to26) {};
\draw[->,thick] (96.north) -- (86.south) node [midway,inner sep=0 cm] (96to86) {};

\draw[-] (13to23) -- (16to26);
\draw[-] (24to34) -- (25to35);
\draw[-] (84to74) -- (85to75);
\draw[-] (93to83) -- (96to86);

\draw[-] (43to53) to[out=0,in=180] (44to34);
\draw[-] (43to33) to[out=0,in=180] (44to54);

\draw[decorate] (31to41) .. controls ($(31to41)+(-0.3,-0.1)$) and  ($(51to41)+(-.3,0.1)$).. (51to41);
\draw[decorate] (51to41) .. controls ($(51to41)+(-0.3,-0.3)$) and  ($(51to61)+(-0.3,+0.3)$).. (51to61);
\draw[decorate] (51to61) .. controls ($(51to61)+(-0.3,-0.1)$) and  ($(71to61)+(-.3,0.1)$).. (71to61);

\node[t, right=.5 mm of 31to41](S4){$S_4$};
\node[t, right=.5 mm of 51to41](S3){$S_3$};
\node[t, right=.5 mm of 51to61](S2){$S_2$};
\node[t, right=.5 mm of 71to61](S1){$S_1$};

\node[t, left=.5 mm of 64to74](S41){$S_2$};
\node[t, left=.5 mm of 64to54](S31){$S_3$};
\node[t, left=.5 mm of 63to73](S21){$S_4$};
\node[t, left=.5 mm of 63to53](S11){$S_1$};

\coordinate[below right=1mm and 2mm of 63to73](start1){};
\coordinate[above right=1mm and 2mm of 93to83](c11){};
\coordinate[above left=1mm and 2mm of 96to86](c12){};
\coordinate[below left=1mm and 2mm of 16to26](c13){};
\coordinate[below right=1mm and 2mm of 13to23](c14){};
\coordinate[below right=2mm and 2mm of 13to23](finish1){};
\draw[-<, dashed, rounded corners] (start1) -- (c11) -- (c12) -- (c13) -- (c14) -- (finish1);

\coordinate[below right=1mm and 2mm of 64to74](start2){};
\coordinate[above right=1mm and 2mm of 84to74](c21){};
\coordinate[above left=1mm and 2mm of 85to75](c22){};
\coordinate[below left=1mm and 2mm of 25to35](c23){};
\coordinate[below right=1mm and 2mm of 24to34](c24){};
\coordinate[below right=1mm and 2mm of 44to34](c25){};
\coordinate[below right=1mm and 2mm of 43to53](c26){};
\coordinate[below right=3mm and 2mm of 43to53](finish2){};

\draw[-, dashed, rounded corners] (start2) -- (c21) -- (c22) -- (c23) -- (c24) -- (c25);
\draw[-, dashed] (c25) to[out=180,in=0] (c26);
\draw[->, dashed] (c26) -- (finish2);

\coordinate[above right=1mm and 2mm of 64to54](start3){};
\coordinate[above right=1mm and 2mm of 44to54](c31){};
\coordinate[above right=1mm and 2mm of 43to33](c32){};
\coordinate[above right=3mm and 2mm of 43to33](finish3){};

\draw[-,dashed] (start3) -- (c31);
\draw[-,dashed] (c31) to[out=180,in=0] (c32);
\draw[->, dashed] (c32) -- (finish3);

\end{tikzpicture}}
    \caption{A 2TO on two systems can be transformed into a 4TS on a single system using two P-CTCs, that teleport the backward-evolving states to the future, together with a single SWAP.}
    \label{fig:IsoMTS}
\end{figure}
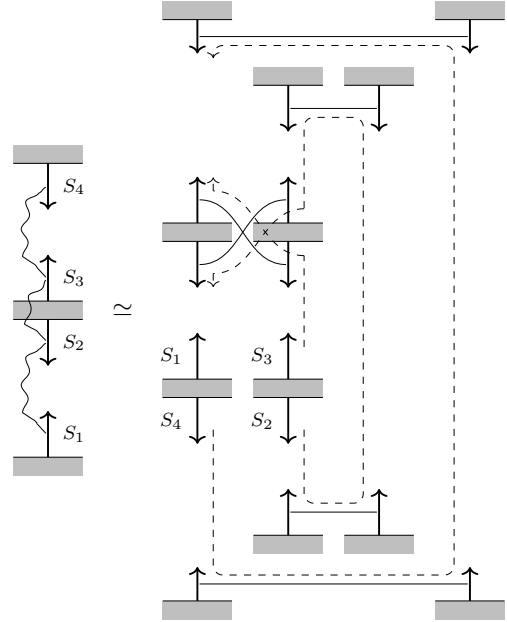

\begin{figure}[ht!]
    \centering
    \scalebox{0.9}{\input{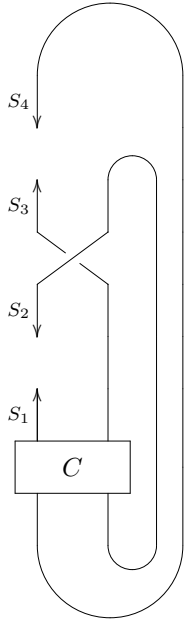}}
    \caption{Circuit style representation of the MTS in \ref{fig:IsoMTS}.} 
    \label{fig:IsoMTS_Cir}
\end{figure}

\subsection{Generalisation to mixed objects}\label{subsec:Mixed}
So far we focussed on pure multi-time objects, which we now generalise to the mixed case.

We start again with the two time case before stating the general result. 
Let $\{C_r\}_r$ be a set of pure 2TOs defined on the same multi-time Hilbert space. In \cref{subsubsec:2TO-PCTC}, a construction involving P-CTCs that implements any of the $C_r$ up to some proportionality factor $k_r\in\mathbb{C}$ (which preserves operational equivalence) was given. We denote by $C_r^{CTC}$ the operator obtained using the P-CTC construction, which is such that $C_r =k_rC_r^{CTC}$. We can then see that for any mixture  of these 2TOs given by a corresponding density vector $\eta=\sum_r p_r C_r\otimes C_r^{\dagger}$ with $\sum_r p_r=1$, there exists a corresponding mixture $\xi=\sum_r p_r' C_r^{CTC}\otimes C_r^{CTC\dagger}$ such that $\eta = k\xi$, where $k\in\mathbb{C}$ is a constant. We will denote a mixture such as $\eta$ as $\{p_r,C_r\}_r$. Explicitly, 
\begin{align}
    \eta &= \sum_r p_r C_r\otimes C_r^{\dagger}\nonumber\\
    &= \sum_r p_r\left|k_r\right|^2 C_r^{CTC}\otimes C_r^{CTC\dagger}
\end{align}
Now let $p'_r = \frac{p_r\left|k_r\right|^2}{\sum_s p_s\left|k_s\right|^2}, \forall r$. Then the mixture $\xi=\{p'_r, C_r^{CTC}\}_r$ has the associated density vector $\xi$:
\begin{align}
    \xi &= \sum_r p_r' C_r^{CTC}\otimes C_r^{CTC\dagger} \nonumber\\
    &= \frac{1}{\sum_s p_s\left|k_s\right|^2}\sum_r p_r\left|k_r\right|^2 C_r^{CTC}\otimes C_r^{CTC\dagger} \nonumber\\
    &= \frac{1}{k}\eta
\end{align}

A mixed 2TO $\xi=\{p'_r, C_r^{CTC}\}_r$ can be viewed as a circuit with an additional system that controls which $C_r^{CTC}$ is applied and with which probability (Fig. \ref{fig:MixPCTC_2Figs}). In particular, this is achieved by taking the input of the additional system to be $\rho=\text{diag}\left(\{p_r'\}_r\right)$. Overall the construction is now a P-CTC-assisted circuit with a CPTP map $\mathcal{E}^{CTC}$ instead of a unitary. In other words, we get an analogous result as \cref{lemma: pure2TO_gen} for mixed 2TOs (where each term in the mixture is a pure 2TO, and all of these have the same multi-time Hilbert space), with P-CTC assisted CPTP as opposed to unitary map.

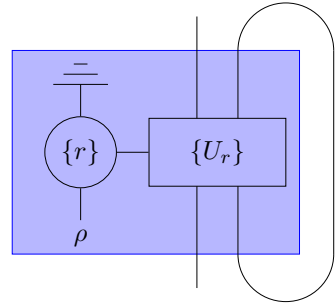
\begin{figure}[ht!]
    \centering
    \scalebox{0.9}{
            \begin{tikzpicture}
    \draw[blue, fill=blue!70!white, fill opacity=0.4] (-1, -1.5) -- (3.2, -1.5) -- (3.2, 1.5) -- (-1, 1.5) -- cycle;
    
    \node[circle, draw](a) at (0,0) {$\{r\}$};
    \node[rectangle, draw, minimum width=2cm, minimum height= 1cm](b) at (2,0) {$\{U_r\}$};

    \draw (a.south) -- (0,-1);
    \draw (a.north) -- (0,1);
    \draw (a.east) -- (b.west);

    \node[anchor=north] at (0,-1) {$\rho$};

    \draw (-0.4,1) -- (0.4,1);
    \draw (-0.25,1.15) -- (0.25,1.15);
    \draw (-0.1, 1.3) -- (0.1, 1.3);

    \draw (1.7, -2) -- (1.7, -0.5);
    \draw (1.7, 0.5) -- (1.7, 2);
    \draw (2.3, -1.5) -- (2.3, -0.5);
    \draw (2.3, 0.5) -- (2.3, 1.5);

    \draw (3.7, -1.5) -- (3.7, 1.5);
    \draw (2.3,-1.5) arc (180:360:0.7);
    \draw (2.3,1.5) arc (180:0:0.7);
    
\end{tikzpicture}
        }
    \caption{A mixed 2TO can be viewed a P-CTC-assisted circuit, with an additional system that controls which operator $C_r^{CTC}$, or equivalently which unitary $U_r$, to perform with which probability $p_r'$. The chosen input state $\rho$ is $\text{diag}\left(\{p_r'\}_r\right)$.}
    \label{fig:MixPCTC_2Figs}
\end{figure}

More generally, consider a mixture $\eta=\{p_r, \psi_r\}_r$ of arbitrary multi-time objects $\psi_r$ associated with the same multi-time Hilbert space. Since the sets $\mathcal{B}_1$, $\mathcal{B}_2$, $\mathcal{F}_1$ and $\mathcal{F}_2$ are fully specified by the multi-time Hilbert space, each $\psi_r$ (a pure MTS) in this mixture has the same structure of these sets. We could thus apply the arguments of \cref{prop: MTS_2TO} to each pure MTS $\psi_r$ to reduce it to a 2TO using appropriate P-CTCs. The number and dimensions of the P-CTCs can be ensured to be the same for each $\psi_r$ in the mixture as these depend only on the above-mentioned sets. 

\begin{proposition}
\label{prop: mixedMTS_to_CTC}
For any mixture $\eta=\{p_r, \psi_r\}_r$ of multi-time objects defined on the same multi-time Hilbert space, associated with sets of spaces $\mathcal{B}_1$, $\mathcal{B}_2$, $\mathcal{F}_1$ and $\mathcal{F}_2$, 
there exists an operationally equivalent time-labelled P-CTC-assisted comb composed of a CPTP map together with either one of the following structure of P-CTCs, where $\mathcal{B}=\mathcal{B}_1\cup \mathcal{B}_2$ and $\mathcal{F}=\mathcal{F}_1\cup \mathcal{F}_2$

\begin{enumerate}
    \item $\left|\mathcal{B}_2\right|$ P-CTCs of dimensions $\{d_{S}\}_{S\in \mathcal{B}_2}$ and an additional P-CTC of $\max \left(d_{\mathcal{B}},d_{\mathcal{F}}\right)$ dimensions.
     \item $\left|\mathcal{F}_1\right|$ P-CTCs of dimensions $\{d_{S}\}_{S\in \mathcal{F}_1}$ and an additional P-CTC of $\max \left(d_{\mathcal{B}},d_{\mathcal{F}}\right)$ dimensions.
\end{enumerate}
\end{proposition}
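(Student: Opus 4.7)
My plan is to reduce the mixed case to the pure case, which is already handled by Propositions~\ref{lemma: pure2TO_gen} and~\ref{prop: MTS_2TO}, and then use a control-ancilla trick analogous to the one illustrated in~\cref{fig:MixPCTC_2Figs} to assemble the individual pure constructions into a single CPTP map. Concretely, I would argue first that because every $\psi_r$ in the decomposition $\eta=\sum_r p_r\,\psi_r\otimes\psi_r^\dagger$ lives in the same MT Hilbert space, the sets $\mathcal{B}_1,\mathcal{B}_2,\mathcal{F}_1,\mathcal{F}_2$ (and hence the dimensions listed in the two options) are identical for all $r$. Applying \cref{prop: MTS_2TO} to each $\psi_r$ yields a pure 2TO $C_r$ together with the same collection of ``structural'' P-CTCs (option 1 or option 2); applying \cref{lemma: pure2TO_gen} to each $C_r$ produces a unitary $U_r$ on the system and an ancilla, assisted by one additional P-CTC of dimension $\max(d_\mathcal{B},d_\mathcal{F})$. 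At this stage each $\psi_r$ has been realised, up to an operationally irrelevant constant $k_r\in\mathbb{C}$, by a pure P-CTC-assisted comb sharing exactly the same CTC skeleton.

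The next step is to bundle these pure constructions into a single mixed one. I would introduce an auxiliary ``control'' system initialised in the diagonal state $\rho=\mathrm{diag}(\{p'_r\}_r)$, with the coefficients $p'_r$ chosen so that the overall density vector matches $\eta$ up to an overall constant (see below). Conditioned on the state $\ket{r}$ of the control, one applies the unitary $U_r$ built in the previous step; afterwards the control is traced out. This controlled-unitary-plus-trace is a CPTP map $\mathcal{E}^{\mathrm{CTC}}$, and composing it with the shared collection of P-CTCs from the previous paragraph gives a time-labelled P-CTC-assisted comb with exactly the P-CTC count and dimensions claimed in options~1 and~2.

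To verify operational equivalence, I would compute the density vector associated to this comb. Denoting by $C_r^{\mathrm{CTC}}$ the pure MT object realised by $U_r$ together with its dedicated P-CTCs, we have $\psi_r=k_r\,C_r^{\mathrm{CTC}}$, so the construction yields
\begin{align}
\xi \;=\; \sum_r p'_r\, C_r^{\mathrm{CTC}}\otimes (C_r^{\mathrm{CTC}})^\dagger.
\end{align}
Choosing
\begin{align}
p'_r \;=\; \frac{p_r |k_r|^2}{\sum_s p_s |k_s|^2}
\end{align}
(a valid probability distribution), we get $\xi = \tfrac{1}{K}\,\eta$ with $K=\sum_s p_s|k_s|^2$, establishing operational equivalence in the sense of \cref{def: op_equiv} and \cref{def: equiv_PCTC}. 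The classical mixing of the $p'_r$ into $\rho$ makes the overall map a CPTP map rather than a unitary; crucially, no additional P-CTCs are required beyond those already used for each pure $\psi_r$, because all pure constructions can be designed to share the same CTC skeleton.

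The step I expect to require the most care is the last one: matching the density vector of the controlled-unitary construction to $\eta$ up to a single global constant. One has to confirm that the control-and-trace gadget really produces the incoherent sum $\sum_r p'_r\,C_r^{\mathrm{CTC}}\otimes (C_r^{\mathrm{CTC}})^\dagger$ (rather than introducing cross terms between different $r$) and that the proportionality factors $k_r$ from the pure constructions, which are $r$-dependent, can be absorbed into the single global $K$ by the above reweighting. Everything else is essentially a concatenation of the pure-case constructions already established.
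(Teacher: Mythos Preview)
Your proposal is correct and follows essentially the same route as the paper's own argument in \cref{subsec:Mixed}: reduce each pure $\psi_r$ to a 2TO via \cref{prop: MTS_2TO} (using the common $\mathcal{B}_1,\mathcal{B}_2,\mathcal{F}_1,\mathcal{F}_2$ structure), realise each 2TO by a unitary $U_r$ plus the single additional P-CTC via \cref{lemma: pure2TO_gen}, and then assemble the mixture with a classical control in state $\mathrm{diag}(\{p'_r\})$ and the same reweighting $p'_r=p_r|k_r|^2/\sum_s p_s|k_s|^2$ to obtain $\xi=\eta/K$. The paper's presentation differs only in order (it first treats mixed 2TOs and then invokes \cref{prop: MTS_2TO}), but the content and the key identities are identical to yours.
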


In particular, \cref{prop: CTC_to_MTS} and \cref{prop: mixedMTS_to_CTC} imply the following theorem that establishes an operational equivalence between the multi-time and P-CTC frameworks.

\Main*

\section{A partial order on isomorphic MTS}
\label{sec: partialorder}

Having proven an operational equivalence between the MTS and P-CTC frameworks in \cref{thm: main}, one can ask a number of further interesting questions. For example, we have looked at translating between different multi-time objects using P-CTCs in \cref{subsubsec:2TO-MTS}. When can we translate between multi-time objects without using any P-CTCs? Since time-labelled P-CTC assisted combs can also be formulated within the MTS formalism, this question can shed light on inter-conversions between time-labelled P-CTC assisted objects and more generally on the resource-theoretic aspect of P-CTCs.

Here we consider the question for isomorphic MTS: as seen in \cref{subsubsec:2TO-MTS}, operationally implementing such an isomorphism can generally involve P-CTCs. We define \emph{free operations} (see \cref{def: free_op} in \cref{appendix: partialorder}) as transformations that map one MTS to another isomorphic MTS without involving any P-CTCs. These operations correspond to time translations along the direction of evolution, shifting forward- or backward-evolving spaces forward or backward in time. 
Using such free operations, we can define a partial order on isomorphic MTS.

\begin{definition}[Partial order on isomorphic MTSs]
\label{def:partialorder}
 Let $M_1$ and $M_2$ be two isomorphic MTS. Whenever $M_1$ can be transformed, only using free operations (\cref{def: free_op}) to $M_2$, we denote this as $M_1 \succeq M_2$ or equivalently $M_2\preceq M_1$. If we have $M_1 \succeq M_2$ and $M_2 \succeq M_1$, we denote this as $M_1=M_2$, 
 and if neither holds, we denote it as $M_1\not\preceq \not\succeq M_2$. Finally, if $M_1 \succeq M_2$ holds but it is impossible to transform $M_2$ to $M_1$ solely using free operations, we denote this as a strict ordering $M_1\succ M_2$ or equivalently $M_2\prec M_1$.
\end{definition}

An example is illustrated in \cref{fig:2TS_swap}, which shows that a 2TS $M_{2TS}$ with a backward evolving system $B$ and forward evolving system $F$ can be freely transformed to a 2TO $M_{2TO}$ with the same systems, giving $M_{2TO}\preceq M_{2TS}$. Note that in this case the free operations manage to flip the relative time-order between $B$ and $F$. Moreover, it is easy to see that going from $M_{2TO}$ to $M_{2TS}$ is impossible with free operations alone: this would require moving $F$ to the past so it comes earlier than $B$ in time, or moving $B$ to the future so it comes later than $F$ in time (both of which can be achieved using a non-free operation, a single P-CTC). Hence we have a strict order, $M_{2TO}\prec M_{2TS}$. The following theorem (proven in \cref{appendix: partialorder}) shows that 2TOs and 2TSs are in fact extremal points in this partial order on isomorphic MTSs, with all other isomorphic MTSs lying in between. 

\begin{figure}
    \centering
    \scalebox{1}{
        \begin{tikzpicture}[
    op/.style={shape= Op, minimum width = \WidthPrePost,font=\scriptsize},
    pre/.style={shape = Pre, minimum width = \WidthPrePost,font=\scriptsize},
    post/.style={shape = Post, minimum width = \WidthPrePost,font=\scriptsize},
    Kop/.style={draw, minimum width=\WidthMeasurement, minimum height=\WidthMeasurement,font=\scriptsize},
    t/.style={font=\scriptsize},
    ghost/.style={minimum width=(\WidthMeasurement*2), minimum height=(\WidthMeasurement*2),font=\scriptsize},
    decoration={snake, segment length=4mm, amplitude=0.5mm}
]
\pgfsetmatrixrowsep{0.5cm}
\pgfsetmatrixcolumnsep{0.3cm}
\pgfmatrix{rectangle}{center}{mymatrix}
{\pgfusepath{}}{\pgfpointorigin}{\let\&=\pgfmatrixnextcell}
{
\\
\node[post](post1) {};   \& \node(post2) {};    \&                          \& \node(post3){};\\
\node[ghost](g1){};      \& \node[ghost](g2){}; \& \node(op){$\simeq$};     \& \node[ghost](g3){};\\
\node[pre](pre1) {};     \& \node(pre2) {};     \&                          \& \node(pre3){};  \\
\\
}

\draw[->,thick] (post1.south) -- (g1.north) node [midway,inner sep=0 cm] (1) {};
\draw[->,thick] (pre1.north) -- (g1.south) node [midway,inner sep=0 cm] (2) {};
\draw[->,thick] (g2.north) -- (post2.south) node [midway,inner sep=0 cm] (3) {};
\draw[->,thick] (g2.south) -- (pre2.north) node [midway,inner sep=0 cm] (4) {};
\draw[->,thick] (g3.north) -- (post3.south) node [midway, inner sep=0 cm] (5) {};
\draw[->,thick] (g3.south) -- (pre3.north) node [midway, inner sep=0 cm] (6) {};


\node[draw=blue, fill=blue!30!white, fit=(g1.north west) (g2.south east), inner sep=0cm, minimum height=\WidthMeasurement](M){};
\node[draw=blue, fill=blue!30!white, fit=(g3.north west) (g3.south east), inner sep=0cm, minimum height=\WidthMeasurement,label=center:$C$](C){};

\draw[dashed] (g1.north) to[in=90, out=270] (g2.south);
\draw[dashed, draw=blue!28!white,double=black,double distance=\pgflinewidth,ultra thick] (g1.south) to[in=270, out=90] (g2.north);

\end{tikzpicture}
    }
    \caption{The order of the states composing a 2TS-like object can be changed using a SWAP operation.}
    \label{fig:2TS_swap}
\end{figure}
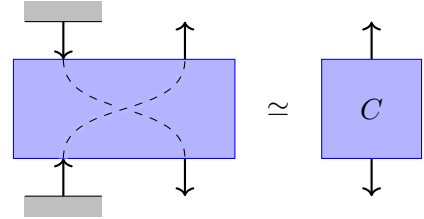

\begin{restatable}[]{theorem}{PartialOrder}
  
  \label{thm:partialorder}
  Consider any 2TO $M_{2TO}$ and 2TS $M_{2TS}$ that are isomorphic to each other. Then for all MTS $M$ which are isomorphic to $M_{2TO}$ and $M_{2TS}$, we have 
  \begin{equation}
  \label{eq: MTS_partialorder}
      M_{2TO} \prec M \prec M_{2TS}.
  \end{equation}
  Moreover, this is not generally a total order, i.e., there exist isomorphic MTS $M_1$ and $M_2$ such that $M_1\not\preceq \not\succeq M_2$.
 
\end{restatable}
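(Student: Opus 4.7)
The plan is to prove the chain $M_{2TO} \prec M \prec M_{2TS}$ by establishing reachability in both directions via explicit sequences of free time-translation operations, showing strictness using a monotonic invariant under free operations, and witnessing non-totality with an explicit pair of incomparable isomorphic MTS.

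For reachability, to show $M_{2TS} \succeq M$, the first step is to exploit time-order equivalence (\cref{def:timeorder_equiv}, part 1) to relabel the times in $M_{2TS}$ so that every forward-evolving space sits at a time strictly earlier than all times appearing in $M$, and every backward-evolving space sits at a time strictly later than all times in $M$. Then, for each forward system $F$, apply a free operation shifting it from its chosen early position in $M_{2TS}$ directly to its target time in $M$; by construction this is always a ``later'' shift and hence free. Symmetrically, shift each backward system backward to its target time in $M$. Since system labels, coefficients, and Hilbert-space structure are identical across isomorphic MTS, the result coincides with $M$. The reachability $M \succeq M_{2TO}$ is handled symmetrically: shift every forward space of $M$ to a common time later than all times in $M$ and every backward space to a common time earlier than all times in $M$, yielding a 2TO coinciding with $M_{2TO}$.

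For strictness, I will use the invariant
\begin{equation*}
N(\tilde{M}) \;=\; \bigl|\{(F,B) : F \text{ forward},\; B \text{ backward},\; F \text{ earlier than } B \text{ in } \tilde{M}\}\bigr|,
\end{equation*}
which counts forward--backward pairs in the ``2TS-like'' relative order. Any free operation (a forward space shifting to a later time, or a backward space shifting to an earlier time) can only convert $(F,B)$ pairs from the ``$F$ earlier'' to the ``$F$ later'' category, never the reverse, so $N$ is monotonically non-increasing under free operations. Since $N(M_{2TS}) = |\mathcal{F}| \cdot |\mathcal{B}|$ is maximal and $N(M_{2TO}) = 0$ is minimal, while any $M$ strictly between $M_{2TO}$ and $M_{2TS}$ (i.e.\ neither a 2TS nor a 2TO) satisfies $0 < N(M) < |\mathcal{F}|\cdot|\mathcal{B}|$, the reverse transformations $M \to M_{2TS}$ and $M_{2TO} \to M$ are blocked by monotonicity of $N$, establishing the strict orderings.

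For non-totality, I will construct two isomorphic but incomparable MTS using forward systems $F_1, F_2$ and backward systems $B_1, B_2$ of differing dimensions (e.g.\ $F_1, B_1$ of dimension $2$ and $F_2, B_2$ of dimension $3$, with generic coefficients), so that time-order equivalence (\cref{def:timeorder_equiv}, part 2) cannot identify the examples via label switching. With time-ordered arrangements $M_1$: $F_1$, then $B_1$, then $F_2$, then $B_2$, and $M_2$: $F_2$, then $B_2$, then $F_1$, then $B_1$, the unique ``$F$ later than $B$'' pair in $M_1$ is $(F_2, B_1)$, while in $M_2$ it is $(F_1, B_2)$. Because these two sets are incomparable under inclusion and free operations only enlarge this set monotonically, neither $M_1 \succeq M_2$ nor $M_2 \succeq M_1$ holds. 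The principal anticipated obstacle is to verify that the formal notion of isomorphism of MTS (fully specified in the relevant appendix) treats $M_1$ and $M_2$ above as distinct isomorphic MTS rather than identifying them through a broader equivalence, and to make the informal ``free time-shift'' operations correspond rigorously to admissible time relabelings of individual systems.
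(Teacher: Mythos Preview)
Your proof is correct and follows the same overall architecture as the paper's, but two ingredients differ in a way worth noting. For strictness, you introduce the pair-counting invariant $N(\tilde M)$ and argue via its monotonicity under free operations; the paper instead argues directly that converting a 2TO into any other isomorphic MTS (or any MTS into a 2TS) would force a backward-evolving space to move forward in time or a forward-evolving space to move backward, which is not free. Your invariant makes this rigorous in one line and would extend cleanly to finer comparisons between intermediate MTS. For non-totality, the paper uses a simpler three-system example (two forward systems $S_1,S_3$ and one backward system $S_2$, with $M_1$ ordered $S_1,S_2,S_3$ and $M_2$ ordered $S_3,S_2,S_1$) and does not invoke differing dimensions: since the isomorphism of \cref{def: time_ord_iso} and the free operations of \cref{def: free_op} both preserve system labels, the label-switching clause of \cref{def:timeorder_equiv} never enters the partial order, so your dimension precaution is not needed (though it does no harm). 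Your four-system example together with the inclusion argument on the ``$F$ later than $B$'' set is correct and slightly more structural than the paper's direct case check.
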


{\bf Strictness of the partial order}
\cref{def: 2TO2TS} of 2TO and 2TS does not require
all backward/forward evolving spaces to have the same time label. 
For example, a situation where $S_1$ is backward evolving at a time $t_1$, while $S_2$ and $S_3$ are forward evolving at distinct times $t_2$ and $t_3$, where $t_2,t_3>t_1$ corresponds to a 2TO $M$ by \cref{def: 2TO2TS}. If we also required the same time labels for identically directed systems, then the order obtained in \cref{eq: MTS_partialorder} would not be strict. This is because a 2TO $M'$ where $S_2$ and $S_3$ have the same time label $t>t_1$ is equivalent to the above 2TO $M$ under the partial order, noting that equivalence of MT objects only considers the time order and not the specific time labels (\cref{def:timeorder_equiv}): $M$ can be transformed to $M'$ by stretching $S_2$ forward in time from $t_2$ to $t_3$ (taking $t_3>t_2$ w.l.o.g.), while $M'$ can be transformed to $M$ by stretching $S_3$ forward in time to any $t'>t$. 

{\bf Isomorphism vs operational equivalence} It is worth noting that the concept of operational equivalence vs isomorphism between MTS are distinct. The former is operational and the latter is mathematical. Two MTS that differ only by an overall constant $k\in \mathbb{C}$ are operationally equivalent as they yield the same outcome probabilities under all measurements. This also means that such MTS have the same MT Hilbert space (and can thus be contracted with the same MT measurement operators), and we do not speak of operational equivalence between MTS of ``different types'' such as a 2TO and 2TS. On the other hand, two MTS which are isomorphic need not have the same MT Hilbert space as they can have different time ordering, and MTS of different types can therefore be isomorphic. The concepts are however related in some cases, for instance the 2TOs $M$ and $M'$ of the previous paragraph are isomorphic and operationally equivalent.

Based on these results, an interesting future direction would be to consider whether $M_1\succ M_2$ allows an MTS $M_1$ to strictly outperform the MTS $M_2$ in some information processing task. This is further discussed in \cref{sec:Discussion}.

{\bf Induced order on P-CTC assisted objects} Due to the operational equivalence between MTS and P-CTC assisted objects established in \cref{thm: main}, the partial order defined here for isomorphic MTS induces a partial order on P-CTC assisted objects. In particular, this implies, together with \cref{thm:partialorder}, that any P-CTC assisted comb (analogous to an MTS) can be translated to an isomorphic P-CTC assisted map (analogous to a 2TO) using free operations. 

\section{Discussion and outlook}\label{sec:Discussion}

Our results (summarised in \cref{sec:Intro}) provide a foundation for investigating a number of further questions on the nature of causality and time in quantum theory, which we discuss below.

{\bf Efficiency} There exist two main preparation methods for MTS: the SWAP method of \cite{Aharonov2009} (Figure 6) and the one using P-CTCs, presented in our work. Both methods use post-selection on maximally entangled states, in the former case with arbitrary pre-selected states and in the latter case with pre-selection on the maximally entangled state but an arbitrary channel assisted by the resulting P-CTC. Hence a natural question for future work is: which preparation method is more efficient i.e., leads to the largest post-selection success probability? A related open question is whether the number and dimensions of P-CTCs required in our mapping from MTS to P-CTC assisted combs (\cref{prop: mixedMTS_to_CTC}) is optimal, or whether there exists a alternate mapping that preserves operational equivalence while minimizing these resources.

{\bf Resource theory of P-CTCs and MTS} 
In \cref{sec: partialorder}, we defined a partial order on isomorphic MTS, where free operations on such MTS are those that do not use any P-CTCs. We found that 2TOs are least useful while 2TS are maximally useful among all isomorphic MTS, when P-CTCs (which allow to reverse the direction of causality) are regarded as resources. By the equivalence established in \cref{thm: main}, the partial order defined on isomorphic MTS in \cref{sec: partialorder} would induce an operationally equivalent partial order on time-labelled P-CTC assisted combs. It would be interesting to explore if this can be developed into a full resource theory of objects in the P-CTCs and MTS frameworks. More concretely, it remains an open question if there are there any information processing tasks where objects higher in the partial order perform strictly better, e.g., a task where a 2TS strictly outperforms its isomorphic 2TO. Several interesting classes of causal loops have been found across the quantum information literature, including associated to processes violating causal inequalities \cite{Oreshkov2012,Baumeler_2014, Baumeler_2016,Tobar_2020}, to causal loops that can be embedded in Minkowski space-time without superluminal signalling \cite{VilasiniColbeckPRL, VilasiniColbeckPRA}. These can be seen as special cases of P-CTCs, and also described via cyclic quantum causal models \cite{ferradini2025_quantum, ferradini2025_classical}. It would be intriguing to explore whether such a resource-theory might shed light on the information-processing power of such causal loops (as also motivated in \cite{VilasiniColbeck2024}), how this might link to structural properties (e.g., graph separation properties) of the cyclic causal model associated with them and to relativistic principles when such models are embedded in a space-time.

{\bf Paradoxes} Both pre and post selection as in the MTF formalism and causal loops as in the P-CTC formalism entail paradoxes \cite{Aharonov1964,Aharonov1991,Aharonov_2013, aharonov2014,Lloyd2011,Lloyd2011a}. In the former case, pre and post selection paradoxes in quantum theory are linked to the non-classical resource of contextuality \cite{Pusey2015}. The latter case typically entails the grandfather and information/bootstrap paradoxes which are also found in purely classical theories with causal loops. Can the equivalence proven in this paper be applied to identify new forms of cyclic causality and time travel paradoxes of a genuinely quantum nature, with links to contextuality?

{\bf Related frameworks and generalisations} There are several related frameworks \cite{Coecke_2012a,Coecke_2012b,Oreshkov_2015,coecke_kissinger_2017,Leifer_2017,pinzani2019,selby2024,Oreshkov_2016,hardy2021,Chiribella_2022,Hardy_2012, Fitzsimons_2016,Fullwood2022}, with which it would be interesting to connect our results and thereby possibly generalize them. In particular, this includes category-theoretic formalisms such as process theories, where P-CTC-like cyclic structures can be defined within any compact closed symmetric monoidal category which come equipped with special states and effects called “cups” and “caps” modeling the relevant pre and post-selection in a P-CTC. Combining our work with this extensive literature, it would be interesting to explore whether the MTF and P-CTC approaches can be generally formulated  in any compact closed symmetric monoidal category (not necessarily one instantiated through quantum Hilbert spaces), and consequently whether an operational equivalence between cyclic causality and time symmetry can be established in a theory-independent manner, which generalises our results for the quantum case.

{\bf Physical realisability and fine-graining} In \cite{vilasini2022embedding, VilasiniRennerPRL}, it was shown that physical realizations of so-called indefinite causal order (ICO) processes respecting relativistic causality in spacetime necessarily unravel into a sequence of well-defined and causally ordered operations at a more fine-grained level, without requiring post-selection\footnote{See also \cite{Vilasini2025} for details on the assumptions regarding the underlying ``laboratories'' under which such an unraveling exists and a possible resolution to the debate \cite{Procopio_2015,Rubino2017,Chiribella2013,Oreshkov2012, Portmann2017,Vilasini_mastersthesis,Oreshkov2019,Paunkovic2020,Ormrod_2023,vilasini2022embedding,VilasiniRennerPRL,kabel2024} on the physical interpretation of ICO experiments in Minkowski spacetime.}. Such acyclic fine-grained descriptions have been found for a large class of processes, known as quantum circuits with quantum controlled superposition of causal order \cite{Salzger_2025, Salzger_thesis}. Noting that ICO processes are a subset of the MTS and P-CTC formalisms, it would be of fundamental interest to determine the largest set of P-CTCs and MTSs that can admit such an acyclic fine-grained description in spacetime, without involving post-selection. This will also shed light on the ability to simulate CTCs using causal well-ordered processes, while allowing (as in the causality frameworks of \cite{Portmann_2017,VilasiniColbeckPRA, Salzger_2025}) to consider physical protocols where messages are exchanged at superpositions of spacetime locations.

{\bf Spacetime and boundary conditions} The MTS formalism allows for arbitrary past and future (temporal) boundary conditions through arbitrary pre- and post-selections. It would be interesting to investigate if this framework (and consequently the P-CTC framework, via our mapping) can be extended to consider spatiotemporal boundaries, which would bring it closer to relativistic approaches \cite{Feynman1948,Oeckl_2003a,Oeckl_2003b}, and facilitate a more information-theoretic understanding of spatio-temporal boundary conditions.

\medskip

{\it Acknowledgements} We thank \"{A}min Baumeler for insightful discussions and interest in this work. VV acknowledges support from an ETH Postdoctoral Fellowship, ETH Zurich Quantum Center, the Swiss National Science Foundation via project No.\ 200021\_188541, the QuantERA programme via project No.\ 20QT21\_187724, and from a government grant managed by the Agence Nationale de la Recherche under the Plan France 2030 with the reference ANR-22-PETQ-0007.

\medskip

{\it Author contributions} Author names are listed in alphabetical order. VV proposed the project and all three authors, EJ, RS and VV, contributed equally to deriving the results presented in this manuscript. The project was part of EJ's master's thesis supervised by RS and VV, where the thesis includes additional examples and analysis performed by EJ. All authors contributed to writing the manuscript, with EJ and VV taking the lead equally.

\newpage

\onecolumn
\appendix

\section{MTS framework: details and proofs}\label{sec:AppendixMTF}

\subsection{Composition}\label{app:MTcomposition}

\begin{definition}[Composition]\label{def:MTcomposition}
Consider MT Hilbert spaces $\mathcal{H}_1, \mathcal{H}_2$ given by
\begin{align}
    \mathcal{H}_1 &= \bigotimes_{i \in G^1} \mathcal{H}^{A_i} \bigotimes_{j \in G_1} \mathcal{H}_{A_j}, \\
    \mathcal{H}_2 &= \bigotimes_{i \in G^2} \mathcal{H}^{A_i} \bigotimes_{j \in G_2} \mathcal{H}_{A_j},
\end{align}
where the sets of labels of the forward and backward evolving spaces in $\mathcal{H}_1$ are denoted by $\{G^1,G_1\}$ and similarly for $\mathcal{H}_2$. The Hilbert spaces are only composable if every system label occurs not more than once for each direction of evolution, i.e. $G^1 \cap G^2 = \emptyset = G_1 \cap G_2$. The composition of Hilbert spaces contracts (removes) all reversed pairs --- which are the intersections $G^1 \cap G_2$ and $G_1 \cap G^2$ --- while forming the usual tensor product on the rest,
\begin{align}
    \mathcal{H}_1 \bullet \mathcal{H}_2 &= \bigotimes_{i \in G^{12}} \mathcal{H}^{A_i} \bigotimes_{j \in G_{12}} \mathcal{H}_{A_j}, \\
    \text{where} \; G^{12} &= \left( G^1 - G_2 \right) \cup \left( G^2 - G_1 \right), \\
        G_{12} &= \left( G_1 - G^2 \right) \cup \left( G_2 - G^1 \right)
\end{align}

Analogously, given states $\Psi_1 \in \mathcal{H}_1, \Psi_2 \in \mathcal{H}_2$ expanded as
\begin{align}
    \Psi_1 &= \bigotimes_{i \in G^1} \ket{\psi_i}^{A_i} \bigotimes_{j \in G_1} \bra{\psi_j}_{A_j}, \\
    \Psi_2 &= \bigotimes_{i \in G^2} \ket{\phi_i}^{A_i} \bigotimes_{j \in G_2} \bra{\phi_j}_{A_j},
\end{align}
the composition forms the inner product on reversed pairs and the usual tensor product on the rest,
\begin{align}
    \Psi_1 \bullet \Psi_2 &= \beta \bigotimes_{i \in G^1 - G_2} \ket{\psi_i}^{A_i} \bigotimes_{i \in G^2 - G_1} \ket{\phi_i}^{A_i} \nonumber \\
    & \quad \bigotimes_{j \in G_1 - G^2} \bra{\psi_j}_{A_j} \bigotimes_{j \in G_2 - G^1} \bra{\phi_j}_{A_j}, \\
    \text{where} \; \beta &= \prod_{i \in G^1 \cap G_2} \braket{\phi_i|\psi_i} \prod_{j \in G_1 \cap G^2} \braket{\psi_j|\phi_j}
\end{align}
\end{definition}

\subsection{Positivity: Proof of Lemma \ref{lemma:spectral_thm} of the main text}\label{app:compositionpositivity}
\begin{proof}
For composition, let $\eta$ and $\lambda$ be positive vectors from $\mathcal{H}_1$ and $\mathcal{H}_2$ respectively, i.e.
\begin{align}
    \eta &= \sum_r a_r \Psi_r \otimes \Psi_r^\dagger, \quad a_r > 0, \; \Psi_r \in \mathcal{H}_1, \\
    \lambda &= \sum_s b_s \Phi_s \otimes \Phi_s^\dagger, \;\quad b_s > 0, \;\; \Phi_s \in \mathcal{H}_2.
\end{align}
Then the composition $\eta \bullet \lambda$ is also positive:
\begin{align}
    \eta \bullet \lambda &= \sum_{rs} a_r b_s \left( \Psi_r \otimes \Phi_s \right) \otimes \left( \Psi_r^\dagger \otimes \Phi_s^\dagger \right).
\end{align}

For the trace, we prove the statement for the trace over a forward evolving Hilbert space, the rest follows analogously. Let $\eta$ be a positive MT vector from the Hilbert space $\mathcal{H} \otimes \mathcal{H}^A$, i.e. $\eta \in \mathcal{H} \otimes \mathcal{H}^A \otimes \mathcal{H}^\dagger \otimes \mathcal{H}_{A^\dagger}$. Take any vector $u \in \bar{\mathcal{H}}$,
\begin{align}
    \Tr^A &\left[ \eta \right] \bullet \left( u \otimes u^\dagger \right) \nonumber \\
    &= \eta \bullet \left( \sum_i \bra{i}_A \otimes \ket{i}^{A^\dagger} \right) \bullet \left( u \otimes u^\dagger \right) \\
    &= \sum_i \eta \bullet \left( \left( u \otimes \bra{i}_A \right) \otimes \left( u^\dagger \otimes \ket{i}^{A^\dagger} \right) \right) \\
    &= \sum_i \eta \bullet \left( v_i \otimes v_i^\dagger \right) \\
    &\geq 0,
\end{align}
where $v_i = u_i \otimes \ket{i}_A \in \bar{\mathcal{H}} \otimes \mathcal{H}_A$, the reversed space to $\mathcal{H} \otimes \mathcal{H}^A$, and the last statement follows from the positivity of $\eta$. Thus $\Tr^A[\eta]$ is also positive.

\end{proof}

\subsection{Normalisation of states and instruments}\label{app:MTnormalisation}

From an operational point of view, there is no fixed norm for an MT state because such a norm would correspond to the probability that the post-selection required to create the MT state succeeded, however this can vary as there are many different post-selections that would give rise to the same MT object --- therefore the norm has no meaning on its own.

One can however pick a norm to single out a representative of an equivalence class of states, one standard choice is that an MT-state $\eta$ as in \eqref{eq:definitionMTstate} is normalised similar to a density matrix, i.e with each $\Psi_r \in \mathcal{H}$ being normalised, and the set $\{q_r\}$ being a normalised probability distribution. Note that this differs from the normalisation of process matrices: a process matrix $W$ that is equivalent to $\eta$ has norm equal to the total dimension of the backward-evolving Hilbert spaces.

There is also the normalisation of measurements in standard quantum theory: for any set of Kraus operators $\{A_k\}$ corresponding to a measurement one has that $\sum_k A_k^\dagger A_k = \mathds{1}$. This does not translate into the MT formalism as one can create more general measurements using the post-selection as discussed. Thus there is no set convention for normalising measurements. Note however a MT instrument is composed of density vectors, each being positive --- this restriction is the analog of the positivity of POVM elements $E_k = A_k^\dagger A_k$ in standard quantum theory.

\section{P-CTC framework: details and proofs}
\label{sec:AppendixPCTC}

\subsection{Action of P-CTC assisted maps: proof of lemma~\ref{lemma: PCTC_map_action}}
\label{appendix: PCTC_map_action}

\PCTCAction*
\begin{proof}
We note that this proof is almost identical to the proof of the statement for the case of $\mathcal{E}$ being a unitary, which is given in \cite{Brun2012}. However, we repeat the proof for general CPTP maps here for completeness. 

We begin by expressing $\mathcal{E}$ in its Kraus decomposition: $\mathcal{E}(\sigma_{SA})=\sum_j K_j \sigma_{SA}K_j^{\dagger}$ for all $\sigma_{SA}\in \mathcal{L}(\mathcal{H}_S)\otimes \mathcal{L}(\mathcal{H}_A)$. Here the Kraus operators $\{K_j\}_j$ are linear operators satisfying $\sum_j K_j^\dagger K_j=\id_{SA}$. Using this and applying equation~\eqref{eq: PCTC_map}, we obtain the following, where $A'$ is a system isomorphic to $A$ and $\rho_S=\sum_i \lambda_i\ket{\psi_i}\bra{\psi_i}_S$ is an arbitrary input state.\footnote{Note that any density matrix $\rho_S$ can be expressed in this form, as a convex mixture of pure states through its spectral decomposition.}
\begin{align}
\label{eq: PCTC_action_proof1}
    \begin{split}
         \mathcal{C}_{CTC}(\rho_S)&=\bra{\Phi^+}_{AA'} \mathcal{E} \otimes \id_{A'} (\rho_S\otimes \ket{\Phi^+}\bra{\Phi^+}_{AA'})\ket{\Phi^+}_{AA'}\\
         &= \sum_j \bra{\Phi^+}_{AA'} K_j \otimes \id_{A'} (\rho_S\otimes \ket{\Phi^+}\bra{\Phi^+}_{AA'}) K_j^\dagger \otimes \id_{A'} \ket{\Phi^+}_{AA'}\\
         &= \sum_j \bra{\Phi^+}_{AA'} K_j \otimes \id_{A'} (\sum_i \lambda_i \ket{\psi_i}\bra{\psi_i}_S\otimes \ket{\Phi^+}\bra{\Phi^+}_{AA'}) K_j^\dagger \otimes \id_{A'} \ket{\Phi^+}_{AA'}\\
       &=  \sum_j \sum_i \lambda_i  \bra{\Phi^+}_{AA'} K_j \otimes \id_{A'} (\ket{\psi_i}\bra{\psi_i}_S\otimes \ket{\Phi^+}\bra{\Phi^+}_{AA'}) K_j^\dagger \otimes \id_{A'} \ket{\Phi^+}_{AA'}\\
       &=\sum_j \sum_i \lambda_i  \bra{\Phi^+}_{AA'} K_j \otimes \id_{A'} (\ket{\psi_i}_S\ket{\Phi^+}_{AA'}\otimes \bra{\psi_i}_S\bra{\Phi^+}_{AA'}) K_j^\dagger \otimes \id_{A'} \ket{\Phi^+}_{AA'}.
    \end{split}
\end{align}

Consider now the term $\bra{\Phi^+}_{AA'} K_j \otimes \id_{A'} (\ket{\psi_i}_S\ket{\Phi^+}_{AA'})$, which can be simplified as follows.
\begin{align}
    \begin{split}
   \bra{\Phi^+}_{AA'} K_j \otimes \id_{A'} (\ket{\psi_i}_S\ket{\Phi^+}_{AA'})&=   \frac{1}{\sqrt{d_A}} \sum_k\bra{kk}_{AA'} K_j \otimes \id_{A'} (\ket{\psi_i}_S \frac{1}{\sqrt{d_A}}\sum_l\ket{ll}_{AA'})\\
   &=  \frac{1}{d_A} \sum_k\bra{kk}_{AA'} K_j \otimes \id_{A'} (\ket{\psi_i}_S \sum_l\ket{ll}_{AA'})\\
   &= \frac{1}{d_A} \sum_k\sum_l \bra{k}K_j\ket{l}_A \ket{\psi_i} \bra{k}\ket{l}_{A'}\\
   &=\frac{1}{d_A} \sum_k\bra{k}K_j\ket{k}_A \ket{\psi_i}_S \\
   &=\frac{1}{d_A}\tr_A[K_j]\ket{\psi_i}_S 
    \end{split}
\end{align}

Using this in equation~\eqref{eq: PCTC_action_proof1}, we immediately have the following, which establishes the statement of lemma~\ref{lemma: PCTC_map_action}.
\begin{align}
\begin{split}
\label{eq:PCTC_map_action_final}
     \mathcal{C}_{CTC}(\rho_S)&=\frac{1}{d_A^2}\sum_j\sum_i \lambda_i \tr_A[K_j] (\ket{\psi_i}\bra{\psi_i}_S )\tr_A[K_j^\dagger]\\
    &= \frac{1}{d_A^2}\sum_j \tr_A[K_j] (\sum_i \lambda_i\ket{\psi_i}\bra{\psi_i}_S )\tr_A[K_j^\dagger]\\
    &= \frac{1}{d_A^2}\sum_j \tr_A[K_j] (\rho_S)\tr_A[K_j^\dagger].
\end{split}
\end{align}

\end{proof}

\subsection{Measurement probabilities for P-CTC-assisted maps}
\label{appendix: PCTC_map_prob}

Here we derive equation~\eqref{eq: PCTC_map_prob} for the measurement probabilities associated with P-CTC-assisted maps. Consider the P-CTC assisted map $\mathcal{C}_{CTC}:\mathcal{L}(\mathcal{H}_S)\mapsto \mathcal{L}(\mathcal{H}_S)$ which consists of the CPTP map $\mathcal{E}:\mathcal{L}(\mathcal{H}_S)\otimes \mathcal{L}(\mathcal{H}_A)\mapsto \mathcal{L}(\mathcal{H}_S)\otimes \mathcal{L}(\mathcal{H}_A)$ assisted by a P-CTC on $A$. The desired probability $P(a|\mathcal{C}_{CTC},\rho_S)$ corresponds to the probability that a measurement outcome $a$ is obtained, given that the map $\mathcal{C}_{CTC}$ is implemented on an initial state $\rho_S$ and a subsequent measurement $\mathcal{M}:=\{\mathcal{M}^a\}_{a\in A}$ is performed (\cref{fig:PCTC_map_meas}). By the rule of conditional probabilities this is given as 
\begin{equation}
 P(a|\mathcal{C}_{CTC},\rho_S,\{\mathcal{M}^a\}_{a})=\frac{P(a,\mathcal{C}_{CTC}|\rho_S,\{\mathcal{M}^a\}_{a})}{P(\mathcal{C}_{CTC}|\rho_S,\{\mathcal{M}^a\}_{a})}   
\end{equation}

where $P(a, \mathcal{C}_{CTC}|\rho_S,\{\mathcal{M}^a\}_{a})$ is the probability that the map $\mathcal{C}_{CTC}$ is successfully applied \emph{and} the outcome $a$ is obtained for the measurement $\mathcal{M}=\{\mathcal{M}^a\}_{a}$ performed on the final state of $S$ given the initial state $\rho_S$. $P(\mathcal{C}_{CTC}|\rho_S, \{\mathcal{M}^a\}_{a})$ is the probability that $\mathcal{C}_{CTC}$ is successfully applied given the initial state $\rho_S$ and that the measurement $\mathcal{M}=\{\mathcal{M}^a\}_{a}$ is performed on the final state of $S$. Each of these can be calculated using the Born rule. $P(a, \mathcal{C}_{CTC}|\rho_S,\{\mathcal{M}^a\}_{a})$ is the probability that a Bell measurement on the $AA'$ subsystem, when applied to the state $\sigma_{SAA'}:=\mathcal{E}\otimes \id_{A'} (\rho_S\otimes \ket{\Phi^+}\bra{\Phi^+}_{AA'})$ yields the outcome corresponding to $\ket{\Phi^+}_{AA'}$ and a subsequent measurement $\mathcal{M}$ on $S$ yields the outcome $a$. Applying the Born rule, we have 
\begin{align}
\begin{split}
      P(a, \mathcal{C}_{CTC}|\rho_S,\{\mathcal{M}^a\}_{a})&= \tr[\sigma_{SAA'}\mathcal{M}^a\otimes \ket{\Phi^+}\bra{\Phi^+}_{AA'}]\\&=\tr_{SAA'}[(\mathcal{E}\otimes \id_{A'} (\rho_S\otimes \ket{\Phi^+}\bra{\Phi^+}_{AA'}))\mathcal{M}^a\otimes \ket{\Phi^+}\bra{\Phi^+}_{AA'}]\\&= \tr_S\Big[ \tr_{AA'}[\ket{\Phi^+}\bra{\Phi^+}_{AA'}(\mathcal{E}\otimes \id_{A'} (\rho_S\otimes \ket{\Phi^+}\bra{\Phi^+}_{AA'}))]\mathcal{M}^a\Big]\\&=
       \tr_S\Big[\bra{\Phi^+}_{AA'}(\mathcal{E}\otimes \id_{A'} (\rho_S\otimes \ket{\Phi^+}\bra{\Phi^+}_{AA'}))\ket{\Phi^+}_{AA'}\mathcal{M}^a\Big]\\&=
       \tr[\mathcal{C}_{CTC}(\rho_S)\mathcal{M}^a]
\end{split} 
\end{align}

Now consider the term $P(\mathcal{C}_{CTC}|\rho_S,\{\mathcal{M}^a\}_{a})$. We have
\begin{align}
    \begin{split}
     P(\mathcal{C}_{CTC}|\rho_S,\{\mathcal{M}^a\}_{a})&=  \sum_{a\in A} P(a, \mathcal{C}_{CTC}|\rho_S,\{\mathcal{M}^a\}_{a})\\&=\sum_{a\in A} \tr[\mathcal{C}_{CTC}(\rho_S)\mathcal{M}^a]\\&= \tr[\mathcal{C}_{CTC}(\rho_S)\sum_{a\in A}\mathcal{M}^a]\\&=\tr[\mathcal{C}_{CTC}(\rho_S)],
    \end{split}
\end{align}
where we have used the fact that $\sum_{a\in A}\mathcal{M}^a=\id$. Putting these together yields the desired equation~\eqref{eq: PCTC_map_prob}, which we repeat here for convenience. 

\begin{equation}  
\label{eq: PCTC_map_proof}P(a|\mathcal{C}_{CTC},\rho_S,\{\mathcal{M}^a\}_{a})=\frac{\tr[\mathcal{C}_{CTC}(\rho_S)\mathcal{M}^a]}{\tr[\mathcal{C}_{CTC}(\rho_S)]}
\end{equation}

\subsection{Time labelled P-CTC assisted quantum combs}
\label{appendix: PCTC_Combs}

Here, we develop the concept of time labelled P-CTC-assisted combs introduced in the main text, in full technical details. This builds upon the quantum comb formalism introduced in \cite{Chiribella2008} by including P-CTCs and time-labells therein. We begin by reviewing the original quantum combs formalism. 

\subsubsection{Review of regular quantum combs}
\label{sec: combs}

We briefly review here the quantum combs formalism from \cite{Chiribella2008}, before extending them with P-CTCs in the next sub-section.

{\bf Choi representation and link product}
The basic primitives here are linear and completely positive maps, which describe physical operations (such as quantum channels and measurements) in quantum circuits. A linear completely positive map $\mathcal{E}: \mathcal{L}(\mathcal{H}_I)\mapsto \mathcal{L}(\mathcal{H}_O)$ from a set $I$ of input systems to a set $O$ of output systems can be conveniently represented as an operator living in the joint in-output space $\mathcal{L}(\mathcal{H}_I)\otimes \mathcal{L}(\mathcal{H}_O)$ through the Choi-Jamio\l{}kowski representation \cite{Choi1975, Jamiolkowski1972}.

\begin{equation}
\mathfrak{C}(\mathcal{E})=\sum_{i,j} \ket{i}\bra{j}_I\otimes \mathcal{E}(\ket{i}\bra{j}_I)
\end{equation}

\begin{figure}
    \centering
\begin{tikzpicture}[scale=0.5, transform shape]

\begin{scope}[shift={(2.5,15)}]
    \draw (0,0) rectangle node[align=center]{\large{$\mathcal{A}$}} (3,2);

\draw(0.5,0)--(0.5,-2) node[below, yshift=-2mm] {I};  \draw (2.5,0)--(2.5,-2) node[below, yshift=-2mm] {J}; \draw (0.5,2)--(0.5,4) node[above, yshift=2mm] {K}; \draw (2.5,2)--(2.5,4) node[above, yshift=2mm] {\textbf{L}}; \node at (4,0) {\Huge{,}}; \draw[->](4,-3.5)--node[anchor=west]{\Large{composition}}(4,-5.5);

\end{scope}

\begin{scope}[shift={(7.5,15)}]
    \draw (0,0) rectangle node[align=center]{\large{$\mathcal{B}$}} (3,2);

 \draw(0.5,0)--(0.5,-2) node[below, yshift=-2mm] {\textbf{L}};  \draw (2.5,0)--(2.5,-2) node[below, yshift=-2mm] {M}; \draw (0.5,2)--(0.5,4) node[above, yshift=2mm] {N}; \draw (2.5,2)--(2.5,4) node[above, yshift=2mm] {O};

\end{scope}

\draw (0,0) rectangle node[align=center]{\large{$\mathcal{A}$}} (3,2);
\draw(0.5,0)--(0.5,-2) node[below, yshift=-2mm] {I};  \draw (2.5,0)--(2.5,-2) node[below, yshift=-2mm] {J}; \draw (0.5,2)--(0.5,8) node[above, yshift=2mm] {K}; \draw (2.5,2)--(2.5,4);

\draw (2,4) rectangle node[align=center]{\large{$\mathcal{B}$}} (5,6);
\draw (4.5,4)--(4.5,-2) node[below, yshift=-2mm] {M};\draw (2.5,6)--(2.5,8)  node[above, yshift=2mm] {N};\draw (4.5,6)--(4.5,8) node[above, yshift=2mm] {O}; 

\node at (3,3) {\textbf{L}}; \node at (6.5,3) {\Huge{$=$}};

\draw (8.5,2) rectangle node[align=center]{\large{$\mathcal{C}$}} (12.5,4);
 \draw(9,2)--(9,0) node[below, yshift=-2mm] {I};   \draw(10.5,2)--(10.5,0) node[below, yshift=-2mm] {J}; \draw(12,2)--(12,0) node[below, yshift=-2mm] {M};
\draw(9,4)--(9,6) node[above, yshift=2mm] {K};   \draw(10.5,4)--(10.5,6) node[above, yshift=2mm] {N}; \draw(12,4)--(12,6) node[above, yshift=2mm] {O};

\end{tikzpicture}
    \caption{Composition of linear CP maps}
    \label{fig:composition}
\end{figure}
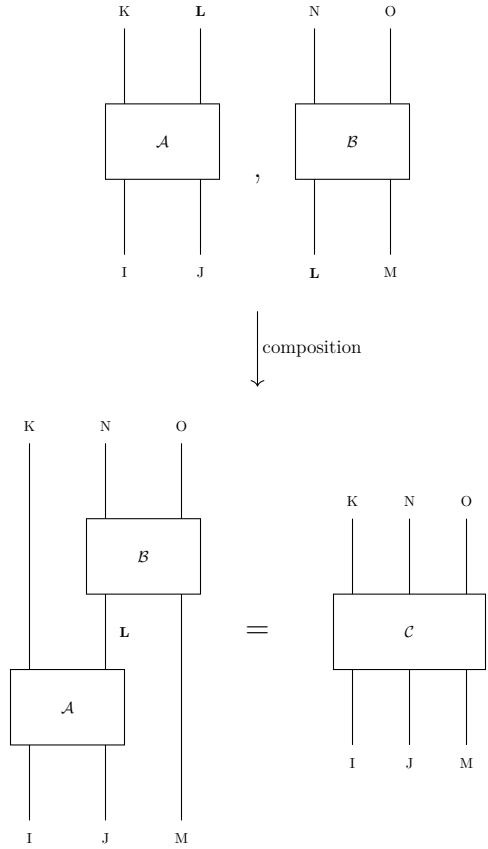

As short hand, we will denote this as $\mathcal{E}: I\mapsto O$ and $\mathfrak{C}(\mathcal{E})\in I\otimes O$ (referred to as the Choi operator) by using the system labels to also denote their state spaces, when they are clearly distinguished by the context. Moreover, we will simplify the notation for tensor products $I\otimes O$ to $IO$.

We can compose different maps together to form quantum circuits, which can involve connecting an output subsystem of one map to an input subsystem of another map where the connected systems have isomorphic state spaces. Each in/output subsystem of such a map comes with a distinct label, except pairs of in/output systems that are connected together, which will share the same label. Each input can be composed with a unique output with the same label and vice versa.

For instance, we can have a linear CP map $\mathcal{A}$ with two inputs labelled $I$ and $J$ and two outputs labelled $K$ and $L$, and another CP map $\mathcal{B}$ with two inputs labelled $L$ and $M$ and two outputs labelled $N$ and $O$ with the subsystem $L$ in both cases having the same state space. We can then connect the two maps through the system $L$ to obtain a new map $\mathcal{C}$ with inputs $I$, $J$, and $M$ and outputs $K$, $N$ and $O$ as shown in \cref{fig:composition}.
This composition can be neatly described in the Choi representation of the maps, through the concept of the link product.

\begin{definition}[Composition of linear CP maps]
The composition of two linear CP maps $\mathcal{A}: IJ\mapsto KL$ and $\mathcal{B}: LM\mapsto NO$ through the system $L$ results in a new CP map $\mathcal{C}: IJM\mapsto KNO$, whose Choi representation $\mathfrak{C}(\mathcal{C})\in IJMKNO$ is related to the Choi representations $\mathfrak{C}(\mathcal{A})\in IJKL$ and $\mathfrak{C}(\mathcal{B})\in LMNO$ of the original maps through the link product
\begin{equation}
    \mathfrak{C}(\mathcal{C})=\mathfrak{C}(\mathcal{A})*\mathfrak{C}(\mathcal{B}):=\tr_L\Big(\mathfrak{C}(\mathcal{A})^{T_L}\mathfrak{C}(\mathcal{B})\Big),
\end{equation}
   where $T_L$ denotes a partial transpose on the connecting subsystem $L$ and the product $\mathfrak{C}(\mathcal{A})^{T_L}\mathfrak{C}(\mathcal{B})$ is short hand for $(\mathfrak{C}(\mathcal{A})^{T_L}\otimes \id_{MNO})(\id_{IJK}\otimes \mathfrak{C}(\mathcal{B}))$. 
\end{definition}

The link product is commutative, $\mathfrak{C}(\mathcal{A})*\mathfrak{C}(\mathcal{B})=\mathfrak{C}(\mathcal{B})*\mathfrak{C}(\mathcal{A})$ due to cyclicity of the trace and can also be shown to be associative $(\mathfrak{C}(\mathcal{A})*\mathfrak{C}(\mathcal{B}))*\mathfrak{C}(\mathcal{C})=\mathfrak{C}(\mathcal{A})*(\mathfrak{C}(\mathcal{B})*\mathfrak{C}(\mathcal{C}))$ for any maps $\mathcal{A}$, $\mathcal{B}$ and $\mathcal{C}$.

It reduces to the tensor product $\mathfrak{C}(\mathcal{A})*\mathfrak{C}(\mathcal{B})=\mathfrak{C}(\mathcal{A})\otimes \mathfrak{C}(\mathcal{B})$ when the connecting system $L$ is trivial. Moreover, using the link product, the action of a map $\mathcal{E}: I\mapsto O$ on an input state $\rho_I$ can be written as $\mathcal{E}(\rho_I)=\mathfrak{C}(\mathcal{E})*\rho_I$, and since the Choi operator of the trace is the identity operator, we can write $\tr_O(\mathcal{E}(\rho))=\id_O*\mathfrak{C}(\mathcal{E})*\rho_I$.

\begin{figure}
    \centering
\begin{tikzpicture}[scale=0.5, transform shape]

\draw (0,0) rectangle node[align=center]{\large{$\mathcal{A}$}} (3,2);

 \draw(0.5,-3)--(0.5,0);  \draw (2.5,-3)--(2.5,0); \draw (0.5,2)--(0.5,4); \draw (2.5,2)--(2.5,4); 
 
\draw (4.5,-3)--(4.5,4);\draw (2,4) rectangle node[align=center]{\large{$\mathcal{D}$}} (5,6);

\draw (-0.5,4) rectangle node[align=center]{\large{$\mathcal{B}$}} (1.5,6); \draw (2.5,6)--(2.5,8);\draw (4.5,6)--(4.5,10); \draw (0.5,6)--(0.5,8);

\draw (-2,8) rectangle node[align=center]{\large{$\mathcal{E}$}} (1,10);  \draw(-1.5,-3)--(-1.5,8);

\draw (-1.5,10)--(-1.5,12); \draw (0.5,10)--(0.5,12); \draw (2.5,10)--(2.5,12); \draw (4.5,10)--(4.5,12);

\draw (1.5,8) rectangle node[align=center]{\large{$\mathcal{F}$}} (3.5,10);
\draw[blue, fill=blue!70!white, fill opacity=0.4] (-2.5,-2.5) rectangle (5.5,11.5); 
\draw[blue, fill=white] (-0.2,-1.7) rectangle (1.2,-0.3); 
\draw[blue, fill=white] (-0.2,2.3) rectangle (3.2,3.7);  \draw[blue, fill=white] (1.8,6.3) rectangle (3.2,7.7); 

\draw (0.5,-1.7)--(0.5,-1.5); \draw (0.5,-0.5)--(0.5,-0.3); \draw (0.5,2.3)--(0.5,2.5); \draw (0.5,3.5)--(0.5,3.7); \draw (2.5,2.3)--(2.5,2.5); \draw (2.5,3.5)--(2.5,3.7); \draw (2.5,6.3)--(2.5,6.5); \draw (2.5,7.5)--(2.5,7.7); 

\node at (7,4.5) {\Large{$\longleftrightarrow$}};

\begin{scope}[shift={(8.5,0)}]

 \draw[blue, fill=blue!70!white, fill opacity=0.4] (0,-2.5)--(0,11.5)--(5,11.5)--(5,9.5)--(3,9.5)--(3,7.5)--(5,7.5)--(5,5.5)--(3,5.5)--(3,3.5)--(5,3.5)--(5,1.5)--(3,1.5)--(3,-0.5)--(5,-0.5)--(5,-2.5)--cycle;

\draw(3.75,-0.5)--(3.75,0); \draw(4.25,-0.5)--(4.25,-0); \node at (4,-0.25) {$\dots$}; \draw(3.75,1)--(3.75,1.5); \draw(4.25,1)--(4.25,1.5); \node at (4,1.25) {$\dots$}; 

\begin{scope}[shift={(0,4)}]

\draw(3.75,-0.5)--(3.75,0); \draw(4.25,-0.5)--(4.25,-0); \node at (4,-0.25) {$\dots$}; \draw(3.75,1)--(3.75,1.5); \draw(4.25,1)--(4.25,1.5); \node at (4,1.25) {$\dots$}; 

\end{scope}
\begin{scope}[shift={(0,8)}]

\draw(3.75,-0.5)--(3.75,0); \draw(4.25,-0.5)--(4.25,-0); \node at (4,-0.25) {$\dots$}; \draw(3.75,1)--(3.75,1.5); \draw(4.25,1)--(4.25,1.5); \node at (4,1.25) {$\dots$}; 

\end{scope}

\begin{scope}[shift={(0,-4)}]
\draw(3.75,1)--(3.75,1.5); \draw(4.25,1)--(4.25,1.5); \node at (4,1.25) {$\dots$}; 
\end{scope}

\begin{scope}[shift={(0,12)}]
\draw(3.75,-0.5)--(3.75,0); \draw(4.25,-0.5)--(4.25,-0); \node at (4,-0.25) {$\dots$}; 

\end{scope}

\end{scope}

\end{tikzpicture}
    \caption{Any quantum circuit with ``empty slots'' (left) can be represented in the form of a quantum comb (right), by stretching and rearranging internal wires \cite{Chiribella2008}.}
    \label{fig:comb}
\end{figure}
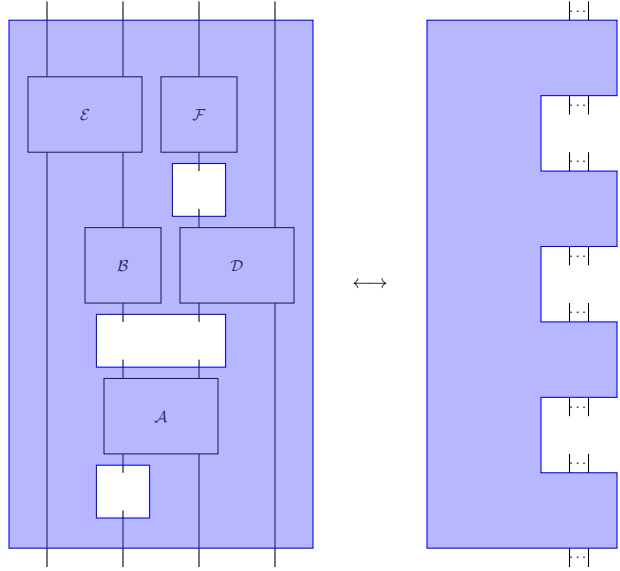

{\bf Quantum combs}
The composition of maps in \cref{fig:composition} is an example of a quantum circuit with no explicit ``empty slots'' for plugging in external operations. 

A quantum circuit with some number $N$ of ``empty slots'' can be described by a general object called a quantum comb \cite{Chiribella2008}, as depicted in \cref{fig:comb}. It is so called because its general structure as seen from the right hand side of the figure,
resembles a comb where the space between adjacent teeth of the comb correspond to the empty slots. The internal structure of a general comb is specified by the following definition.

\begin{definition}[Quantum comb] An $N$ slot quantum comb is a linear CPTP map $\mathcal{C}$ with inputs $S_0,S_2,...,S_{2N}$ and outputs $S_1,S_3,...,S_{2N+1}$ where each $S_j$ can generally be a set of systems. The comb has a well-defined causal order specified by a set $\{\mathcal{E}_i: S_{2i-2}\otimes Q_{i-1}\mapsto S_{2i-1}\otimes Q_{i}\}_{i=1}^{N+1}$ of CPTP maps, where $Q_j$ corresponds to some memory systems (with $Q_0$ and $Q_N$ being trivial i.e., 1D systems). Then the comb is given as (in the Choi representation)
\begin{equation}
    \mathfrak{C}(\mathcal{C})=\mathfrak{C}(\mathcal{E}_{N+1})*\mathfrak{C}(\mathcal{E}_N)*...*\mathfrak{C}(\mathcal{E}_2)*\mathfrak{C}(\mathcal{E}_1),
\end{equation}where each map $\mathcal{E}_i$ is composed with the next map $\mathcal{E}_{i+1}$ through the memory system $Q_{i}$ resulting in a sequential composition of the maps in the order or their indices. The j$^{th}$ slot of an $N$ slot comb can accept a CP map with input $S_{2j-1}$ and output $S_{2j}$ (with $j\in \{1,...,N\}$), where $S_0$ and $S_{2N+1}$ are systems associated with the global past and global future of the causal order of the comb.  
\end{definition}

The above definition and labelling convention is illustrated in \cref{fig:comb_sequence} for a $N=3$ slot comb. Notice that inputs to the comb are outputs to the slots and vice-versa. Moreover, by deforming the wires, we can see that a comb can be equivalently represented as depicted on the right of \cref{fig:comb_sequence} where the slot structure is not so apparent, but it becomes apparent that $\mathcal{C}$ is a CPTP map from $S_0,S_2,...,S_{2N}$ to $S_1,S_3,...,S_{2N+1}$ (since it is a sequential composition of the CPTP maps $\{\mathcal{E}_i\}_{i=1}^{N+1}$). A comb thus provides a general representation of a quantum network which can involve arbitrary preparations, measurements and channels.

{\bf Measurement probabilities}
Generically, the link product describes the composition of any two CP maps. In particular, composing two maps $\mathcal{M}$ and $\mathcal{W}$ where all the in/output systems of the maps are connected through the composition, results in a real number (a map with trivial in and outputs). If $\mathcal{M}$ is an element of a set of maps $\{\mathcal{M}^a\}_a$ associated with different outcomes $a$, with $\sum_a\mathcal{M}^a$ being CPTP, then the probability of obtaining a particular outcome $a=a^*$ when the measurement is composed with $\mathcal{W}$ is given as \cite{Chiribella2009}
\begin{equation}
    \label{eq:comb_prob}
    P(a^*|\mathcal{W},\{\mathcal{M}^a\}_a)=\mathfrak{C}(\mathcal{W})*\mathfrak{C}(\mathcal{M}^{a^*}).
\end{equation}
These maps can themselves be combs that describe two different quantum networks.

Specifically, consider an $N$-slot comb, with a preparation of a state $\rho_0$ on $S_0$ and a measurement performed in each slot, on the systems $S_{2j-1}$ entering the j$^{th}$ slot with a post-measurement state fed back to the comb through the output of the slot, $S_{2j}$. This can be modelled by assigning a quantum instrument $\mathcal{M}_j:=\{\mathcal{M}_j^{x_j}: S_{2j-1}\mapsto S_{2j}\}_{x_j\in X_j}$ to each slot, which is characterised by a set of linear CP maps $\mathcal{M}_j^{x_j}$, one for each outcome $x_j$ of the measurement, taking values in some set $X_j$ where $\sum_{x_j\in X_j}\mathcal{M}_j^{x_j}$ is a linear CPTP map. Then we can apply the probability rule $P(a|\mathcal{W},\{\mathcal{M}^a\}_a)=\mathfrak{C}(\mathcal{W})*\mathfrak{C}(\mathcal{M}^a)$ taking $\mathcal{W}$ to be the map obtained by applying the comb $\mathcal{C}$ on $\rho_0$ and tracing out the future systems $S_{2N+1}$ (i.e., $\mathfrak{C}(\mathcal{W})=\id_{S_{2N+1}}*\mathfrak{C}(\mathcal{C})*\rho_0$), and the measurement to be given by the tensor product of the $N$ measurements applied in each slot (thus each $a$ in this context is a set of measurement $N$ outcomes $(x_1,...,x_N)$). This yields, using the associativity of the link product, where we drop the conditioning on the $N$ measurements (one per slot) for brevity, since it is clear from context.
\begin{align}
\label{eq: comb_prob_product}
\begin{split}
     &P(x_1,...,x_N|\mathcal{C}, \rho_0)\\=&(\id_{S_{2N+1}}*\mathfrak{C}(\mathcal{C})*\rho_0)*(\bigotimes_{j=1}^N \mathfrak{C}(\mathcal{M}_j^{x_j}))\\
     =&\Tr\Big(\bigotimes_{j=1}^N \mathfrak{C}(\mathcal{M}_j^{x_j})*\mathfrak{C}(\mathcal{C})*\rho_0\Big)   
\end{split}
\end{align}
It can be shown that for all choices of instruments plugged in the slots and all states $\rho_0$, $P(x_1,...,x_N|\mathcal{C}, \rho_0)$ as defined here is a valid normalised distribution \cite{Chiribella2008, Chiribella2013} i.e., $\sum_{x_1\in X_1,...,x_N\in X_N}P(x_1,...,x_N|\mathcal{C}, \rho_0)=1$. In the above, the trace is over the systems $S_{2N+1}$ in the global future.

\subsubsection{Time labelling}
A comb as depicted on the left hand side of \cref{fig:comb_sequence} implies a direction of ``time'', in terms of the flow of information it describes, in this case: bottom to top. Notice that we can plug in a physical operation in the first slot, with inputs to the operation being $S_1$ and outputs being $S_2$ (and similarly for subsequent slots). 
On the right side of \cref{fig:comb_sequence} however, having an external channel from $S_1$ to $S_2$ will lead to a flow of information from the top to the bottom of the page, while the internal operations $\mathcal{E}_i$ lead to a flow from bottom to top. To bring our discussion of combs closer to the multi-time formalism, we will add time labels to the in/output systems and distinguish between the two sides of \cref{fig:comb_sequence} in our notation, when referring to time labelled objects.

\begin{definition}[Time labelled quantum comb]
    An $N$-slot time-labelled quantum comb $\mathcal{C}^{\mathbf{t}}$ is an $N$-slot quantum comb $\mathcal{C}$ together with a time label $t_j$ assigned to each set of in/output systems $S_j$ (for $j\in \{0,...,2N+1\}$) such that $t_i<t_j$ whenever $i<j$. We will always depict time-labelled combs in the form given on the left hand side of \cref{fig:comb_sequence}, where time flows from the bottom to the top of the page. 
\end{definition}

Notice that in particular, a single CPTP map is a zero slot comb with input systems $S_0$ and output systems $S_1$, where its time labelled version would be associated with in and output times $t_0$ and $t_1>t_0$.

\subsubsection{Including P-CTCs}

We now have all the ingredients to define the most general P-CTC assisted objects that will be considered in this paper. We define the time unlabelled and time labelled versions separately, starting with the former below.

\begin{definition}[P-CTC assisted comb]
\label{definition:ctc_comb}
A P-CTC assisted comb is a linear CP map $\mathcal{C}_{\text{CTC}}$ with inputs $S'_0,S_2,...,S_{2N}$ and outputs $S_1,S_3,...,S'_{2N+1}$ where each $S_j$, $S'_0$ and $S'_{2N+1}$ can generally be a set of systems. It is obtained from a comb $\mathcal{C}$ with inputs $S_0,S_2,...,S_{2N}$ and outputs $S_1,S_3,...,S_{2N+1}$, where $S_0:=S'_0\otimes A$ and $S_{2N+1}:=S'_{2N+1}\otimes A$, by including P-CTCs connecting the ancillaries $A$ (from the output to input of the comb). Noting that the comb is itself a CPTP map, and denoting $\ket{\Phi^+}_{AA'}$ as the maximally entangled state on the set of systems $A$ and isomorphic systems $A'$, this is given as
\begin{equation}
\label{eq: CTC_comb}
 \mathcal{C}_{\text{CTC}}=\bra{\Phi^+}_{AA'}\mathcal{C}\otimes \id_{A'} (\ket{\Phi^+}\bra{\Phi^+}_{AA'}) \ket{\Phi^+}_{AA'}.
\end{equation}

\end{definition}

Notice that in the above equation $\mathcal{C}\otimes \id_{A'}$ is a map with inputs $A',A,S'_0,S_2,...,S_{2N}$ to outputs $A',A,S'_1,S_3,...,S'_{2N+1}$. Upon acting on the state $(\ket{\Phi^+}\bra{\Phi^+}_{AA'})$ and with the final post-selection on the same state, this becomes a linear CP (but not necessarily TP) map from inputs $S'_0,S_2,...,S_{2N}$ to outputs $S_1,S_3,...,S'_{2N+1}$, which is the desired map $ \mathcal{C}_{\text{CTC}}$.

{\bf General form} Analogous to the case of quantum combs, any P-CTC assisted circuit with ``empty slots'' can be expressed in the form of a P-CTC assisted comb, by stretching and rearranging wires. This is illustrated in \cref{fig: CTC_comb}. More explicitly, by a P-CTC assisted circuit with ``empty slots'' we mean any standard quantum circuit (flowing from bottom to top of the page) which can include additional P-CTCs from outputs of later maps (higher on the page) to inputs of earlier maps (lower on the page). Such a P-CTC assisted circuit includes compositions of the maps involved in forming the original circuit along with compositions coming from the CTCs (both connect an output to an input). It can be shown that the description of such a circuit is independent of the order in which the all the compositions are performed \cite{vilasini2022embedding}, and in particular, we choose to perform the original circuit compositions first (at which stage we can equivalently transform to a regular quantum comb as in \cref{fig:comb}) and then perform all the CTC compositions. By stretching the CTC wires in both directions, we can immediately obtain the form on the right hand side of \cref{fig: CTC_comb} where all the CTCs flow from global future to global past. This also follows from the associativity of the link product, by modelling the Bell state preparation and measurement involved in the CTC as CP maps.

Finally, we can also represent a P-CTC assisted comb equivalently as a P-CTC assisted map through a circuit isomorphism as shown in \cref{fig:CTCcomb_sequence}. Notice however that plugging in maps in the slot of a P-CTC assisted comb then require additional P-CTCs in order to be replicated (in an operationally equivalent way) in the representation of the comb as a CPTP map.

{\bf Measurement probabilities} 
Similarly to the discussion of the general probability rule for combs, consider a CP map $\mathcal{W}$ and another CP map $\mathcal{M}^a$ which is an element of a set of maps for different outcomes $a$. Generally, if $\mathcal{W}$ is not also trace preserving, then the expression of \cref{eq:comb_prob} does not yield normalised probabilities i.e., $\sum_a \mathfrak{C}(\mathcal{W})*\mathfrak{C}(\mathcal{M}^a)\neq 1$, even when $\sum_a\mathcal{M}^a$ is CPTP. However, the following expression for the probability of a particular outcome $a=a^*$ in this scenario does, by construction, define a normalised probability distribution, due to the presence of the denominator.
\begin{equation}
\label{eq:PCTC_comb_prob_gen}
    P(a^*|\mathcal{W},\{\mathcal{M}^a\}_a)=\frac{\mathfrak{C}(\mathcal{W})*\mathfrak{C}(\mathcal{M}^{a^*})}{\sum_a \mathfrak{C}(\mathcal{W})*\mathfrak{C}(\mathcal{M}^{a^*})}.
\end{equation}

In particular, if $\mathcal{W}$ is obtained by applying a P-CTC assisted map $\mathcal{C}_{CTC}:\mathcal{L}(\mathcal{H}_S)\mapsto \mathcal{L}(\mathcal{H}_S)$ to an initial state $\rho_S$ i.e., $\mathfrak{C}(\mathcal{W})=\mathfrak{C}(\mathcal{C}_{CTC})*\rho_S=\mathcal{C}_{CTC}(\rho_S)$ and $\{\mathcal{M}^a\}_a$ is a POVM on $S$, then the above expression gives 

\begin{align}
    \begin{split}
   &P(a^*|\mathcal{W},\{\mathcal{M}^a\}_a)\\&=  \frac{ \tr(\mathcal{C}_{CTC}(\rho_S)\mathfrak{C}(\mathcal{M}^{a^*})^T)}{\sum_a \tr(\mathcal{C}_{CTC}(\rho_S)\mathfrak{C}(\mathcal{M}^{a})^T)}  \\&=\frac{\tr(\mathcal{C}_{CTC}(\rho_S)\mathcal{M}^{a^*})}{\sum_a\tr(\mathcal{C}_{CTC}(\rho_S)\mathcal{M}^a)}\\&=
   \frac{\tr(\mathcal{C}_{CTC}(\rho_S)\mathcal{M}^{a^*})}{\tr(\mathcal{C}_{CTC}(\rho_S))}=P(a^*|\mathcal{C}_{CTC},\rho_S,\{\mathcal{M}^a\}_a),
    \end{split}
\end{align}
where we used the properties of the link product discussed in \cref{sec: combs} and the POVM condition that $\sum_a\mathcal{M}^a=\id_S$. This is precisely the expression we obtained for the special case of CTC assisted maps in \cref{eq: PCTC_map_prob}.

More generally, we can consider an $N$-slot P-CTC assisted comb and plug in a quantum instrument $\mathcal{M}_j=\{\mathcal{M}_j^{x_j}\}_{x_j\in X_j}$ in each slot, along with an initial state $\rho_0$ on $S'_0$ and tracing out the global future $S'_{2N+1}$. The resulting measurement probabilities are given as follows, using \cref{eq:PCTC_comb_prob_gen} with $\mathfrak{C}(\mathcal{W})=\id_{S'_{2N+1}}*\mathfrak{C}(\mathcal{C}_{CTC})*\rho_0$, and $\mathfrak{C}(\mathcal{M}^a)$ replaced with $\bigotimes_{j=1}^N \mathfrak{C}(\mathcal{M}_j^{x_j})$ and invoking the associativity of the link product. In the following, we drop the conditioning on the $N$ measurements, as we did in \cref{eq: comb_prob_product}, for simplicity and since it is clear from context.

\begin{align}
\begin{split}
     \label{eq: pctc_comb_prob}
P(x_1,...,x_N|\mathcal{C}_{CTC},\rho_0)&= \frac{(\id_{S'_{2N+1}}*\mathfrak{C}(\mathcal{C}_{CTC})*\rho_0)*(\bigotimes_{j=1}^N \mathfrak{C}(\mathcal{M}_j^{x_j}))}{\sum_{x_1,...,x_N}(\id_{S'_{2N+1}}*C_{CTC}*\rho_0)*(\bigotimes_{j=1}^N \mathfrak{C}(\mathcal{M}_j^{x_j}))}\\
&=\frac{\Tr\Big(\bigotimes_{j=1}^N \mathfrak{C}(\mathcal{M}_j^{x_j})*\mathfrak{C}(\mathcal{C}_{CTC})*\rho_0\Big)}{\sum_{x_1,...,x_N}\Tr\Big(\bigotimes_{j=1}^N \mathfrak{C}(\mathcal{M}_j^{x_j})*\mathfrak{C}(\mathcal{C}_{CTC})*\rho_0\Big)} 
\end{split}
\end{align}

Notice that this is identical to \cref{eq: comb_prob_product} whenever the denominator $\sum_{x_1,...,x_N}\Tr\Big(\bigotimes_{j=1}^N \mathfrak{C}(\mathcal{M}_j^{x_j})*\mathfrak{C}(\mathcal{C}_{CTC})*\rho_0\Big)=1$ (with $\mathfrak{C}(\mathcal{C}_{CTC})$ taking the place of $\mathfrak{C}(\mathcal{C})$), and therefore recovers \cref{eq: comb_prob_product} when there are no P-CTCs. This is generally not the case when we have non-trivial P-CTCs, which lead to an unnornmalised numerator due to the post-selection involved. Indeed, recalling from \cref{eq: CTC_comb} that $\mathcal{C}_{CTC}$ is obtained by assisting a CPTP map $\mathcal{C}$ representing a comb through P-CTCs, we recognise the numerator of \cref{eq: pctc_comb_prob} as the joint probability of the outcomes $x_1,...,x_N$ along with the outcome $\Phi^+$ in the post-selecting measurement of the P-CTC, while the denominator is the probability of the latter alone. This makes $P(x_1,...,x_N|\mathcal{C}_{CTC},\rho_0)$ a conditional probability, where we condition on the success of the P-CTC's post-selection.

An important observation to make at this point is that replacing $\mathcal{C}$ with $k\mathcal{C}$ for any constant $k$, changes $\mathcal{C}_{\text{CTC}}$ by the same factor of $k$, but leaves the probabilities according to the above expressions, unaffected as the factor would appear both in the numerator and denominator and cancel out.

We can also consider non-product operations plugged in the slots of a P-CTC assisted comb $\mathcal{C}_{\text{CTC}}$, such as another P-CTC assisted comb $\mathcal{D}_{\text{CTC}}$ of the right type. Again it follows from the arguments for regular combs (see \cref{fig:CTC_comb_nonprod})) and the order independence of compositions involving P-CTCs noted before \cite{vilasini2022embedding}, that the composition of any such $\mathcal{C}_{\text{CTC}}$ and $\mathcal{D}_{\text{CTC}}$ can be equivalently seen as a composition of a new P-CTC assisted comb $\mathcal{C}'_{\text{CTC}}$ with a product operation (or memoriless quantum comb) $\mathcal{D}_{\otimes}$ i.e., all the P-CTCs of $\mathcal{D}_{\text{CTC}}$ can be absorbed into $\mathcal{C}'_{\text{CTC}}$. This is shown in \cref{fig:CTC_comb_nonprod}.

In other words, when computing probabilities of non-product measurements in a P-CTC assisted comb $\mathcal{C}_{\text{CTC}}$ (where the measurement itself may involve P-CTCs), we can always transform the computation to an equivalent one involving a new P-CTC assisted comb $\mathcal{C}_{\text{CTC}}$ and product measurements where the measurements no longer contain any P-CTCs.

\begin{figure*}[ht!]
 \centering
\begin{tikzpicture}[scale=0.8, transform shape]

\draw[blue] (-3,6.75)--(-3,8); \draw[blue] (-3,0.25)--(-3,-1);
\draw[blue] (-4.5,-1)--(-4.5,8); \draw[blue] (-4.5,8) arc (180:0:0.75); \draw[blue] (-4.5,-1) arc (180:360:0.75); \node[blue] at (-3.25,7.5) {$A$}; \node[blue] at (-3.25,-0.5) {$A$};

\draw[red] (1,5.25)--(1,6.5); \draw[red] (1,1.75)--(1,0.5); \draw[red] (2.5,0.5)--(2.5,6.5); \draw[red] (2.5,6.5) arc (0:180:0.75); \draw[red] (1,0.5) arc (180:360:0.75); \node[red] at (1.25,6) {$B$}; \node[red] at (1.25,1) {$B$};

 \draw[blue, fill=blue!70!white, fill opacity=0.4] (0,0)--(0,1)--(-2,1)--(-2,3)--(0,3)--(0,4)--(-2,4)--(-2,6)--(0,6)--(0,7)--(-4,7)--(-4,0)--cycle;
 \draw[red, fill=red!70!white, fill opacity=0.4] (1.7,1.5)--(-1.7,1.5)--(-1.7,2.5)--(0.5,2.5)--(0.5,4.5)--(-1.7,4.5)--(-1.7,5.5)--(1.7,5.5)--cycle;

\draw(-0.75,-0.5)--(-0.75,0)  node[midway, anchor=west, xshift=2mm] {$S_0$}; \draw(-1.25,-0.5)--(-1.25,-0); \node at (-1,-0.25) {$\dots$}; \draw(-1.25,0)--(-1.25,0.25);\draw(-0.75,0)--(-0.75,0.25); \draw(-0.75,0.75)--(-0.75,1.75) node[midway, anchor=west, xshift=2mm] {$S_1$}; \draw(-1.25,0.75)--(-1.25,1.75); \node at (-1,1.25) {$\dots$};

\begin{scope}[shift={(0,3)}]
 \draw(-0.75,-0.75)--(-0.75,0.25)  node[midway, anchor=west, xshift=2mm] {$S_2$}; \draw(-1.25,-0.75)--(-1.25,0.25); \node at (-1,-0.25) {$\dots$}; \draw(-0.75,0.75)--(-0.75,1.75) node[midway, anchor=west, xshift=2mm] {$S_3$}; \draw(-1.25,0.75)--(-1.25,1.75); \node at (-1,1.25) {$\dots$};

\end{scope}

\begin{scope}[shift={(0,6)}]
 \draw(-0.75,-0.75)--(-0.75,0.25)  node[midway, anchor=west, xshift=2mm] {$S_4$}; \draw(-1.25,-0.75)--(-1.25,0.25); \node at (-1,-0.25) {$\dots$}; \draw(-0.75,1)--(-0.75,1.5) node[midway, anchor=west, xshift=2mm] {$S_5$}; \draw(-1.25,1)--(-1.25,1.5); \node at (-1,1.25) {$\dots$};  \draw(-1.25,0.75)--(-1.25,1); \draw(-0.75,0.75)--(-0.75,1);
 
\end{scope}

\draw[blue] (-3.5,0.25) rectangle node[align=center]{$\mathcal{E}_1$}(-0.5,0.75);
\draw[blue] (-3.5,3.25) rectangle node[align=center]{$\mathcal{E}_2$} (-0.5,3.75);
\draw[blue] (-3.5,6.25) rectangle node[align=center]{$\mathcal{E}_3$} (-0.5,6.75); \draw[blue] (-3,0.75)-- node[anchor=east] {$Q_1$}(-3,3.25); \draw[blue] (-3,3.75)-- node[anchor=east] {$Q_2$} (-3,6.25);

\draw[red] (-1.5,1.75) rectangle node[align=center]{$\mathcal{F}_1$} (1.25,2.25);
\draw[red] (-1.5,4.75) rectangle node[align=center]{$\mathcal{F}_2$} (1.25,5.25); \draw[red] (1,2.25)-- node[anchor=west] {$R_1$}(1,4.75);

\node at (-1,-2.5) {\huge{\textcolor{blue}{$\mathcal{C}_{\mathrm{CTC}}$}*\textcolor{red}{$\mathcal{D}_{\mathrm{CTC}}$}}};

\node at (4.25,3.5) {\huge{$\cong$}};

\begin{scope}[shift={(11,0)}]
\draw[blue] (-3,6.75)--(-3,8); \draw[blue] (-3,0.25)--(-3,-1);
\draw[blue] (-4.5,-1)--(-4.5,8); \draw[blue] (-4.5,8) arc (180:0:0.75); \draw[blue] (-4.5,-1) arc (180:360:0.75);
\node[blue] at (-3.5,8.25) {$A$}; \node[blue] at (-3.5,-1.25) {$A$};

\draw (1,5.25)--(1,6.25); \node at (1.25,5.75) {$B$};
\draw (1,1.75)--(1,0.75); \node at (1.25,1.25) {$B$};
\draw[blue] (1,6.25)--(1,6.75); \draw[blue] (1,0.75)--(1,0.25); \draw[blue] (1,6.75) to[out=90,in=270](-2.5,8);
\draw[blue] (1,0.25) to[out=270,in=90](-2.5,-1); \draw[blue] (-2.5,8)--(-2.5,9); \draw[blue] (-2.5,-1)--(-2.5,-2); \draw[blue] (-5,-2)--(-5,9); \draw[blue] (-5,-2) arc (180:360:1.25); \draw[blue] (-5,9) arc (180:0:1.25); \node[blue] at (-2.75,8.25) {$B$}; \node[blue] at (-2.75,-1.25) {$B$};

\draw[dashed, blue] (-0.5,0.75)--(1.25,0.75)--(1.25,0.25)--(-0.5,0.25); \node[blue] at (0,0.5) {$\mathcal{E}'_1$};
\draw[dashed, blue] (-0.5,3.75)--(1.25,3.75)--(1.25,3.25)--(-0.5,3.25); \node[blue] at (0,3.5) {$\mathcal{E}'_2$};
\draw[dashed, blue] (-0.5,6.75)--(1.25,6.75)--(1.25,6.25)--(-0.5,6.25); \node[blue] at (0,6.5) {$\mathcal{E}'_3$};

\node at (0,-2.5) {\huge{\textcolor{blue}{$\mathcal{C}'_{\mathrm{CTC}}$}*\textcolor{red}{$\mathcal{D}_{\otimes}$}}};

     \draw[blue, fill=blue!70!white, fill opacity=0.4] (1.7,-1)--(1.7,1)--(-2,1)--(-2,3)--(1.7,3)--(1.7,4)--(-2,4)--(-2,6)--(1.7,6)--(1.7,8)--(-4,8)--(-4,-1)--cycle;

     \draw(-0.75,-1)--(-0.75,-1.5)  node[midway, anchor=west, xshift=2mm] {$S_0$}; \draw(-1.25,-1)--(-1.25,-1.5) ;
      \draw(-0.75,-0.25)--(-0.75,0.25);
      \draw(-1.25,-0.25)--(-1.25,0.25); 
      \draw(-0.75,-0.45)--(-0.75,-1);
      \draw(-1.25,-0.45)--(-1.25,-1); 
      \node at (-1,-1.25) {$\dots$};

     \draw(-0.75,0.75)--(-0.75,1.75) node[midway, anchor=west, xshift=2mm] {$S_1$}; \draw(-1.25,0.75)--(-1.25,1.75); \node at (-1,1.25) {$\dots$};

\begin{scope}[shift={(0,3)}]
 \draw(-0.75,-0.75)--(-0.75,0.25)  node[midway, anchor=west, xshift=2mm] {$S_2$}; \draw(-1.25,-0.75)--(-1.25,0.25); \node at (-1,-0.25) {$\dots$}; \draw(-0.75,0.75)--(-0.75,1.75) node[midway, anchor=west, xshift=2mm] {$S_3$}; \draw(-1.25,0.75)--(-1.25,1.75); \node at (-1,1.25) {$\dots$};

\end{scope}

\begin{scope}[shift={(0,6)}]
 \draw(-0.75,-0.75)--(-0.75,0.25)  node[midway, anchor=west, xshift=2mm] {$S_4$}; \draw(-1.25,-0.75)--(-1.25,0.25); \node at (-1,-0.25) {$\dots$};

\draw(-0.75,2)--(-0.75,2.5) node[midway, anchor=west, xshift=2mm] {$S_5$};  \draw(-1.25,2)--(-1.25,2.5);\node at (-1,2.25) {$\dots$};
 \draw(-0.75,0.75)--(-0.75,1.25);  \draw(-1.25,0.75)--(-1.25,1.25); \draw(-0.75,1.45)--(-0.75,2);  \draw(-1.25,1.45)--(-1.25,2); 
 
\end{scope}

\draw[blue] (-3.5,0.25) rectangle node[align=center]{$\mathcal{E}_1$}(-0.5,0.75);
\draw[blue] (-3.5,3.25) rectangle node[align=center]{$\mathcal{E}_2$} (-0.5,3.75);
\draw[blue] (-3.5,6.25) rectangle node[align=center]{$\mathcal{E}_3$} (-0.5,6.75); \draw[blue] (-3,0.75)-- node[anchor=east] {$Q_1$}(-3,3.25); \draw[blue] (-3,3.75)-- node[anchor=east] {$Q_2$} (-3,6.25);

\draw[red, fill=red!70!white, fill opacity=0.4, text opacity=1] (-1.5,1.75) rectangle node[align=center]{$\mathcal{F}_1$} (1.25,2.25);
\draw[red, fill=red!70!white, fill opacity=0.4, text opacity=1]  (-1.5,4.75) rectangle node[align=center]{$\mathcal{F}_2$} (1.25,5.25); \draw (1,2.25)--(1,3.25);\draw (1,3.75)--(1,4.75); \draw[blue] (1,3.25)--(1,3.75);  \node at (1.5,2.65) {$R_1$}; \node at (1.5,4.35) {$R_1$};
\end{scope}
\end{tikzpicture}
    \caption{A composition of two P-CTC assisted combs $\mathcal{C}_{\mathrm{CTC}}$ and $\mathcal{D}_{\mathrm{CTC}}$ (left) can be equivalently viewed as a composition of a new P-CTC assisted comb $\mathcal{C}'_{\mathrm{CTC}}$ with a comb $\mathcal{D}_{\otimes}$ (which has no CTCs), where $\mathcal{D}_{\otimes}$ factorises into a product of the internal CPTP maps $\mathcal{F}_i$ of $\mathcal{D}_{\mathrm{CTC}}$ (right).   }
    \label{fig:CTC_comb_nonprod}
\end{figure*}
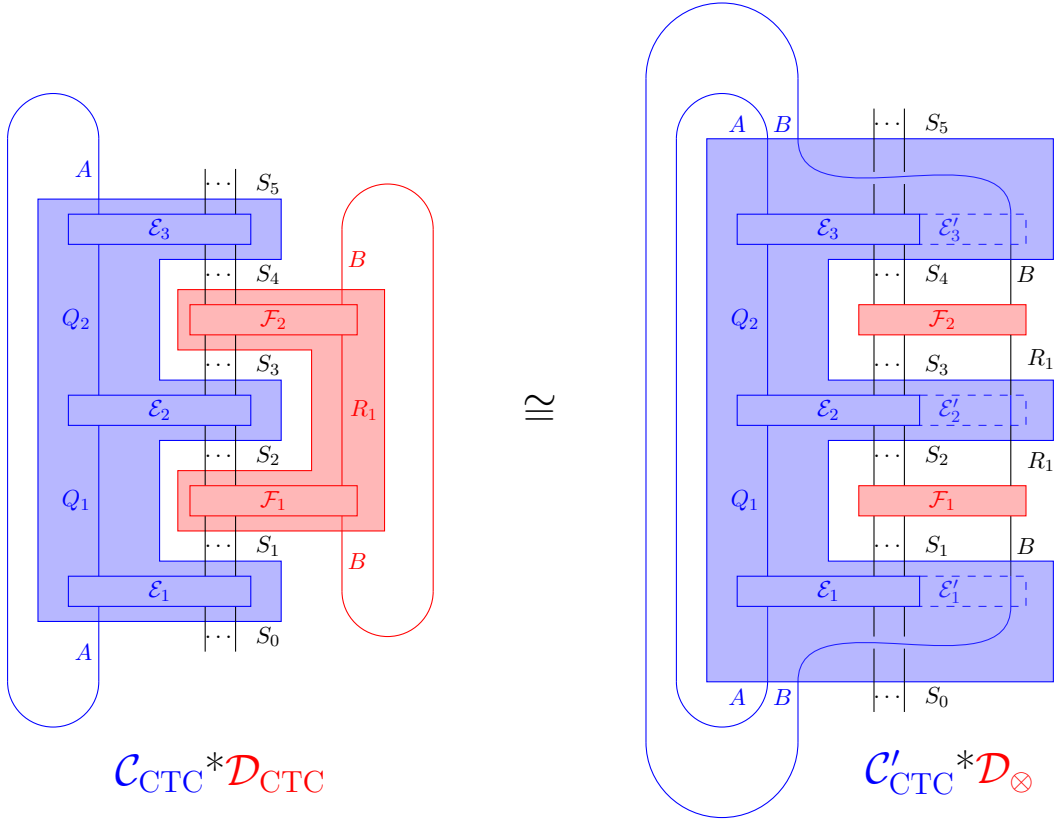

{\bf Time labelling}

\begin{definition}[Time-labelled P-CTC assisted comb]
\label{def: timelab_PCTC_comb}
      An $N$-slot time-labelled P-CTC assisted comb $\mathcal{C}^{\mathbf{t}}_{\text{CTC}}$ is an $N$-slot P-CTC assisted comb $\mathcal{C}_{\text{CTC}}$ together with a time label $t_j$ assigned to each set of in/output systems $S'_0,S_1,S_2,...S_N,S'_{2N+1}$ (according to the index $j$ of the set), such that $t_i<t_j$ whenever $i<j$. 
\end{definition}

\begin{remark}
    Some of the sets of in and output systems $S'_0,S_1,S_2,...S_N,S'_{2N+1}$ can be trivial, such that time labelled P-CTC assisted comb can also model situations where we have two subsequent non-trivial inputs at some times $t$ and $t'>t$ and then a non-trivial output at time $t''>t'$, by setting $t=t_{2j}$, $t'=t_{2j+2}$ for the two inputs, introducing a trivial output at $t_{2j+1}$ and the non-trivial output at $t''=t_{2j+3}$.
\end{remark}

\section{Connection between the two frameworks: details and proofs}\label{sec:Appendix-Connection}

\subsection{Constructing a pure 2TO from a P-CTC assisted unitary}\label{subsec:DetailsExplConstr}

Here we show that assisting the unitary $U_i$ of \cref{eq: Ui} on $S$ and $Q$ with a P-CTC on the qubit $Q$ yields an operator which is operationally equivalent to $C_i$ of \cref{eq:Cond1} on the system $S$, as depicted in the circuit of \cref{fig:Ci}. We have the following defining equations, which we repeat here for convenience.
\begin{align}
    \begin{split}
         U_i&=(W_i\otimes \id_Q)(C_{V_i})\\
           C_{V_i}&= \pure{i}_S\otimes (V_i)_Q + \sum_{j\neq i}\pure{j}_S\otimes\id_Q\\
             V_i &= \begin{pmatrix}
    a_i & -\sqrt{1-a_i^2}\\
   \sqrt{1-a_i^2} & a_i
    \end{pmatrix}\\
    W_i\ket{i}_S&=\ket{\psi_i}_S\\
    C_i \ket{i}_S&=a_i\ket{\psi_i}_S
    \end{split}
\end{align}

The controlled operation $C_{V_i}$ on $S$ ($d$-dimensional) and $Q$ (2-dimensional) in matrix form is composed of $d$ $2\times 2$ blocks on the diagonal. The i-th block is a $V_i$, while the other blocks are $2\times 2$ identity matrices. For instance, if $i=0$, then the matrix has the form
\begin{equation}
\label{eq: CV0}
    C_{V_0} = 
    \begin{pmatrix}
        V_0 & 0 \\
        0 & \id_{2\left(d-1\right)}
    \end{pmatrix}
\end{equation}
Assisting $U_i$ through a P-CTC on $Q$ yields the following operator (c.f. \cref{eq: partialtrace_CTC})
\begin{equation}
\label{eq: trUi}
    \tr_Q(U_i)=\tr_Q \Bigg((W_i\otimes \id_Q)(C_{V_i})\Bigg)= (W_i)_S\Big(\tr_Q(C_{V_i})\Big)
\end{equation}

Consider $\tr_Q(C_{V_i})$ for $i=0$. Using \cref{eq: CV0}, this reduces to a $d\times d$ matrix (recalling that $V_0$ acts on $Q$):
\begin{equation}
    \tr_{Q}\left(C_{V_0}\right) = \begin{pmatrix}
        \tr V_0 & 0 & \cdots & 0  \\
        0 & 2 & \cdots & 0  \\
        \vdots & \vdots & \ddots & \vdots \\
        0 & 0 & \cdots & 2
    \end{pmatrix} = 
    2\begin{pmatrix}
        a_0 & 0 \\
        0 & \id_{d-1}
    \end{pmatrix}
\end{equation}

Then it is clear that $ \tr_{Q}\left(C_{V_0}\right)$ when applied to an input state $\ket{0}_S$ yields $2a_0\ket{0}_S$. By a similar argument, it follows that for every $i\in\{0,...,d-1\}$, $ \tr_{Q}\left(C_{V_i}\right)\ket{i}_S=2a_i\ket{i}_S$.

Together with \cref{eq: trUi} , this implies that the P-CTC assisted unitary $\tr_Q(U_i)$ of interest acts as follows for every $i$
\begin{equation}
    \tr_Q(U_i)\ket{i}_S=2a_i\ket{\psi_i}_S,
\end{equation}
which as it only differs by an overall constant 2, is operationally equivalent to $C_i$, as desired.

\subsection{Existence of desired basis: proof of \cref{prop: existencebasis}}\label{subsec:ExistenceBasis}

\ExistenceBasis*
\begin{proof}
Without loss of generality, we can take $r=1$, $a_i\in\mathbb{R}$ and $\sum_i \left|a_i\right|^2 = 1$. This follows from the arguments in \cref{subsubsec:2TO-PCTC} along with noting that the overall constant $r$ is irrelevant for operational equivalence, which means that transforming each coefficient as $a_i \mapsto \frac{a_i}{\sum_i \left|a_i\right|^2}$ to ensure that the sum of the squares of the new coefficients equals 1, will also preserve operational equivalence.

Now, since $\sum_i \left|a_i\right|^2 = 1$, the average $\left|a_i\right|^2$ is equal to $\frac{1}{d}$. If all $a_i$ are equal to $\frac{1}{\sqrt{d}}$, that would complete the proof as it could be absorbed into the overall constant $r'$, so we suppose this is not the case.  Then, in particular, there exist some $a_k$ and $a_j$ such that $\left|a_k\right|^2<\frac{1}{d}$ and $\left|a_j\right|^2>\frac{1}{d}$ respectively. Let $a_{\text{max}}$ be such that  $\left|a_{\text{max}}\right|^2=\text{max}\left(\{\left|a_i\right|^2\}_{i=0}^{d-1}\right)$ and let $\ket{\text{max}}$ be the corresponding state of the computational basis. Let $a_{\text{min}}$ be such that $\left|a_{\text{min}}\right|^2=\text{min}\left(\{\left|a_i\right|^2\}_{i=0}^{d-1}\right)$ and let $\ket{\text{min}}$ be the corresponding state of the computational basis. Then $C$ writes
\begin{equation}
    C = a_{\text{max}}\ket{\psi_{\text{max}}}\bra{\text{max}} + a_{\text{min}}\ket{\psi_{\text{min}}}\bra{\text{min}} + \sum_{j\neq \text{max}, \text{min}} a_j\ket{\psi_j}\bra{j}
    \label{eq:Proof_C1}
\end{equation}
Now the idea is to rotate the sub-basis $\{\ket{\text{max}}, \ket{\text{min}}\}$ such that, after rotation, $\left|a_{\text{max}}\right|^2$ becomes equal to $\frac{1}{d}$. Because the rotation is an orthogonal operation, the orthonormality of the transformed basis is guaranteed. Consider the following rotation in the plane defined by $\{\ket{\text{max}}, \ket{\text{min}}\}$:
\begin{align}
\label{eq: new basis}
\begin{split}   
    \ket{\theta} &= \cos{\theta}\ket{\text{max}} + \sin{\theta}\ket{\text{min}}\\
    \ket{\theta^{\perp}} &= \sin{\theta}\ket{\text{max}} - \cos{\theta}\ket{\text{min}}\\
    \ket{j} &= \ket{j}, \quad j\neq \text{max}, \text{min}.
    \end{split}
\end{align}

The result of \cref{lemma: ExistenceBasis_proof} (stated and proven immediately after the current proof) for expressing $C$ in this new basis will be central in the rest of this proof.

The third property of \cref{lemma: ExistenceBasis_proof} together with the construction of $a_{\text{max}}$ and $a_{\text{min}}$ described earlier in this proof imply
\begin{align}
\begin{split}
    \left|a_{\theta}\left(0\right)\right|^2=\left|a_{\text{max}}\right|^2 &> \frac{1}{d} \\
    \left|a_{\theta}\left(\frac{\pi}{2}\right)\right|^2=\left|a_{\text{min}}\right|^2 &< \frac{1}{d}
    \end{split}
    \label{eq:Proof_Cond2}
\end{align}

Hence, by the intermediate value theorem, there exists $\theta^*\in[0,\frac{\pi}{2}]$ such that $\left|a_{\theta}\left(\theta^*\right)\right|^2 = \frac{1}{d}$. This fixes the first basis element $\ket{\phi_0} = \ket{\theta^*}$ and $C$ becomes
\begin{equation}
    C = \frac{1}{\sqrt{d}}\ket{\psi_{\theta^*}}\bra{\theta^*} + a_{\theta^{\perp}}\left(\theta^*\right)\ket{\psi_{\theta^{*\perp}}}\bra{\theta^{*\perp}} + \sum_{j\neq \text{max}, \text{min}} a_j\ket{\psi_j}\bra{j}
    \label{eq:Proof_C3}
\end{equation}
The above procedure can then be repeated $d-2$ times to fix all the coefficients to $\frac{1}{\sqrt{d}}$. It is guaranteed that a basis element will not be rotated twice in the procedure because its coefficient has been set to the mean value of $\left|a_i\right|^2=\frac{1}{d}$. Hence at each step, a different pair of basis elements is rotated in the plane they define and one basis element is fixed appropriately. The last step of the iteration fixes the last coefficients to $\frac{1}{\sqrt{d}}$ simultaneously.
\end{proof}

\begin{restatable}[]{lemma}{ExistenceBasisProof}
\label{lemma: ExistenceBasis_proof}
A pure 2TO $C$ of the form of \cref{eq:Proof_C1}, when expressed in the basis $\{\ket{\theta},\ket{\theta^{\perp}}, \{\ket{j}\}_{j\neq \text{min},\text{max}}\}$ of \cref{eq: new basis} takes the following form.
\begin{equation}
    C = a_{\theta}\left(\theta\right)\ket{\psi_{\theta}}\bra{\theta} + a_{\theta^{\perp}}\left(\theta\right)\ket{\psi_{\theta^{\perp}}}\bra{\theta^{\perp}} + \sum_{j\neq \text{max}, \text{min}} a_j\ket{\psi_j}\bra{j},
    \label{eq:Proof_C2}
\end{equation}
with the following properties
\begin{enumerate}
    \item The coefficients $a_{\theta}\left(\theta\right)$ and $a_{\theta^{\perp}}\left(\theta\right)$ are both continuous functions in $\theta$. 
    \item  The states $\ket{\psi_{\theta}}$, $\ket{\psi_{\theta^{\perp}}}$ and all $\ket{\psi_j}$ are normalized and can depend on $\theta$.
    \item The coefficients satisfy $\left|a_{\theta}\left(0\right)\right|^2=\left|a_{\text{max}}\right|^2$ and $  \left|a_{\theta}\left(\frac{\pi}{2}\right)\right|^2=\left|a_{\text{min}}\right|^2$. 

\end{enumerate}
\end{restatable}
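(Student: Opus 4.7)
The plan is to perform the basis change explicitly and regroup the terms in $C$ so that they take the desired form, then verify each of the three listed properties. First, I would invert the orthogonal rotation defining the new basis: since the $2\times 2$ rotation matrix relating $(\ket{\theta},\ket{\theta^\perp})$ to $(\ket{\text{max}},\ket{\text{min}})$ is involutive (real, symmetric, squares to the identity), one immediately obtains $\bra{\text{max}}=\cos\theta\,\bra{\theta}+\sin\theta\,\bra{\theta^\perp}$ and $\bra{\text{min}}=\sin\theta\,\bra{\theta}-\cos\theta\,\bra{\theta^\perp}$. Substituting these into the $\text{max}/\text{min}$ terms of \cref{eq:Proof_C1} and collecting by $\bra{\theta}$ and $\bra{\theta^\perp}$ gives two unnormalised ket-valued coefficients
\begin{align*}
v_\theta(\theta) &:= a_{\text{max}}\cos\theta\,\ket{\psi_{\text{max}}}+a_{\text{min}}\sin\theta\,\ket{\psi_{\text{min}}},\\
v_{\theta^\perp}(\theta) &:= a_{\text{max}}\sin\theta\,\ket{\psi_{\text{max}}}-a_{\text{min}}\cos\theta\,\ket{\psi_{\text{min}}}.
\end{align*}

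Next, to match the form of \cref{eq:Proof_C2} with normalised states, I would apply the same "factor out a non-negative scalar" trick used in \cref{subsec:ExistenceBasis} to define $a_\theta(\theta):=\|v_\theta(\theta)\|$ and, whenever $a_\theta(\theta)\neq 0$, $\ket{\psi_\theta}:=v_\theta(\theta)/a_\theta(\theta)$ (with an arbitrary choice of unit vector at isolated zeros, which does not affect any of the three claimed properties since those concern only $a_\theta$, $a_{\theta^\perp}$, and normalisation of $\ket{\psi_\theta}$). The analogous definitions yield $a_{\theta^\perp}(\theta)$ and $\ket{\psi_{\theta^\perp}}$. By construction the normalisation property (2) holds, and the remaining summands $\sum_{j\neq\text{max},\text{min}} a_j\ket{\psi_j}\bra{j}$ are unchanged since the corresponding basis kets are untouched.

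Finally, I would verify continuity and boundary values directly from the explicit expression
\begin{equation*}
a_\theta(\theta)^2 = a_{\text{max}}^2\cos^2\theta + a_{\text{min}}^2\sin^2\theta + 2\,a_{\text{max}} a_{\text{min}}\cos\theta\sin\theta\,\mathrm{Re}\braket{\psi_{\text{max}}|\psi_{\text{min}}},
\end{equation*}
(recalling from \cref{subsec:ExistenceBasis} that $a_{\text{max}},a_{\text{min}}\in[0,1]$ are real non-negative), and its analogue for $a_{\theta^\perp}(\theta)^2$ obtained by swapping $\cos\theta\leftrightarrow\sin\theta$ and flipping the sign of the cross term. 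Both expressions are polynomial in $\cos\theta,\sin\theta$ and therefore continuous in $\theta$, giving property (1) after taking the non-negative square root. Evaluating at $\theta=0$ and $\theta=\pi/2$ kills the cross term and one of the two squared coefficients, yielding $a_\theta(0)^2=a_{\text{max}}^2$ and $a_\theta(\pi/2)^2=a_{\text{min}}^2$, which is property (3). There is no substantive obstacle here: the computation is a straightforward basis change followed by a polar-style renormalisation, exactly parallel to the decomposition of an arbitrary operator used at the start of \cref{subsubsec:2TO-PCTC}, with the mildest care needed only to note that the ambiguity of $\ket{\psi_\theta}$ at potential zeros of $a_\theta(\theta)$ is irrelevant for the statement being proved.
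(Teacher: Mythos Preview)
Your proposal is correct and follows essentially the same approach as the paper: invert the rotation on the bras, collect the coefficient of $\bra{\theta}$ as an unnormalised vector, define $a_\theta(\theta)$ as its norm and $\ket{\psi_\theta}$ as the normalised direction, then read off continuity and the boundary values. The only difference is cosmetic: the paper additionally expands $\ket{\psi_{\text{max}}}$ and $\ket{\psi_{\text{min}}}$ component-wise in the new basis before substituting, whereas you keep them as abstract vectors and compute $\|v_\theta(\theta)\|^2$ directly via the inner product, which is slightly more economical but not a different idea.
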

\begin{proof}
Recall from \cref{eq:Proof_C1} that $C$ can be written as
\begin{equation}
    C = a_{\text{max}}\ket{\psi_{\text{max}}}\bra{\text{max}} + a_{\text{min}}\ket{\psi_{\text{min}}}\bra{\text{min}} + \sum_{j\neq \text{max}, \text{min}} a_j\ket{\psi_j}\bra{j}
    \label{eq:ProofAn_C1}
\end{equation}
Then consider the rotation in the plane defined by $\{\ket{\text{max}}, \ket{\text{min}}\}$ as given in \cref{eq: new basis}, which is
\begin{align}
\begin{split}
    \ket{\theta} &= \cos{\theta}\ket{\text{max}} + \sin{\theta}\ket{\text{min}}\\
    \ket{\theta^{\perp}} &= \sin{\theta}\ket{\text{max}} - \cos{\theta}\ket{\text{min}}\\
    \ket{j} &= \ket{j}, \quad j\neq \text{max}, \text{min}.
    \end{split}
\end{align}
which can be inverted, giving:
\begin{align}
\label{eq:newbasis_2}
\begin{split}
    \ket{\text{max}} &= \cos{\theta}\ket{\theta} + \sin{\theta}\ket{\theta^{\perp}}\\ 
    \ket{\text{min}} &= \sin{\theta}\ket{\theta} - \cos{\theta}\ket{\theta^{\perp}}\\ 
    \ket{j} &= \ket{j}, \quad j\neq \text{max}, \text{min} 
    \end{split}
\end{align}

The state $\ket{\psi_{\text{max}}}$ expanded in the basis $\{\ket{\theta},\ket{\theta^{\perp}}, \{\ket{j}\}_{j\neq \text{min},\text{max}}\}$ writes
\begin{align}
    \ket{\psi_{\text{max}}}&= \alpha_{\text{max}}\ket{\text{max}}+\alpha_{\text{min}}\ket{\text{min}} + \sum_{j\neq \text{min}, \text{max}}\alpha_j\ket{j}\nonumber\\
    &= \left(\alpha_{\text{max}}\cos{\theta}+\alpha_{\text{min}}\sin{\theta}\right)\ket{\theta}+\left(\alpha_{\text{max}}\sin{\theta}-\alpha_{\text{min}}\cos{\theta}\right)\ket{\theta^{\perp}} +\sum_{j\neq \text{min},\text{max}}\alpha_j\ket{j}\nonumber\\
    &= \alpha_{\theta}\left(\theta\right)\ket{\theta}+\alpha_{\theta^{\perp}}\left(\theta\right)\ket{\theta^{\perp}}+\sum_{j\neq \text{min},\text{max}}\alpha_j\ket{j}
    \label{eq:psi_max}
\end{align}
where $\left|\alpha_{\text{max}}\right|^2+\left|\alpha_{\text{min}}\right|^2+\sum_{j\neq\text{min},\text{max}}\left|\alpha_j\right|^2 =1$, as $\ket{\psi_{\text{max}}}$ and all other $\ket{psi_i}$ are normalised in \cref{eq:Proof_C1}.
Here, the $\alpha_{\text{max}}$, $\alpha_{\text{min}}$ and $\alpha_j$ coefficients are arbitrary up to this normalisation, and we have defined
\begin{align}
\label{eq: alpha}
    \begin{split} \alpha_{\theta}\left(\theta\right)&:=\alpha_{\text{max}}\cos{\theta}+\alpha_{\text{min}}\sin{\theta}\\
        \alpha_{\theta^{\perp}}\left(\theta\right)&:= \alpha_{\text{max}}\sin{\theta}-\alpha_{\text{min}}\cos{\theta}
    \end{split}
\end{align}

The same can be done to $\ket{\psi_{\text{min}}}$:
\begin{align}
    \ket{\psi_{\text{min}}}&= \beta_{\text{max}}\ket{\text{max}}+\beta_{\text{min}}\ket{\text{min}} + \sum_{j\neq \text{min}, \text{max}}\beta_j\ket{j}\nonumber\\
    &= \beta_{\theta}\left(\theta\right)\ket{\theta}+\beta_{\theta^{\perp}}\left(\theta^{\perp}\right)\ket{\theta^{\perp}}+\sum_{j\neq \text{min},\text{max}}\beta_j\ket{j},
    \label{eq:psi_min}
\end{align}
where $\left|\beta_{\text{max}}\right|^2+\left|\beta_{\text{min}}\right|^2+\sum_{j\neq\text{min},\text{max}}\left|\beta_j\right|^2 =1$ and

\begin{align}
\label{eq: beta}
    \begin{split} \beta_{\theta}\left(\theta\right)&:=\beta_{\text{max}}\cos{\theta}+\beta_{\text{min}}\sin{\theta}\\
        \beta_{\theta^{\perp}}\left(\theta\right)&:= \beta_{\text{max}}\sin{\theta}-\beta_{\text{min}}\cos{\theta}.
    \end{split}
\end{align}

Replacing the expressions \cref{eq:newbasis_2}, \cref{eq:psi_max} and \cref{eq:psi_min} into the expression \cref{eq:ProofAn_C1} of $C$ leads to 
\begin{align}
\label{eq: C_newbasis}
    C &= \Bigg[\left(a_{\text{max}}\cos{\theta}\alpha_{\theta}\left(\theta\right)+a_{\text{min}}\sin{\theta}\beta_{\theta}\left(\theta\right)\right) \ket{\theta} + \left(a_{\text{max}}\cos{\theta}\alpha_{\theta^{\perp}}\left(\theta\right)+a_{\text{min}}\sin{\theta}\beta_{\theta^{\perp}}\left(\theta\right)\right) \ket{\theta^{\perp}} \nonumber\\
    &\phantom{=} +\sum_j\left(a_{\text{max}}\cos{\theta}\alpha_j+a_{\text{min}}\sin{\theta}\beta_j\right)\ket{j} \Bigg]\bra{\theta}+ \Bigg[...\Bigg]\bra{\theta^{\perp}} +\sum_{j\neq\text{max}, \text{min}} a_j\ket{\psi_{j}}\bra{j}\\
    &:= \ket{\psi'_{\theta}}\bra{\theta}+\ket{\psi'_{\theta^{\perp}}}\bra{\theta^{\perp}}+\sum_{j\neq \text{min},\text{max}} a_j\ket{\psi_{j}}\bra{j}
\end{align}
where $\ket{\psi'_{\theta}}$ and $\ket{\psi'_{\theta^{\perp}}}$ (as defined via the last equation) are not normalized, whereas the states $\ket{\psi_j}$ are. The states $\ket{\psi'_{\theta}}$ and $\ket{\psi'_{\theta^{\perp}}}$ can be normalized:
\begin{equation}
    \ket{\psi_{\theta}}=\frac{\ket{\psi'_{\theta}}}{\sqrt{\braket{\psi'_{\theta}|\psi'_{\theta}}}},\qquad  \ket{\psi_{\theta^{\perp}}}=\frac{\ket{\psi'_{\theta^{\perp}}}}{\sqrt{\braket{\psi'_{\theta^{\perp}}|\psi'_{\theta^{\perp}}}}}
    \label{eq:NormPsiTheta}
\end{equation}
Hence 
\begin{align}
    C &= \sqrt{\braket{\psi'_{\theta}|\psi'_{\theta}}}\ket{\psi_{\theta}}\bra{\theta} + \sqrt{\braket{\psi'_{\theta^{\perp}}|\psi'_{\theta^{\perp}}}}\ket{\psi_{\theta^{\perp}}}\bra{\theta^{\perp}} + \sum_{j\neq \text{min},\text{max}} a_j\ket{\psi_{j}}\bra{j}\nonumber\\
    &= a_{\theta}\left(\theta\right)\ket{\psi_{\theta}}\bra{\theta} + a_{\theta^{\perp}}\left(\theta\right)\ket{\psi_{\theta^{\perp}}}\bra{\theta^{\perp}} + \sum_{j\neq \text{min},\text{max}} a_j\ket{\psi_{j}}\bra{j}
\end{align}
The functions $a_{\theta}\left(\theta\right)=\sqrt{\braket{\psi'_{\theta}|\psi'_{\theta}}}$ and $a_{\theta^{\perp}}\left(\theta\right)=\sqrt{\braket{\psi'_{\theta^{\perp}}|\psi'_{\theta^{\perp}}}}$ are continuous functions with respect to $\theta$, and the relevant states are normalised. We have thus established the first two properties of the lemma statement. For the last property, we show it for the case of $\theta=0$. the $\theta=\frac{\pi}{2}$ is entirely analogous. 

From \cref{eq: C_newbasis}, and using $\theta=0$, we have
\begin{equation}
    \ket{\psi'_{\theta=0}}=a_{\text{max}}\alpha_{\theta}\left(\theta=0\right) \ket{\theta} + a_{\text{max}}\alpha_{\theta^{\perp}}\left(\theta=0\right) \ket{\theta^{\perp}} +\sum_ja_{\text{max}}\alpha_j\ket{j}
\end{equation}
From \cref{eq: alpha}, we have $\alpha_{\theta}\left(\theta=0\right)=\alpha_{\text{max}}$ and $\alpha_{\theta^{\perp}}\left(\theta=0\right)=-\alpha_{\text{min}}$. This gives 
\begin{align}
    \begin{split}
        |a_{\theta}(0)|^2=\braket{\psi'_{\theta}|\psi'_{\theta}}=|a_{\text{max}}|^2\Bigg(|\alpha_{\text{max}}|^2+|\alpha_{\text{min}}|^2+\sum_{j\neq \text{max},\text{min}}|\alpha_j|^2\Bigg)=|a_{\text{max}}|^2.
    \end{split}
\end{align}
The last line follows from the normalisation of $\ket{\psi_{\text{max}}}$ established before. This completes the proof. 
\end{proof}

\begin{figure*}[ht!]
    \centering
    \begin{subfigure}{.33\linewidth}
        \centering
        \scalebox{1}{
            \begin{tikzpicture}[
    op/.style={shape= Op, minimum width = \WidthPrePost},
    pre/.style={shape = Pre, minimum width = \WidthPrePost},
    post/.style={shape = Post, minimum width = \WidthPrePost},
    Kop/.style={draw=red, minimum width=\WidthMeasurement, minimum height=\WidthMeasurement},
    t/.style={font=\scriptsize},
    ghost/.style={minimum width=\WidthMeasurement, minimum height=\WidthMeasurement},
    decoration={snake, segment length=4mm, amplitude=0.5mm}
]
\pgfsetmatrixrowsep{0.5cm}
\pgfsetmatrixcolumnsep{0.3cm}
\pgfmatrix{rectangle}{center}{mymatrix}
{\pgfusepath{}}{\pgfpointorigin}{\let\&=\pgfmatrixnextcell}
{
\node[post](1){}; \\
\node[ghost, text=red](11){}; 
\node[post](2){}; \\
\node[ghost, text=red](21){};
\node[post](3){}; \\
\node[ghost, text=red](31){}; \\
\node[pre](4){}; \\
\\
\\
\\
\\
}

\draw[->, thick] (1.south) -- (11.north) node [midway,inner sep=0 cm] (1to11) {};
\draw[->, thick] (2.south) -- (21.north) node [midway,inner sep=0 cm] (2to21) {};
\draw[->, thick] (3.south) -- (31.north) node [midway, inner sep=0 cm] (3to31) {};
\draw[->, thick] (4.north) -- (31.south) node [midway, inner sep=0cm] (4to31) {};

\node[t, right=.5 mm of 4to31](S1){$S_1$};
\node[t, right=.5 mm of 3to31](S2){$S_2$};
\node[t, right=.5 mm of 2to21](S3){$S_3$};
\node[t, right=.5 mm of 1to11](S4){$S_4$};


\end{tikzpicture}
        }
        \caption{}
        \label{fig:MTS_1F3B}
    \end{subfigure}%
    \begin{subfigure}{.33\linewidth}
        \centering
        \scalebox{1}{
            \begin{tikzpicture}[trace/.pic={\draw [thick](-0.4,0)--(0.4,0);\draw [thick](-0.3,0.1)--(0.3,0.1);\draw [thick](-0.2,0.2)--(0.2,0.2);\draw [thick](-0.1,0.3)--(0.1,0.3);}]

\draw (0, 0) -- (3, 0) -- (3, 2) -- (0, 2) -- cycle;
\node at (1.5, 1) {$C$};

\draw[->] (0.5, 0) -- (0.5, -0.5);
\draw[->] (1.5, 0) -- (1.5, -0.5);
\draw[->] (2.5, 0) -- (2.5, -0.5);
\draw[->] (0.5, 2) -- (0.5, 2.5);

\node[font=\scriptsize] at (0.8, -0.5) {$S_2$};
\node[font=\scriptsize] at (1.8, -0.5) {$S_4$};
\node[font=\scriptsize] at (2.8, -0.5) {$S_3$};
\node[font=\scriptsize] at (0.8, 2.5) {$S_1$};

\draw[-] (2.5, -0.6) -- (2.5, -1);
\draw[-] (1.5, -0.6) to[in=90, out=270] (0.5, -1);
\draw[->] (0.5, -1) -- (0.5, -1.5);
\draw[-] (2.5, -1) to[in=90, out=270] (0.5, -2);
\draw[->] (0.5,-2) -- (0.5, -2.5);

\draw[-] (0.5, -0.6) to[in=90, out=270] (0, -1);
\draw[-] (0,-1) -- (0,-2.5);
\draw[-] (0,-2.5) to[in=90, out=270] (0.5, -3);
\draw[->] (0.5, -3) -- (0.5, -3.5);

\draw (2.5,2.5) pic {trace};
\draw (1.5,2.5) pic {trace};
\draw[-] (1.5, 2) -- (1.5, 2.5);
\draw[-] (2.5, 2) -- (2.5, 2.5);

\end{tikzpicture}
        }
        \caption{}
        \label{fig:2TO_int}
    \end{subfigure}%
    \begin{subfigure}{.33\linewidth}
        \centering
        \scalebox{1}{
            \begin{tikzpicture}[trace/.pic={\draw [thick](-0.4,0)--(0.4,0);\draw [thick](-0.3,0.1)--(0.3,0.1);\draw [thick](-0.2,0.2)--(0.2,0.2);\draw [thick](-0.1,0.3)--(0.1,0.3);}]

\draw (0, 0) -- (3, 0) -- (3, 2) -- (0, 2) -- cycle;
\node at (1.5, 1) {$C$};

\draw[->] (0.5, 0) -- (0.5, -0.5);
\draw[->] (1.5, 0) -- (1.5, -0.5);
\draw[->] (2.5, 0) -- (2.5, -0.5);
\draw[->] (0.5, 2) -- (0.5, 2.5);

\node[font=\scriptsize] at (0.8, -0.5) {$S_2$};
\node[font=\scriptsize] at (1.8, -0.5) {$S_4$};
\node[font=\scriptsize] at (2.8, -0.5) {$S_3$};
\node[font=\scriptsize] at (0.8, 2.5) {$S_1$};

\draw[-] (2.5, -0.6) -- (2.5, -1);
\draw[-] (1.5, -0.6) to[in=90, out=270] (0.5, -1);
\draw[->] (0.5, -1) -- (0.5, -1.5);
\draw[-] (2.5, -1) to[in=90, out=270] (0.5, -2);
\draw[->] (0.5,-2) -- (0.5, -2.5);

\draw[-] (-0.5, -4.5) -- (-0.5, 3);
\draw[-] (0.5, 2.6) -- (0.5, 3);
\draw[-] (0.5, -0.6) to[in=90, out=270] (0, -1);
\draw[-] (0,-1) -- (0,-2.5);
\draw[-] (0,-2.5) to[in=90, out=270] (0.5, -3);
\draw[->] (0.5, -3) -- (0.5, -3.5);
\draw[->] (0.5,-4.5) -- (0.5,-4);

\draw[-] (0.5, 3) arc (0:180: 0.5);
\draw[-] (-0.5, -4.5) arc (180:360:0.5);

\draw (2.5,2.5) pic {trace};
\draw (1.5,2.5) pic {trace};
\draw[-] (1.5, 2) -- (1.5, 2.5);
\draw[-] (2.5, 2) -- (2.5, 2.5);

\end{tikzpicture}
        }
        \caption{}
        \label{fig:2TO_1F3B}
    \end{subfigure}
    \caption{An example where $\left|\mathcal{B}_2\right| = 3$ and $\left|\mathcal{F}_1\right| = 1$, illustrated the steps of the proof of \cref{prop: MTS_2TO}, given in \cref{appendix: proof_2TOtoMTS}. Here we use $\mathcal{B}_2$ in the first step and $\mathcal{F}_1$ in the second step of the proof. (a) The target MTS. (b) From the 2TO $C$, we first elongate the backward-evolving spaces in $\mathcal{B}_2$ to match their relative ordering. (c) We then use a single P-CTC on $S_1\in \mathcal{F}_1$, sending it back to the past according to the time order of the target MTS. }
    \label{fig:1F3B}
\end{figure*}
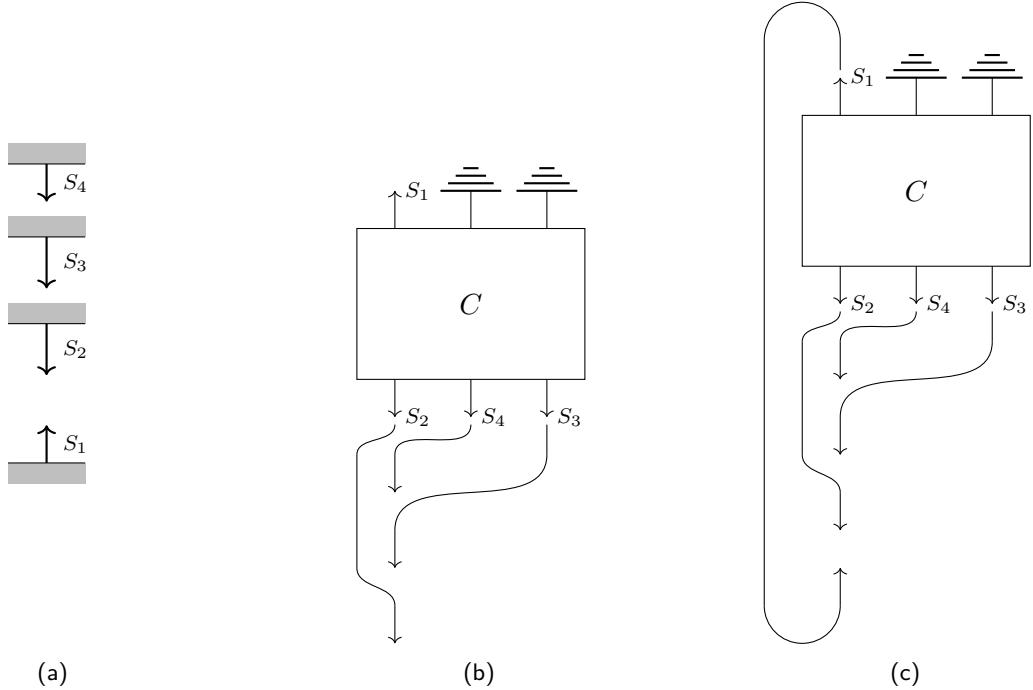

\subsection{Constructing MTS from 2TO: proof of \cref{prop: MTS_2TO}}
\label{appendix: proof_2TOtoMTS}
\MtwoTO*

\begin{proof}
Given any MTS, we can consider a 2TO that is isomorphic to it i.e., has the same coefficients and Hilbert spaces, and directionality of the spaces (backward vs forward) but where all backward evolving spaces are permuted to appear before all forward evolving spaces. In other words, expressed in bra-ket notation, all bras appear before all kets when reading the expression from right to left. Then it is clear that if $\mathcal{B}$ and $\mathcal{F}$ are the set of all backward and forward evolving spaces of the MTS respectively, this 2TO has $|\mathcal{B}|$ backward evolving spaces of dimensions $\{d_S\}_{S\in \mathcal{B}}$ and $|\mathcal{F}|$ forward-evolving spaces of dimensions $\{d_S\}_{S\in \mathcal{F}}$.
We now show that the MTS can be obtained from this isomorphic 2TO using P-CTCs are described in the proposition statement.

In a 2TO, all backward evolving spaces occur before all forward evolving spaces. This means that it is precisely the spaces in $\mathcal{B}_2$ (backward evolving spaces ordered after some forward evolving space) or $\mathcal{F}_1$ (forward evolving spaces ordered before some backward evolving space) that will need to be teleported using P-CTC, as this will involving moving a backward evolving space forward in time or a forward evolving space backward in time to reach the target MTS from the 2TO. We start by arbitrarily picking either $\mathcal{F}_1$ or $\mathcal{B}_2$, suppose we pick $\mathcal{B}_2$.

In the first step, we take our 2TO and change the relative time ordering of the systems in $\mathcal{B}_2$ using free operations (\cref{def: free_op}) to match the time ordering of the corresponding systems in the target MTS. By definition (\cref{def: 2TO2TS}), the obtained object is still a 2TO (see \cref{fig:2TO_int} for an example).

In the second step, we take the 2TO obtained after the first step above and teleport each forward evolving system of $\mathcal{F}_1$ backward in time (according to the time order of the target MTS) using a P-CTC of the same dimension as $S_i$.
By construction, this will reproduce the time order of the target MTS. And we have used $|\mathcal{F}_1|$ number of P-CTCs of dimensions $\{d_S\}_{S\in \mathcal{F}_1}$ to transform the 2TO to the target MTS.

One could apply the same two steps, but instead by picking $\mathcal{F}_1$ at the start. Then we would apply the free operations to systems in $\mathcal{F}_1$ in the first step, and the P-CTCs to systems in $\mathcal{B}_2$ in the second step. In this case, we would use $|\mathcal{B}_2|$ number of P-CTCs of dimensions $\{d_S\}_{S\in \mathcal{B}_2}$ to transform the 2TO to the target MTS. Both options achieve the desired transformation, and thus establish the proposition. 

Depending on the specific MTS, there can be reasons to favour one construction over the other. An illustrated example is shown in \cref{fig:1F3B}. There $\mathcal{F}_1$ has a lower cardinality than $\mathcal{B}_2$. If the total dimensions of the former are also less that the latter, then it is clearly more optimal (i.e., minimises the usage of P-CTCs) to use free operations on the larger set $\mathcal{B}_2$ and P-CTCs on the smaller set $\mathcal{F}_1$.

\end{proof}

\section{Isomorphism, free operations and partial order on MTS: proof of \cref{thm:partialorder}}

\label{appendix: partialorder}

Here we provide the formal definitions of the concept of \emph{isomorphisms} and \emph{free operations} used in the main text, particular in \cref{thm:partialorder}, and provide a proof of this theorem.  We begin by formally defining the concept of isomorphism between multi-time objects motivated in the main text.

\begin{definition}[Isomorphisms between MTSs]
\label{def: time_ord_iso}
   Let $M_1$ and $M_2$ be two MTS which are associated to the same set $\{S_1,...,S_n\}$ of systems corresponding to tensor factors of their MT Hilbert space, where each system $S_i$ has the same direction of evolution in $M_1$ and $M_2$. Let each $S_i$ be associated to a time label $t_i$ in $M_1$ and $t'_i$ in $M_2$. Then we say that $M_1$ and $M_2$ are \emph{isomorphic} if $M_1$ and $M_2$ only differ by an isomorphism $$\mathcal{R}: \{t_1,...,t_n\} \mapsto \{t'_1,...,t'_n\}$$ on the time labels.

\end{definition}

Notice that when an MTS is expressed in the bra ket notation, such an isomorphism preserves the coefficients and  can be seen as permuting the order of the bras and kets, recovering the intuitions discussed in \cref{subsubsec:2TO-MTS}.
Since bras and kets are associated to different times in the MT framework, such isomorphisms between MTS when viewed more operationally can generally involve CTCs that move spaces forward or backwards in time. Free operations, defined below, correspond to those subset of isomorphisms which can be operationally implemented without CTCs, and preserve the direction of (time) evolution of systems in that MTS.

\begin{definition}[Free operations on an MTS]
\label{def: free_op}
Free operations on an MTS $M$, correspond to transformations that take one of the following two forms.

\begin{itemize}
    \item {\bf Stretching forward evolving systems into the future} Transformation of an MTS $M$ with a forward evolving system $S$ at time $t$ to any MTS $M'$ which is identical to $M$ except for the time label on $S$ being transformed to a later time $t'>t$. 
        \item {\bf Stretching backward evolving systems into the past} Transformation of an MTS $M$ with a backward evolving system $S$ at time $t$ to any MTS $M'$ which is identical to $M$ except for the time label on $S$ being transformed to an earlier time $t'<t$.  
\end{itemize}
\end{definition}
Notice that both the above free transformations can be achieved by composing the original MTS $M$ with a 2TO which models an identity channel between the two times $t$ and $t'$, with backward/forward moving systems $S$ contracted to the output/input of this channel.

\PartialOrder*

\begin{proof}
    By \cref{def: 2TO2TS}, any 2TS $M_{2TS}$ has all of its backward evolving systems occurring after all of its forward evolving systems. For any other isomorphic MTS $M$ which has a certain number of backward evolving systems ordered before some of the forward evolving systems, free operations can be used on $M_{2TS}$ to evolve those backward evolving systems to the past, to ensure the correct relative time order of any isomorphic MTS $M$.
    When all the backward evolving systems have been moved to the past, to occur before any forward evolving system, an isomorphic 2TO $M_{2TO}$ is obtained. 
    
    Going from $M_{2TO}$ to any $M$ is not possible using free operations alone as it requires chronology breaking operations which would move some of the backward evolving systems to the future, to be ordered later than at least one forward evolving system (otherwise it still remains a 2TO by \cref{def: 2TO2TS}). Following the same logic, going from any $M$ to an isomorphic $M_{2TS}$ solely through free operations is not possible. This yields the strict partial order of \cref{eq: MTS_partialorder}, using \cref{def:partialorder}.

    Moreover this is a partial order and not a total order, as there exist isomorphic MTS that cannot be mapped to one another (in either direction) solely using free operations as shown in the following explicit example.

Let $S_1$, $S_3$ be forward evolving systems, $S_2$ be a backward evolving system and consider the following two MTS $M_1$ and $M_2$ defined on these systems as depicted in \cref{fig:thm62}. $M_1$ has the time-order $S_1$, then $S_2$ and then $S_3$ and $M_2$ has the order $S_3$, $S_2$ and then $S_1$. Notice that $M_1$ has a 2TO structure between $S_2$ and $S_3$ but $M_2$ has a 2TS structure between these. It follows from by applying the same arguments as above, that it is not possible to transform $M_1$ to $M_2$ freely. Further, $M_2$ has a 2TO structure between $S_2$ and $S_1$ but $M_1$ has a 2TS structure between these. Similarly, this implies that $M_2$ cannot be transformed to $M_1$ freely. Together, this yields $M_1\not\preceq \not\succeq M_2$.

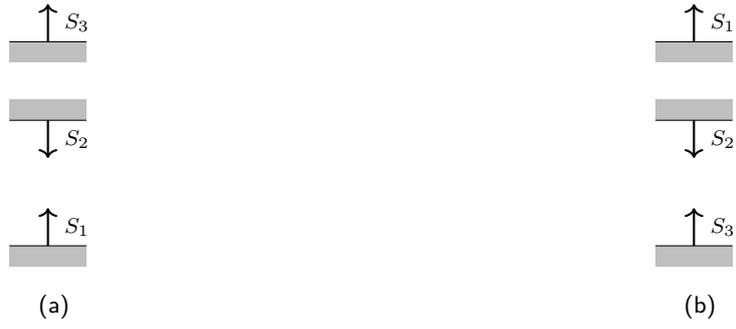
\begin{figure*}[ht!]
    \centering
    \begin{subfigure}{.5\linewidth}
        \centering
        \scalebox{1}{
            \begin{tikzpicture}[
    op/.style={shape= Op, minimum width = \WidthPrePost},
    pre/.style={shape = Pre, minimum width = \WidthPrePost},
    post/.style={shape = Post, minimum width = \WidthPrePost},
    Kop/.style={draw=red, minimum width=\WidthMeasurement, minimum height=\WidthMeasurement},
    t/.style={font=\scriptsize},
    ghost/.style={minimum width=\WidthMeasurement, minimum height=\WidthMeasurement},
    decoration={snake, segment length=4mm, amplitude=0.5mm}
]
\pgfsetmatrixrowsep{0.5cm}
\pgfsetmatrixcolumnsep{0.3cm}
\pgfmatrix{rectangle}{center}{mymatrix}
{\pgfusepath{}}{\pgfpointorigin}{\let\&=\pgfmatrixnextcell}
{
\node[ghost, text=red](11){}; \\
\node[pre](1){}; \\
\node[post](2){}; \\
\node[ghost, text=red](21){}; \\
\node[pre](3){}; \\
}

\draw[->, thick] (1.north) -- (11.south) node [midway,inner sep=0 cm] (1to11) {};
\draw[->, thick] (2.south) -- (21.north) node [midway,inner sep=0 cm] (2to21) {};
\draw[->, thick] (3.north) -- (21.south) node [midway, inner sep=0 cm] (3to31) {};

\node[t, right=.5 mm of 3to31](S1){$S_1$};
\node[t, right=.5 mm of 2to21](S2){$S_2$};
\node[t, right=.5 mm of 1to11](S3){$S_3$};


\end{tikzpicture}
        }
        \caption{}
        \label{fig:thm6_2}
    \end{subfigure}%
    \begin{subfigure}{.5\linewidth}
        \centering
        \scalebox{1}{
            \begin{tikzpicture}[
    op/.style={shape= Op, minimum width = \WidthPrePost},
    pre/.style={shape = Pre, minimum width = \WidthPrePost},
    post/.style={shape = Post, minimum width = \WidthPrePost},
    Kop/.style={draw=red, minimum width=\WidthMeasurement, minimum height=\WidthMeasurement},
    t/.style={font=\scriptsize},
    ghost/.style={minimum width=\WidthMeasurement, minimum height=\WidthMeasurement},
    decoration={snake, segment length=4mm, amplitude=0.5mm}
]
\pgfsetmatrixrowsep{0.5cm}
\pgfsetmatrixcolumnsep{0.3cm}
\pgfmatrix{rectangle}{center}{mymatrix}
{\pgfusepath{}}{\pgfpointorigin}{\let\&=\pgfmatrixnextcell}
{
\node[ghost, text=red](11){}; \\
\node[pre](1){}; \\
\node[post](2){}; \\
\node[ghost, text=red](21){}; \\
\node[pre](3){}; \\
}

\draw[->, thick] (1.north) -- (11.south) node [midway,inner sep=0 cm] (1to11) {};
\draw[->, thick] (2.south) -- (21.north) node [midway,inner sep=0 cm] (2to21) {};
\draw[->, thick] (3.north) -- (21.south) node [midway, inner sep=0 cm] (3to31) {};

\node[t, right=.5 mm of 3to31](S1){$S_3$};
\node[t, right=.5 mm of 2to21](S2){$S_2$};
\node[t, right=.5 mm of 1to11](S3){$S_1$};


\end{tikzpicture}
        }
        \caption{}
        \label{fig:thm6_2_1}
    \end{subfigure}
    \caption{The MTS (a) $M_1$ and (b) $M_2$ defined in the last paragraph, which are isomorphic to each other. It is not possible to transform $M_1$ to $M_2$ via free operations or vice-versa and hence $M_1\not\preceq \not\succeq M_2$.}
    \label{fig:thm62}
\end{figure*}

\end{proof}

\end{document}